\newtheorem{thm}{Theorem}[section]
\newtheorem{lem}{Lemma}
\newtheorem{prop}{Proposition}[section]
\newtheorem{dfn}{Definition}
\newtheorem{rk}{Remark}
\newcommand{\vect}[1]{\boldsymbol{#1}}
\newcommand{\tp}[1]{{#1}^{\mathsf T}}
\renewcommand{\bar}{\overline}
\newcommand{\eps}{\epsilon}
\newcommand{\pa}{\partial}
\renewcommand{\eps}{\varepsilon}
\renewcommand{\epsilon}{\varepsilon}
\renewcommand{\Sigma}{\varSigma}
\newcommand{\Rho}{\mathrm{P}}
\newcommand{\E}{\mathrm E}
\newcommand{\Cor}{\mathrm{Cor}}
\newcommand{\tr}{\mathrm{tr}}
\newcommand{\VEC}{\mathrm{vec}}
\newcommand{\vech}{\mathrm{vech}}
\DeclareMathAlphabet\mathbfcal{OMS}{cmsy}{b}{n}
\DeclareMathOperator{\diag}{diag}
\newcommand{\half}{\frac12}
\newcommand{\unif}{\mathrm{Unif}}
\newcommand{\Dir}{\mathrm{Dir}}
\newcommand{\sqDir}{\mathrm{Dir}^2}
\newcommand{\vmf}{\mathrm{vMF}}
\newcommand{\bing}{\mathrm{Bing}}
\newcommand{\bq}{{\bf q}}
\newcommand{\bv}{{\bf v}}
\newcommand{\bg}{{\bf g}}
\newcommand{\bG}{{\bf G}}
\newcommand{\bGamma}{\vect\Gamma}
\newcommand{\keywords}[1]{\textbf{Keywords:} #1}
\title{Flexible Bayesian Dynamic Modeling of Correlation and Covariance Matrices}
\author[1]{Shiwei Lan\thanks{shiwei@illinois.edu}}
\author[2]{Andrew Holbrook}
\author[3]{Gabriel A. Elias}
\author[4]{Norbert J. Fortin}
\author[5]{Hernando Ombao}
\author[6]{Babak Shahbaba}
\affil[1]{Department of Statistics, University of Illinois Urbana-Champaign, Champaign, IL 61820}
\affil[2,6]{Department of Statistics, University of California-Irvine, Irvine, CA 92697}
\affil[3,4]{Center for the Neurobiology of Learning and Memory, Department of Neurobiology and Behavior, University of California-Irvine, Irvine, CA 92697}
\affil[5]{Statistics Program, King Abdullah University of Science and Technology, Thuwal 23955, Saudi Arabia}
\date{}							% Activate to display a given date or no date
\begin{document}
\maketitle

\begin{abstract}
Modeling correlation (and covariance) matrices can be challenging due to the positive-definiteness constraint and potential high-dimensionality.
Our approach is to decompose the covariance matrix into the correlation and variance matrices and 
propose a novel Bayesian framework based on modeling the correlations as products of unit vectors. 
By specifying a wide range of distributions on a sphere (e.g. the squared-Dirichlet distribution),
the proposed approach induces flexible prior distributions for covariance matrices (that go beyond the commonly used inverse-Wishart prior).
For modeling real-life spatio-temporal processes with complex dependence structures, we extend our method to dynamic cases and introduce unit-vector Gaussian process priors in order to capture the evolution of correlation among components of a multivariate time series. 
To handle the intractability of the resulting posterior, we introduce the adaptive $\Delta$-Spherical Hamiltonian Monte Carlo. 
We demonstrate the validity and flexibility of our proposed framework in a simulation study of periodic processes and an analysis of rat's local field potential activity in a complex sequence memory task.
\end{abstract}

\keywords{Dynamic covariance modeling; Spatio-temporal models; Geometric methods; Posterior contraction; $\Delta$-Spherical Hamiltonian Monte Carlo}

%%%%%%%%%%%%%%%%%%%%%%%%%%%%%%%%%%%%%%%%%%%%
\section{Introduction}
\label{sec:intro}
Modeling covariance matrices---or more broadly, positive definite (PD) matrices---is one of the most fundamental problems in statistics. In general, the task is difficult because the number of parameters grows quadratically with the dimension of the matrices. The complexity of the challenge increases substantially if we allow dependencies to vary over time (or space) in order to account for the dynamic (non-stationary) nature of the underlying probability model. In this paper, we propose a novel solution to the problem by developing a flexible and yet computationally efficient Bayesian inferential framework for both static and dynamic covariance matrices. 

This work is motivated by modeling the dynamic brain connectivity (i.e., associations between brain activity at different regions).  
In light of recent technical advances that allow the collection of large, multidimensional neural activity datasets, brain connectivity analyses are emerging as critical tools in neuroscience research. Specifically, the development of such analytical tools will help elucidate fundamental mechanisms underlying cognitive processes such as learning and memory, and identify potential biomarkers for early detection of neurological disorders.
There are a number of new methods that have been developed \citep{cribben12,fiecas16,lindquist14,ting15,prado13} but the main limitation of these methods (especially the ones that have a frequentist approach) is a lack of natural framework for inference. Moreover, parametric approaches (e.g. vector auto-regressive models) need to be tested for adequacy for modeling complex brain processes and often have high dimensional parameter spaces (especially with a large number of channels and high lag order).
This work provides both a nonparametric Bayesian model and an efficient inferential method for modeling the complex dynamic dependence among multiple stochastic processes that is common in the study of brain connectivity.

Within the Bayesian framework, it is common to use an inverse-Wishart prior on the covariance matrix for computational convenience \citep{mardia80,anderson03}.
This choice of prior however is very restrictive (e.g. common degrees of freedom for all components of variance) \citep{barnard00,tokuda11}.
\cite{daniels1999a, daniels2001} propose uniform shrinkage priors. \cite{daniels1999b} discuss three hierarchical priors to generalize the inverse-Wishart prior. 
Alternatively, one may use decomposition strategies for more flexible modeling choices (see \cite{barnard00} for more details). For instance, \cite{banfield93}, \cite{yang94}, \cite{celeux95}, \cite{leonard92}, \cite{chiu96}, and \cite{bensmail97} propose methods based on the spectral decomposition of the covariance matrix. Another strategy is to use the Cholesky decomposition of the covariance matrix or its inverse, e.g., \cite{pourahmadi99,pourahmadi00,liu93,pinheiro96}. 
There are other approaches directly related to correlation, including the constrained model based on truncated distributions \citep{liechty2004}, the Cholesky decomposition of correlation matrix using an angular parametrization \citep{POURAHMADI2015}, and methods based on partial autocorrelation and parameterizations using angles \citep{rapisarda2007}.
In general, these methods fail to yield full flexibility and generality; and often sacrifice statistical interpretability.

While our proposed method in this paper is also based on the separation strategy \citep{barnard00} and the Cholesky decomposition, the main distinction from the existing methods is that it represents each entry of the correlation matrix as a product of unit vectors. 
This in turn provides a flexible framework for modeling covariance matrices without sacrificing interpretability. 
Additionally, this framework can be easily extended to dynamic settings in order to model real-life spatio-temporal processes with complex dependence structures that evolve over the course of the experiment.

To address the constraint for correlation processes (positive definite matrix at each time having unit diagonals and off-diagonal entries with magnitudes no greater than 1), we introduce unit-vector Gaussian process priors. % for each row of the Cholesky factor of correlation process.
There are other related works, e.g. generalized Wishart process \citep{wilson10}, and latent factor process \citep{fox15}, that explore the product of vector Gaussian processes. In general they do not grant full flexibility in simultaneously modeling the mean, variance and correlation processes.
For example, latent factor based models link the mean and covariance processes through a loading matrix, which is restrictive and undesirable if the linear link is not appropriate, and thus are outperformed by our proposed flexible framework (See more details in Section \ref{sec:flexibility}).
Other approaches to model non-stationary processes use a representation in terms of a basis such as wavelets \citep{nason00,park14,cho15} and the SLEX \citep{ombao05}, which are actually inspired by the Fourier representations in the Dahlhaus locally stationary processes \cite{dahlhaus00,priestley65}. These approaches are frequentist and do not easily provide a framework for inference (e.g., obtaining confidence intervals).
The class of time-domain parametric models allows for the ARMA parameters to evolve over time \citep[see, e.g.][]{rao70} or via parametric latent signals \citep{west99,prado01}. A restriction for this class of parametric models is that some processes might not be adequately modeled by them.

%This paper contains the following contributions:
%%\vspace{-10pt}
%\begin{itemize}
%\item a general and flexible framework is proposed for modeling mean, variance and correlation processes separately;
%%\item the squared-Dirichlet distribution is defined for random variables on the sphere;
%\item the sphere-product representation of a correlation matrix is introduced and distributions based on this representation, including the squared-Dirichlet distribution and normalized Gaussian distributions, are discussed;
%\item a new stochastic process is defined on the sphere and used to induce dynamic priors for covariance matrices;
%\item an efficient algorithm is introduced to infer correlation matrices and processes;
%\item posterior contraction phenomena are studied for both mean and covariance (correlation) processes.
%\end{itemize}

This main contributions of this paper are:
(a.) a sphere-product representation of correlation/covariance matrix is introduced to induce flexible priors for correlation/covariance matrices and processes;
(b.) a general and flexible framework is proposed for modeling mean, variance, and correlation processes separately;
(c.) an efficient algorithm is introduced to infer correlation matrices and processes;
(d.) the posterior contraction of modeling covariance (correlation) functions with Gaussian process prior is studied for the first time.

The rest of the paper is organized as follows. In the next section, we present a geometric view of covariance matrices and extend this view to allow covariance matrices to change over time. In Section \ref{sec:postinfer}, we use this geometrical perspective to develop an effective and computationally efficient inferential method for modeling static and dynamic covariance matrices. Using simulated data, we will evaluate our method in Section \ref{sec:numerics}. In Section \ref{sec:lfp}, we apply our proposed method to local field potential (LFP) activity data recorded from the hippocampus of rats performing a complex sequence memory task \citep{allen14,allen16,ng17}. In the final section, we conclude with discussions on the limitations of the current work and future extensions.

%%%%%%%%%%%%%%%%%%%%%%%%%%%%%%%%%%%%%%%%%%%%
\section{Structured Bayesian Modeling of the Covariance (Correlation) Matrices}\label{sec:strct_model}
To derive flexible models for covariance and correlation matrices, we start with the Cholesky decomposition, form a sphere-product representation, and finally obtain the separation decomposition in \cite{barnard00} with correlations represented as products of unit vectors. 
The sphere-product representation is amenable for the inferential algorithm to handle the resulting intractability, and hence lays the foundation for full flexibility in choosing priors.

Any covariance matrix $\vect\Sigma=[\sigma_{ij}]>0$ is symmetric positive definite, 
and hence has a unique Cholesky decomposition $\vect\Sigma = {\bf L} \tp{\bf L}$
%\begin{equation*}\label{eq:cholcov}
%\vect\Sigma = {\bf L} \tp{\bf L}, \qquad \sigma_{ij} = \sum_{k=1}^{\min\{i,j\}} l_{ik} l_{jk}
%\end{equation*}
where the Cholesky factor ${\bf L}=[l_{ij}]$ is a lower triangular matrix such that $\sigma_{ij} = \sum_{k=1}^{\min\{i,j\}} l_{ik} l_{jk}$.
We denote the variance vector as $\vect\sigma^2:=\tp{[\sigma^2_1,\cdots,\sigma^2_D]}$,
then each variance component, $\sigma_i^2:=\sigma_{ii}$, can be written in terms of the corresponding row ${\bf l}_i$ of ${\bf L}$ as follows:
\begin{equation}\label{eq:cov_constr}
\sigma_i^2 = \sum_{k=1}^{i} l_{ik}^2 = \Vert {\bf l}_i\Vert^2, \quad {\bf l}_i := [l_{i1},l_{i2},\cdots, l_{ii}] %, \quad {\bf L} = \begin{bmatrix} {\bf l}_1\\ \vdots\\ {\bf l}_D\end{bmatrix}
\end{equation}
For $\vect\Sigma$ to be positive definite,
it is equivalent to require all the leading principal minors $\{M_i\}$ to be positive,
\begin{equation}\label{eq:minor_constr}
M_i = \prod_{k=1}^i l_{kk}^2 >0,\; i=1,\cdots, D\; \Longleftrightarrow \; l_{ii} \neq 0,\; i=1,\cdots, D 
\end{equation}
Based on \eqref{eq:cov_constr} and \eqref{eq:minor_constr}, for $i\in\{1,\cdots, D\}$, ${\bf l}_i$ can be viewed as a point on a sphere with radius $\sigma_i$ excluding the equator, denoted as $\mathcal S_0^{i-1}(\sigma_i):=\{{\bf l}\in \mathbb R^i | \Vert {\bf l}\Vert_2=\sigma_i, l_{ii}\neq 0\}$. Therefore the space of the Cholesky factor in terms of its rows can be written as a product of spheres and we require 
\begin{equation}\label{eq:cholcov_manifold}
({\bf l}_1,{\bf l}_2,\cdots,{\bf l}_D) \in \mathcal S_0^0(\sigma_1)\times\mathcal S_0^1(\sigma_2)\cdots\times\mathcal S_0^{D-1}(\sigma_D)
\end{equation}
Note that \eqref{eq:cholcov_manifold} is the sufficient and necessary condition
for the matrix $\vect\Sigma={\bf L} \tp{\bf L}$ to be a covariance matrix. 

We present probabilistic models involving covariance matrices in the following generic form:
\begin{equation}\label{eq:strt_cov}
\begin{aligned}
{\bf y}|\vect\Sigma(\vect\sigma, {\bf L}) &\sim \ell ({\bf y} ; \vect\Sigma(\vect\sigma, {\bf L}) ), \quad \vect\Sigma(\vect\sigma, {\bf L}) = {\bf L} \tp{\bf L} \\
\vect\sigma &\sim p(\vect\sigma) \\
{\bf L} | \vect\sigma &\sim p({\bf L}; \vect\sigma),\quad \tp\vech({\bf L})\in \prod_{i=1}^D \mathcal S_0^{i-1}(\sigma_i)
\end{aligned}
\end{equation}
where $\vect\sigma:=\tp{[\sigma_1,\cdots,\sigma_D]}$, and the half-vectorization in row order, $\tp\vech$, transforms the lower triangular matrix ${\bf L}$ into a vector $({\bf l}_1,{\bf l}_2,\cdots,{\bf l}_D)$. %=(l_{11}, l_{21}, l_{22}, \cdots, l_{D1},\cdots,l_{DD})$. 
%For posterior inference, one can sample $\vect\sigma$ using any standard MCMC algorithm without constraints, and sample ${\bf l}_i$ separately using spherical HMC \citep{byrne13,lan14b} to be discussed in Section \ref{sec:SphHMC}.
The total dimension of $(\vect\sigma, {\bf L})$ is $\frac{D(D+1)}{2}$.\footnote{For each $i\in \{1,\cdots, D\}$, given $\sigma_i$, there are only $(i-1)$ free parameters on $\mathcal S_0^{i-1}(\sigma_i)$, so there are totally $\frac{D(D-1)}{2}+D$ free parameters.}

Alternatively, if we separate variances from covariance, then we have a unique Cholesky decomposition for the correlation matrix $\vect\Rho=[\rho_{ij}] = {\bf L}^* \tp{({\bf L}^*)}$,
%\begin{equation*}\label{eq:cholcorr}
%\vect\Rho := \diag(\vect\sigma^2)^{-\half} \vect\Sigma \diag(\vect\sigma^2)^{-\half} = {\bf L}^* \tp{({\bf L}^*)}, \qquad \rho_{ij} = \sum_{k=1}^{\min\{i,j\}} l^*_{ik} l^*_{jk}
%\end{equation*}
where the Cholesky factor ${\bf L}^*=\diag(\vect\sigma^{-1}){\bf L}$ can be obtained by normalizing each row of ${\bf L}$.
The magnitude requirements for correlations %$\rho_{ij} = \sum_{k=1}^{\min\{i,j\}} l^*_{ik} l^*_{jk}$
are immediately satisfied by the Cauchy-Schwarz inequality: %(as a natural consequence of positive definiteness):
%\begin{equation}
$|\rho_{ij}| = \frac{|\sigma_{ij}|}{\sigma_i \sigma_j}= \frac{|\langle {\bf l}_i, {\bf l}_j \rangle|}{\Vert {\bf l}_i\Vert_2 \Vert {\bf l}_j\Vert_2} \leq 1$.
%\end{equation}
Thus we require
\begin{equation}\label{eq:cholcorr_manifold}
({\bf l}^*_1,{\bf l}^*_2,\cdots,{\bf l}^*_D) \in \mathcal S_0^0\times\mathcal S_0^1\cdots\times\mathcal S_0^{D-1}
\end{equation}
where $\mathcal S_0^{i-1}:=\mathcal S_0^{i-1}(1)$. %is the unit $(i-1)$-sphere excluding the equator.
Similarly, \eqref{eq:cholcorr_manifold} is the sufficient and necessary condition for $\vect\Rho={\bf L}^* \tp{({\bf L}^*)}$ to be a correlation matrix.
Then we have the following alternatively structured model for covariance $\vect\Sigma$ that involves correlation $\vect\Rho$ explicitly
\begin{equation}\label{eq:strt_corr}
\begin{aligned}
{\bf y}|\vect\Sigma(\vect\sigma, {\bf L}^*) &\sim \ell ({\bf y} ; \vect\Sigma(\vect\sigma, {\bf L}^*) ), \quad  \vect\Sigma(\vect\sigma, {\bf L}^*) = \diag(\vect\sigma) \vect\Rho \diag(\vect\sigma), \quad \vect\Rho = {\bf L}^* \tp{({\bf L}^*)} \\
\vect\sigma &\sim p(\vect\sigma) \\
{\bf L}^* &\sim p({\bf L}^*),\quad \tp\vech({\bf L}^*)\in \prod_{i=1}^D \mathcal S_0^{i-1}
\end{aligned}
\end{equation}
Note, this direct decomposition $\vect\Sigma = \diag(\vect\sigma) \vect\Rho \diag(\vect\sigma)$ as a \emph{separation strategy} is motivated by statistical thinking in terms of standard deviations and correlations \citep{barnard00}.
This setting is especially relevant if the statistical quantity of interest is correlation matrix $\vect\Rho$ itself, and we can then skip inference of the standard deviation $\vect\sigma$ by fixing it to a data-derived point estimate.

In what follows, we will show that the above framework includes the inverse-Wishart prior as a special case, but it can be easily generalized to a broader range of priors for additional flexibility. Such flexibility enables us to better express prior knowledge, control the model complexity and speed up computation in modeling real-life phenomena. This is crucial in modeling spatio-temporal processes with complex structures.

\subsection{Connection to the Inverse-Wishart Prior}
%It can be shown that the above structured scheme \eqref{eq:strt_cov} or \eqref{eq:strt_corr} provides much more flexibility, but at the same time incorporates it as a special case.
There are some interesting connections between the spherical product representations \eqref{eq:cholcov_manifold} \eqref{eq:cholcorr_manifold} and the early development of the Wishart distribution \citep{wishart28}.
The original Wishart distribution was derived by orthogonalizing multivariate Gaussian random variables leading to a lower triangular matrix whose elements $\{t_{ij}^*| i\geq j\}$ (analogous to $l_{ij}$ or $l^*_{ij}$) were called \emph{rectangular coordinates}.
This way, the probability density has a geometric interpretation as a product of volumes and approximate densities on a series of spherical shells with radius $\{t_{ii}^*\}$ \citep[See more details in][]{sverdrup47, anderson03}. % mahalanobis37, fog48
Now we demonstrate that the proposed schemes \eqref{eq:strt_cov} \eqref{eq:strt_corr} include the commonly used inverse-Wishart prior as a special case in modeling covariances.
%This is important to show the generality of our method beyond existing but limited counterparts.

Suppose $\vect\Sigma$ is a random sample from the inverse-Wishart distribution $\mathcal W^{-1}_D(\vect\Psi,\nu)$ with the scale matrix $\vect\Psi>0$ and the degree of freedom $\nu\ge D$. Therefore, %Then, its inverse $\vect\Sigma^{-1}$ has the corresponding Wishart distribution, i.e.
%\begin{equation*}
$\vect\Sigma^{-1} \sim \mathcal W_D(\vect\Psi^{-1},\nu)$.
%\end{equation*}
Denote ${\bf C}$ as the Cholesky factor of $\vect\Psi^{-1}$, i.e. $\vect\Psi^{-1} = {\bf C} \tp{\bf C}$. 
Then $\vect\Sigma^{-1}$ has the following Bartlett decomposition  \citep{anderson03,smith72}
\begin{equation}\label{eq:bartlett}
\vect\Sigma^{-1} = {\bf T}\, \tp{\bf T}, \quad {\bf T}:= {\bf C}{\bf T}^*, \quad t^*_{ij} \sim 
\begin{cases} \chi_{D-i+1}, & i=j\\
\mathcal N(0,1), & i>j\\\
\delta_0, & i<j
\end{cases}
\end{equation}
where the lower triangular matrix ${\bf T}$, named \emph{Bartlett factor}, has the following density \citep[Theorem 7.2.1 of][]{anderson03}
\begin{equation*}
p({\bf T}) = \frac{|\vect\Psi|^{\nu/2}}{2^{D(\nu-2)/2} \Gamma_D(\nu/2)} \prod_{i=1}^D |t_{ii}|^{\nu-i} \exp\left(-\half\tr (\vect\Psi {\bf T}\tp{\bf T}) \right)
\end{equation*}
with multivariate gamma function defined as $\Gamma_D(x):= \pi^{D(D-1)/4} \prod_{i=1}^D\Gamma[x+(1-i)/2]$.

Now taking the inverse of the first equation in \eqref{eq:bartlett} yields the following \emph{reversed Cholesky decomposition}\footnote{This can be achieved through the \emph{exchange matrix} (a.k.a. reversal matrix, backward identity, or standard involutory permutation) ${\bf E}$ with 1's on the anti-diagonal and 0's elsewhere. Note that ${\bf E}$ is both involutory and orthogonal, i.e. ${\bf E}={\bf E}^{-1}=\tp{\bf E}$.
Let ${\bf E}\vect\Sigma{\bf E}={\bf L}\tp{\bf L}$ be the usual Cholesky decomposition. Then $\vect\Sigma = ({\bf E}{\bf L}{\bf E}) \tp{({\bf E}{\bf L}{\bf E})}={\bf U} \tp{\bf U}$ and define ${\bf U}:={\bf E}{\bf L}\tp{\bf E}$.}
\begin{equation*}\label{eq:rev_cholcov}
\vect\Sigma = {\bf U} \tp{\bf U}, \qquad \sigma_{ij} = \sum_{k=\max\{i,j\}}^D u_{ik} u_{jk}, \quad \vech(\tp{\bf U}) \in \prod_{i=1}^D \mathcal S_0^{D-i}(\sigma_i)
\end{equation*}
where ${\bf U} := {\bf T}^{-\mathsf T}$ is an upper triangular matrix.
%The following lemma is for deriving the density of ${\bf U}$.
%
%\begin{restatable}{lem}{choljacobians}
%\label{lem:chol_jacobians}
%Let $\vect\Sigma = {\bf U} \tp{\bf U}$ be the reversed Cholesky decomposition of $\vect\Sigma$. The Jocobian of the transformation $\vect\Sigma\mapsto {\bf U}$ is
%\begin{equation*}
%\left| \frac{d_h \vect\Sigma}{d_h \tp{\bf U}} \right| := \left| \frac{\pa \vech \vect\Sigma}{\pa \vech \tp{\bf U}} \right| = 2^D \prod_{i=1}^D |u_{ii}|^i
%\end{equation*}
%Let $\vect\Rho={\bf L}\tp{\bf L}$ be the Cholesky decomposition of $\vect\Rho$. The Jacobian of the transformation ${\bf L}\mapsto \vect\Rho$ is
%\begin{equation*}
%\left| \frac{d_h {\bf L}}{d_h \vect\Rho} \right| := \left| \frac{\pa \vech {\bf L}}{\pa \vech \vect\Rho} \right|  = 2^{-D} \prod_{i=1}^D |l_{ii}|^{i-(D+1)}
%\end{equation*}
%\end{restatable}
%\begin{proof}
%See Appendix \ref{apx:conn2knownpriors}.
%\end{proof}
%This lemma is essential in proving that our proposed methods \eqref{eq:strt_cov} \eqref{eq:strt_corr} generalize existing methods in specifying priors,
%including the inverse-Wishart distribution as in the following theorem, and two uniform distributions \citep{barnard00} in the next section as well.
The following proposition describes the density of the reversed Cholesky factor ${\bf U}$ of $\vect\Sigma$, which enables us to treat the inverse-Wishart distribution as a special instance of strategy \eqref{eq:strt_cov} or \eqref{eq:strt_corr}.
%\begin{restatable}{thm}{conniWishart}
\begin{prop}
\label{prop:conn_iWishart}
Assume $\vect\Sigma\sim \mathcal W^{-1}_D(\vect\Psi,\nu)$. Then its reversed Cholesky factor ${\bf U}$ has the following density
\begin{equation*}\label{eq:pdf_bartlett}
p({\bf U}) = \frac{|\vect\Psi|^{\nu/2}}{2^{D(\nu-2)/2} \Gamma_D(\nu/2)} |{\bf U}|^{-(\nu+D+1)} \prod_{i=1}^D u_{ii}^i \exp\left(-\half\tr (\vect\Psi {\bf U}^{-\mathsf T} {{\bf U}^{-1}}) \right)
%= 2^D p_{\mathcal W^{-1}}({\bf U} \tp{\bf U}\, ; \vect\Psi,\nu) \prod_{i=1}^D |u_{ii}^i|
\end{equation*}
%where $p_{\mathcal W^{-1}}(\cdot\, ; \vect\Psi,\nu)$ is the density of $\mathcal W^{-1}_D(\vect\Psi,\nu)$ (See more details in Appendix \ref{apx:conn2knownpriors}).
%\begin{equation}
%p_{\mathcal W^{-1}}(\vect\Sigma\, ;  \vect\Psi,\nu) = \frac{|\vect\Psi|^{\nu/2}}{2^{D\nu/2} \Gamma_D(\nu/2)} |\vect\Sigma|^{-(\nu+D+1)/2} \exp\left(-\half\tr (\vect\Psi {\vect\Sigma}^{-1}) \right)
%\end{equation}
%If we normalize each row of ${\bf U}$ and write
%\begin{equation*}
%{\bf U} = \diag(\vect\sigma) {\bf U}^*, \quad \sigma_i = \sqrt{\sigma_{ii}}=\Vert {\bf u}_i\Vert, \quad u^*_{ij} = u_{ij}/\sigma_i \, ,
%\end{equation*}
%then the following joint prior of $(\vect\sigma,{\bf U}^*)$ is inseparable in general:
%\begin{equation}\label{eq:jtpri4iwishart}
%%p(\vect\sigma,{\bf U}^*) \propto  \prod_{i=1}^D \sigma_i^{i-(\nu+D+1)} \prod_{i=1}^D (u^*_{ii})^{i-(\nu+D+1)} \exp\left(-\half\tr (\vect\Psi \diag(\vect\sigma^{-1})({\bf U}^*)^{-\mathsf T} {({\bf U}^*)^{-1}}\diag(\vect\sigma^{-1})) \right)
%p(\vect\sigma,{\bf U}^*) \propto \prod_{i=1}^D |\sigma_i u^*_{ii}|^{i-(\nu+D+1)} \exp\left\{-\half\tr (\vect\Psi \diag(\vect\sigma^{-1})({\bf U}^*)^{-\mathsf T} {({\bf U}^*)^{-1}}\diag(\vect\sigma^{-1})) \right\}
%\end{equation}
%\end{restatable}
\end{prop}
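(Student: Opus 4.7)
The plan is to obtain $p({\bf U})$ from the stated Bartlett density of ${\bf T}$ via the change of variables ${\bf U} = {\bf T}^{-\mathsf T}$. Since transposition is just a relabeling of free entries, the absolute Jacobian of this map coincides with that of the matrix inverse restricted to lower triangular matrices, so the computation reduces to (i) the Jacobian of ${\bf T}\mapsto {\bf T}^{-1}$ on lower triangular matrices, and (ii) rewriting $\prod_i |t_{ii}|^{\nu-i}$ and $\tr(\vect\Psi\,{\bf T}\tp{\bf T})$ in terms of ${\bf U}$.

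For step (i), I would differentiate ${\bf T}{\bf T}^{-1}=I$ to get $d({\bf T}^{-1}) = -{\bf T}^{-1}\,d{\bf T}\,{\bf T}^{-1}$ and then use the standard fact that, on $D\times D$ lower triangular matrices, the linear map $X\mapsto A X B$ with $A,B$ lower triangular has Jacobian determinant $\prod_k a_{kk}^k \, b_{kk}^{D-k+1}$ (obtained by sweeping columns of $X$ for $A$ and rows for $B$). Applying this with $A=B=-{\bf T}^{-1}$ (whose diagonal is $1/t_{kk}$) gives
\begin{equation*}
\left|\tfrac{\partial {\bf T}^{-1}}{\partial {\bf T}}\right| = \prod_{k=1}^D t_{kk}^{-(k)+(D-k+1)-\cdots}\!\! = \prod_{k=1}^D t_{kk}^{-(D+1)} = |{\bf U}|^{D+1},
\end{equation*}
so $|\partial {\bf T}/\partial {\bf U}| = |{\bf U}|^{-(D+1)}$. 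I would sanity-check the exponent by doing the $D=1,2$ cases by hand, which indeed produce $|{\bf U}|^{D+1}$.

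For step (ii), since $t_{ii}=1/u_{ii}$ and all diagonal entries are positive (being chi-distributed or their reciprocals),
\begin{equation*}
\prod_{i=1}^D |t_{ii}|^{\nu-i} \;=\; \prod_{i=1}^D u_{ii}^{-(\nu-i)} \;=\; |{\bf U}|^{-\nu} \prod_{i=1}^D u_{ii}^i,
\end{equation*}
and from ${\bf T}\tp{\bf T} = \vect\Sigma^{-1}$ together with $\vect\Sigma = {\bf U}\tp{\bf U}$ one obtains ${\bf T}\tp{\bf T} = {\bf U}^{-\mathsf T}{\bf U}^{-1}$, so the exponential term becomes $\exp(-\tfrac12\tr(\vect\Psi\,{\bf U}^{-\mathsf T}{\bf U}^{-1}))$. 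Multiplying $p({\bf T})$ by the Jacobian $|{\bf U}|^{-(D+1)}$ and combining the two $|{\bf U}|$ powers yields $|{\bf U}|^{-(\nu+D+1)}$, giving exactly the stated formula.

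The main obstacle is the Jacobian computation: the cleanest path is the $d({\bf T}^{-1}) = -{\bf T}^{-1}d{\bf T}\,{\bf T}^{-1}$ identity combined with the column/row sweep argument on triangular matrices, but one must be careful to account for the restriction to the $D(D+1)/2$ free entries (which is why the exponents $k$ and $D-k+1$ appear and combine to $D+1$); everything else is routine algebraic substitution.
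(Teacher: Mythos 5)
Your proof is correct, but it takes a genuinely different route from the paper's. The paper starts from the inverse-Wishart density of $\vect\Sigma$ itself and applies the change of variables $\vect\Sigma={\bf U}\tp{\bf U}$, whose Jacobian $\left|\pa\vech\vect\Sigma/\pa\vech\tp{\bf U}\right|=2^D\prod_{i=1}^D|u_{ii}|^i$ is computed in a separate lemma via elimination, duplication and commutation matrices in the style of Magnus and Neudecker; substituting $|\vect\Sigma|=|{\bf U}|^2$ and $\vect\Sigma^{-1}={\bf U}^{-\mathsf T}{\bf U}^{-1}$ then gives the result immediately. You instead start one step earlier, from the Bartlett density $p({\bf T})$ quoted in the text, and push it through ${\bf U}={\bf T}^{-\mathsf T}$; the only nontrivial ingredient is the Jacobian of triangular-matrix inversion, which you obtain correctly from $d({\bf T}^{-1})=-{\bf T}^{-1}\,d{\bf T}\,{\bf T}^{-1}$ together with the sweep count $\prod_k a_{kk}^k\,b_{kk}^{D-k+1}$ for $X\mapsto AXB$ on lower triangular matrices, yielding $\prod_k|t_{kk}|^{-(D+1)}=|{\bf U}|^{D+1}$ for the forward map and hence $|{\bf U}|^{-(D+1)}$ for its inverse (transposition being a mere relabeling of the $D(D+1)/2$ free entries). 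The remaining bookkeeping, $\prod_i|t_{ii}|^{\nu-i}=|{\bf U}|^{-\nu}\prod_i u_{ii}^i$ and ${\bf T}\tp{\bf T}=\vect\Sigma^{-1}={\bf U}^{-\mathsf T}{\bf U}^{-1}$, is right, the exponents combine to $-(\nu+D+1)$ as required, and positivity of the diagonals (inherited from the chi-distributed Bartlett diagonal entries) justifies dropping the absolute values. What your route buys is elementariness: no $\vech$/duplication-matrix machinery, just the triangular sweep argument, at the cost of taking the quoted Bartlett-factor density as the starting point. What the paper's route buys is reuse: the same Jacobian lemma also drives the proof of Theorem \ref{thm:conn_unif} on uniform distributions over correlation matrices, so one computation serves two results.
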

\begin{proof}
See Section % \ref{apx:conn2knownpriors}.
A
in the supplementary file.
\end{proof}
%\begin{rk}
If we normalize each row of ${\bf U}$ and write
\begin{equation*}
{\bf U} = \diag(\vect\sigma) {\bf U}^*, \quad \sigma_i = \sqrt{\sigma_{ii}}=\Vert {\bf u}_i\Vert, \quad u^*_{ij} = u_{ij}/\sigma_i \, ,
\end{equation*}
then the following joint prior of $(\vect\sigma,{\bf U}^*)$ is inseparable in general:
\begin{equation}\label{eq:jtpri4iwishart}
%p(\vect\sigma,{\bf U}^*) \propto  \prod_{i=1}^D \sigma_i^{i-(\nu+D+1)} \prod_{i=1}^D (u^*_{ii})^{i-(\nu+D+1)} \exp\left(-\half\tr (\vect\Psi \diag(\vect\sigma^{-1})({\bf U}^*)^{-\mathsf T} {({\bf U}^*)^{-1}}\diag(\vect\sigma^{-1})) \right)
p(\vect\sigma,{\bf U}^*) \propto \prod_{i=1}^D |\sigma_i u^*_{ii}|^{i-(\nu+D+1)} \exp\left\{-\half\tr (\vect\Psi \diag(\vect\sigma^{-1})({\bf U}^*)^{-\mathsf T} {({\bf U}^*)^{-1}}\diag(\vect\sigma^{-1})) \right\}
\end{equation}
With this result, we can conditionally model variance and correlation factor as $p(\vect\sigma|{\bf U}^*)$ and $p({\bf U}^*|\vect\sigma)$ respectively, similarly as in our proposed scheme \eqref{eq:strt_cov} or \eqref{eq:strt_corr}.
It is also used to verify the validity of our proposed method \eqref{eq:strt_corr} (see more details in Section \ref{sec:validity}).
A similar result exists for the Wishart prior distribution regarding the Cholesky factor.
This representation facilitates the construction of a broader class of more flexible prior distributions for covariance matrix detailed below.
%\end{rk}

\subsection{More Flexible Priors}\label{sec:priors}
Within the above framework, the only constraint on ${\bf U}$ or ${\bf L}$ is that it resides on the product of spheres with increasing dimensions. Using this fact, we can develop a broader class of priors on covariance matrices and thus be able to model processes with more complicated dependence in covariance structures. Since $\vect\sigma$ and ${\bf L}^*$ have independent priors in \eqref{eq:strt_corr}, in what follows we focus on the scheme \eqref{eq:strt_corr}, and for simplicity, we denote the normalized Cholesky factor as ${\bf L}$. Also, following \cite{barnard00}, we assume a log-Normal prior on $\sigma$:
\begin{equation*}\label{eq:pri_var}
\log(\vect\sigma) \sim \mathcal N(\vect\xi,\vect\Lambda)
\end{equation*}

We now discuss priors on ${\bf L}$ that properly reflect the prior knowledge regarding the covariance structure among variables. If two variables, $y_i$ and $y_j$ (assuming $i<j$) are known to be uncorrelated a priori, i.e. $0=\rho_{ij}=\langle {\bf l}_i, {\bf l}_j\rangle$, then we can choose a prior that encourages ${\bf l}_i \perp {\bf l}_j$, e.g. $l_{jk} \approx 0$ for $k\leq i$. In contrast, if we believe a priori that there is a strong correlation between the two variables, we can specify that ${\bf l}_i$ and ${\bf l}_j$ be linearly dependent, e.g., by setting $[l_{jk}]_{k\leq i} \approx \pm {\bf l}_i\,$.
%Denote the north pole on $\mathcal S_0^{i-1}$ as ${\bf n}_i:=\tp{(0,\cdots,0,1)}$ and the south pole as ${\bf s}_i:=\tp{(0,\cdots,0,-1)}$.
When there is no prior information, we might assume that components are uncorrelated and specify priors for ${\bf l}_i$ that concentrate on the (two) poles of $\mathcal S_0^{i-1}$,
\begin{equation}\label{eq:pri_cor}
p({\bf l}_i) \propto |l_{ii}|, \quad i=2,\cdots, D
\end{equation}
Putting more prior probability on the diagonal elements of ${\bf L}$ renders fewer non-zero off-diagonal elements, which in turn leads to a larger number of perpendicular variables; that is, such a prior favors zeros in the correlation matrix $\vect\Rho$.
%\begin{rk}
%This prior cancels the log-determinant term in the energy of spherical HMC algorithm. 
%Therefore, there is no need of re-weighting in \eqref{eq:sphHMC_reweight} if this prior is used.
%\end{rk}
More generally, one can map a probability distribution defined on the simplex onto the sphere
and consider the following \emph{squared-Dirichlet} distribution.
\begin{dfn}[Squared-Dirichlet distribution]
A random vector ${\bf l}_i\in\mathcal S^{i-1}$ is said to have a \emph{squared-Dirichlet} distribution with parameter $\vect\alpha_i:=(\alpha_{i1},\alpha_{i2},\cdots,\alpha_{ii})$ if
\begin{equation*}\label{eq:sqDir_def}
{\bf l}_i^2 := (l_{i1}^2,l_{i2}^2,\cdots, l_{ii}^2) \sim \Dir (\vect\alpha_i)
\end{equation*}
Denote ${\bf l}_i \sim \sqDir (\vect\alpha_i)$. Then ${\bf l}_i$ has the following density
\begin{equation}\label{eq:sqDir_density}
p({\bf l}_i) = p({\bf l}_i^2) | 2{\bf l}_i | \propto ({\bf l}_i^2)^{\vect\alpha_i-1} |{\bf l}_i| = |{\bf l}_i|^{2\vect\alpha_i-1} := \prod_{k=1}^i |l_{ik}|^{2\alpha_{ik}-1}
\end{equation}
\end{dfn}
\begin{rk}
This definition includes a large class of flexible prior distributions on the unit sphere that specify different concentrations of probability density through the parameter $\vect\alpha_i$.
For example, the above prior \eqref{eq:pri_cor} corresponds to $\vect\alpha_i=(\half,\cdots,\half,1)$.
\end{rk}

\begin{figure}[t] %  figure placement: here, top, bottom, or page
   \centering
   \includegraphics[width=1\textwidth,height=.35\textwidth]{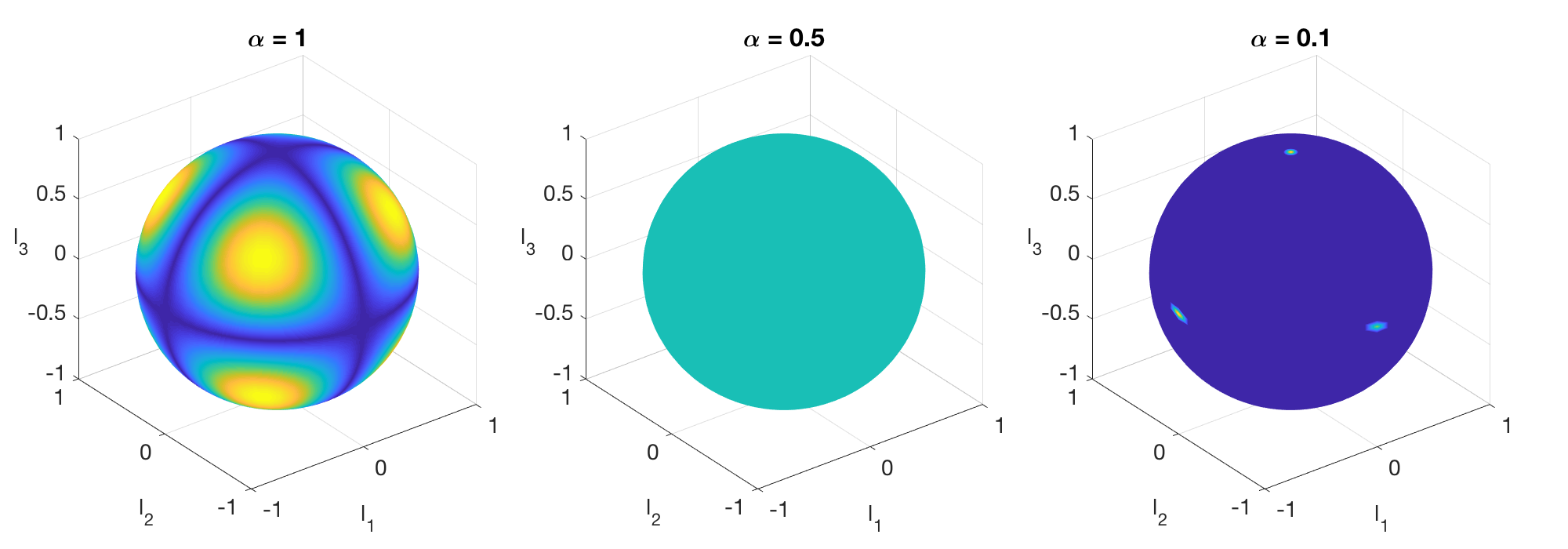} 
   \caption{Symmetric squared-Dirichlet distributions $\sqDir (\vect\alpha)$ defined on the 2-sphere with different settings for concentration parameter $\vect\alpha=\alpha{\bm 1}$.
                The uniform distribution on the simplex, $\Dir({\bm 1})$, becomes non-uniform on the sphere due to the stretch of geometry (left);
                the symmetric Dirichlet distribution $\Dir(\half {\bm 1})$ becomes uniform on the sphere (middle);
                with $\alpha$ closer to 0, the induced distribution becomes more concentrated on the polar points (right).}
   \label{fig:sqdirichlet}
\end{figure}

To induce a prior distribution for the correlation matrix $\vect\Rho = {\bf L} \tp{\bf L}$, one can specify priors on row vectors of ${\bf L}$, ${\bf l}_i \sim \sqDir (\vect\alpha_i)$ for $i=2,\cdots, D$.
To encourage small correlation, we choose the concentration parameter $\vect\alpha_i$ so that the probability density concentrates around the (two) poles of $\mathcal S_0^{i-1}$, e.g. $0<\alpha_{ik}\ll \alpha_{ii}$ for $k < i$.
Figure \ref{fig:sqdirichlet} illustrates the density heat maps of some symmetric squared-Dirichlet distributions $\sqDir (\alpha {\bm 1})$ on $\mathcal S^2$.
%If there exists prior knowledge about preference of positive/negative correlation, the following \emph{skew squared-Dirichlet} distribution could be used to reflect such preference in eliciting priors on correlations:
%\begin{dfn}[Skew squared-Dirichlet distribution]
%The skew squared-Dirichlet distribution for a random vector ${\bf l}_i\in\mathcal S^{i-1}$, associated with a Bernulli sign vector ${\bf z}_i:=\sign({\bf l}_i)=(z_{i1}, z_{i2},\cdots, z_{ii})$ with parameter ${\bf p}_i:=(p_{i1}, p_{i2},\cdots, p_{ii})$, has the following density
%\begin{equation}\label{eq:skewsqDir_density}
%\begin{aligned}
%p({\bf l}_i) &\propto p({\bf l}_i|{\bf z}_i) p({\bf z}_i) \propto ({\bf l}_i^2)^{\vect\alpha_i-1} {\bf l}_i {\bf z}_i p({\bf z}_i) \\
%\Pr(z_{ik}=1) &= p_{ik}, \quad \Pr(z_{ik}=-1) = 1-p_{ik},\quad k=1,\cdots i
%\end{aligned}
%\end{equation}
%\end{dfn}
It is interesting that the squared-Dirichlet distribution induces two important uniform prior distributions over correlation matrices from \cite{barnard00} in an effort to provide flexible priors for covariance matrices, as stated in the following theorem.
%The following theorem states that they can be generated using the squared-Dirichlet distribution.
%\begin{restatable}{thm}{connunif}[Uniform distributions]
\begin{thm}[Uniform distributions]
\label{thm:conn_unif}
Let $\vect\Rho={\bf L}\tp{\bf L}$.
Suppose ${\bf l}_i \sim \sqDir (\vect\alpha_i)$, for $i=2,\cdots,D$, are independent, where ${\bf l}_i$ is the $i$-th row of ${\bf L}$.
We have the following
\begin{enumerate}
\item If $\vect\alpha_i = ( \half \tp{\bm 1}_{i-1}, \alpha_{ii} ), \; \alpha_{ii} = \frac{(i-2)D-1}{2}$,
then $\vect\Rho$ follows a \emph{marginally} uniform distribution, that is, 
%\begin{equation*}
$\rho_{ij} \sim \unif(-1,1),\quad i\neq j$.
%\end{equation*}
\item If $\vect\alpha_i = ( \half \tp{\bm 1}_{i-1}, \alpha_{ii} ), \; \alpha_{ii} = \frac{D-i}{2}+1$,
then $\vect\Rho$ follows a \emph{jointly} uniform distribution, that is, 
%\begin{equation*}
$p(\vect\Rho) \propto 1$.
%\end{equation*}
\end{enumerate}
%\end{restatable}
\end{thm}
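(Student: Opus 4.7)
The plan is to derive $p(\vect\Rho)$ induced by the independent squared-Dirichlet priors on the rows of ${\bf L}$, and then specialize to the two prescribed choices of $\vect\alpha_i$. By independence and \eqref{eq:sqDir_density},
\[ p({\bf L}) \;\propto\; \prod_{i=2}^D \prod_{k=1}^i |l_{ik}|^{2\alpha_{ik}-1} \]
with respect to the product of surface measures on the unit spheres $\mathcal S^{i-1}$. Parametrizing each $\mathcal S^{i-1}$ by $(l_{i1},\dots,l_{i,i-1})$ with $l_{ii}$ determined up to sign by the unit-norm constraint absorbs a factor $|l_{ii}|^{-1}$ from the surface element, so the density with respect to Lebesgue measure on the free Cholesky coordinates is
\[ \tilde p({\bf L}) \;\propto\; \prod_{i=2}^D \Bigl(|l_{ii}|^{2\alpha_{ii}-2} \prod_{k=1}^{i-1} |l_{ik}|^{2\alpha_{ik}-1}\Bigr). \]

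Next I would change variables to the free correlations $(\rho_{ij})_{j<i}$. Using $\rho_{ij}=\sum_{k\le j} l_{ik}l_{jk}$ together with the fact that $\rho_{ij}$ depends only on rows $\le i$ of ${\bf L}$, the Jacobian in row-major order is block lower-triangular; within each row-block, $(\partial \rho_{ij}/\partial l_{ik})_{j,k=1}^{i-1}$ is itself lower triangular with diagonal entries $l_{jj}$. Thus its absolute determinant equals $\prod_{j=1}^{D-1}|l_{jj}|^{D-j}$, and (with $|l_{11}|=1$ since $\mathcal S^0=\{\pm 1\}$) one obtains
\[ p(\vect\Rho) \;\propto\; \prod_{i=2}^D |l_{ii}|^{2\alpha_{ii}+i-D-2} \prod_{k=1}^{i-1} |l_{ik}|^{2\alpha_{ik}-1}. \]

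Now specialize. For part~2, choosing $\alpha_{ik}=\half$ for $k<i$ kills every off-diagonal exponent, while $\alpha_{ii}=(D-i)/2+1$ makes each diagonal exponent $2\alpha_{ii}+i-D-2$ equal to zero; hence $p(\vect\Rho)\propto 1$. For part~1 the off-diagonal exponents again vanish, leaving $p(\vect\Rho)\propto \prod_{i=2}^D |l_{ii}|^{2\alpha_{ii}+i-D-2}$. I would then rewrite each Cholesky pivot as a ratio of successive leading principal minors $l_{ii}^2=|\vect\Rho_{1:i,1:i}|/|\vect\Rho_{1:i-1,1:i-1}|$, converting the density into one expressed purely in $\vect\Rho$, and match it against the marginally uniform form derived in \cite{barnard00}; equivalently, after relabeling the variables so that the target pair $(i,j)$ lies in the last row, the marginal over the remaining correlations reduces to a one-dimensional beta integral whose normalization forces $\rho_{ij}\sim\unif(-1,1)$ under the prescribed $\alpha_{ii}$.

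The main obstacle is part~1. Part~2 is a direct cancellation of exponents, whereas marginal uniformity does not follow from any pointwise identity and requires either invoking the Barnard--McCulloch--Meng identification of the marginally uniform density in terms of Cholesky pivots, or a marginalization that genuinely respects the positive-definite constraint. The Jacobian and sphere-measure bookkeeping itself is routine; the key conceptual point is that each $|l_{ii}|$ captures a ratio of principal minors, which bridges our sphere-product parameterization with the correlation-matrix-level densities of \cite{barnard00}.
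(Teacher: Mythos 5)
Your proposal is correct and follows essentially the same route as the paper: both push the independent squared-Dirichlet densities on the rows of ${\bf L}$ through the Jacobian of ${\bf L}\mapsto\vect\Rho$ to obtain $p(\vect\Rho)\propto\prod_{i}|l_{ii}|^{2\alpha_{ii}+i-D-2}\prod_{k<i}|l_{ik}|^{2\alpha_{ik}-1}$ (your two-step surface-measure-plus-block-triangular-determinant computation reproduces exactly the exponent in the paper's Jacobian lemma), then cancel all exponents for joint uniformity, and for marginal uniformity match the resulting $\prod_i|l_{ii}|^{(i-3)(D+1)}$ against Equation (8) of Barnard, McCulloch and Meng (2000) rewritten via $l_{ii}^2=M_i/M_{i-1}$. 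The only substantive differences are cosmetic: you derive the Jacobian by an elementary coordinate argument rather than the paper's elimination/duplication-matrix calculus, and you leave the final identification for part 1 as a sketch, which is precisely the exponent-matching step the paper carries out explicitly.
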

\begin{proof}
See Section %\ref{apx:conn2knownpriors}.
A
in the supplementary file.
\end{proof}

Another natural spherical prior can be obtained by constraining a multivariate Gaussian random vector to have unit norm.
This is later generalized to a vector Gaussian process constrained to a sphere that serves as a suitable prior for modeling correlation processes.
Now we consider the following \emph{unit-vector Gaussian} distribution:
\begin{dfn}[Unit-vector Gaussian distribution]
A random vector ${\bf l}_i\in\mathcal S^{i-1}$ is said to have a \emph{unit-vector Gaussian} distribution with mean $\vect\mu$ and covariance $\vect\Sigma$ if
\begin{equation*}\label{eq:uvG_def}
{\bf l}_i \sim \mathcal N_i(\vect\mu, \vect\Sigma), \quad with \; \Vert {\bf l}_i\Vert_2 =1
\end{equation*}
Then we denote ${\bf l}_i \sim \mathcal N_i^{\mathcal S}(\vect\mu, \vect\Sigma)$ and ${\bf l}_i$ has the following (conditional) density
\begin{equation*}\label{eq:uvG_density}
p({\bf l}_i|\,\Vert {\bf l}_i\Vert_2=1) %= p_{\mathcal N}({\bf l}_i |\, \Vert {\bf l}_i\Vert_2 =1)
= \frac{1}{(2\pi)^{\frac{i}{2}}|\vect\Sigma|^{\half}} \exp\left\{-\half \tp{({\bf l}_i-\vect\mu)} \vect\Sigma^{-1} ({\bf l}_i-\vect\mu) \right\}, \quad \Vert {\bf l}_i\Vert_2 =1
\end{equation*}
\end{dfn}
\begin{rk}
This conditional density essentially defines the following \emph{Fisher-Bingham} distribution \citep[a.k.a. \emph{generalized Kent} distribution,][]{kent82,mardia09}. % peel01
If $\vect\Sigma={\bf I}$, then the above distribution reduces to the \emph{von Mises-Fisher} distribution \citep{fisher53,mardia09} as a special case. % fisher87
If in addition $\vect\mu={\bf 0}$, then the above density becomes a constant; that is, the corresponding distribution is uniform on the sphere $\mathcal S_0^{i-1}$.
See more details in Section %\ref{apx:morepriors} 
E.1
of the supplementary file.
\end{rk}

\subsection{Dynamically Modeling the Covariance Matrices}\label{sec:dynamic}
%In this section, we extend the static model \eqref{eq:strt_corr} to allow for time-varying covariance in non-stationary processes. To this end, we consider the following dynamic model:
%We have laid the important foundations for modeling covariances in \eqref{eq:strt_corr}. 
We can generalize the proposed framework for modeling covariance/correlation matrices to the dynamic setting
by adding subscript $t$ to variables in the model \eqref{eq:strt_cov} and the model \eqref{eq:strt_corr}, thus called \emph{dynamic covariance} and \emph{dynamic correlation} models respectively.
We focus the latter in this section.
%Consider the following dynamic model:
%\begin{equation}\label{eq:dyn_strt}
%\begin{aligned}
%{\bf y}_t|\vect\Sigma_t(\vect\sigma_t, {\bf L}_t) &\sim \ell ({\bf y}_t ; \vect\Sigma_t(\vect\sigma_t, {\bf L}_t) ), \quad  \vect\Sigma_t(\vect\sigma_t, {\bf L}_t) = \diag(\vect\sigma_t) \vect\Rho_t \diag(\vect\sigma_t), \; \vect\Rho_t = {\bf L}_t \tp{{\bf L}_t} \\
%\vect\sigma_t &\sim p(\vect\sigma_t) \\
%{\bf L}_t &\sim p({\bf L}_t),\; \tp\vech({\bf L}_t)\in \prod_{i=1}^D \mathcal S_0^{i-1}
%\end{aligned}
%\end{equation}
One can model the components of $\vect\sigma_t$ as independent dynamic processes using, e.g. ARMA, GARCH, or log-Gaussian process. For ${\bf L}_t$, we use vector processes. Since each row of ${\bf L}_t$ has to be on a sphere of certain dimension, we require the unit norm constraint for the dynamic process over time.
We refer to any multivariate process ${\bf l}_i(x)$ satisfying $\Vert {\bf l}_i(x) \Vert \equiv 1, \; \forall x \in \mathcal X$ as \emph{unit-vector process (uvP)}.
%\begin{dfn}[Unit-vector process]
%A vector stochastic process ${\bf l}_i(x)$ is called \emph{unit-vector process}, if 
%\begin{equation*}
%\Vert {\bf l}_i(x) \Vert \equiv 1, \quad \forall x \in \mathcal X
%\end{equation*}
%where $\Vert\cdot\Vert$ could be any vector norm, e.g. 2-norm.
%\end{dfn}
%Note, the norm constraint is imposed in the strong sense, not under the expectation.
A unit-vector process can be obtained by constraining an existing multivariate process, e.g. the \emph{vector Gaussian process (vGP)},
%${\bf Z}(x)\sim \mathcal{GP}_D(\vect\mu,\mathcal C, {\bf V})$, 
as defined below.
\begin{dfn}[Vector Gaussian process]
A $D$-dimensional \emph{vector Gaussian process} ${\bf Z}(x):=(Z_1(x),\cdots, Z_D(x))$, with vector mean function $\vect\mu(x)=(\mu_1(x),\cdots,\mu_D(x))$, covariance function $\mathcal C$ and ($D$-dimensional) cross covariance ${\bf V}_{D\times D}$,
\begin{equation*}
{\bf Z}(x) \sim \mathcal{GP}_D(\vect\mu,\mathcal C, {\bf V}_{D\times D})
\end{equation*}
is a collection of $D$-dimensional random vectors, indexed by $x\in \mathcal X$, such that for any finite set of indices $\{x_1,\cdots, x_N\}$, 
the random matrix $\widetilde{\bf Z}_{N\times D}:=\tp{({\bf Z}(x_1),\cdots,{\bf Z}(x_N))}$ has the following matrix normal distribution
\begin{equation*}
\widetilde{\bf Z}_{N\times D} \sim \mathcal{MN}_{N\times D} ({\bf M}_{N\times D}, {\bf K}_{N\times N}, {\bf V}_{D\times D})
\end{equation*}
where ${\bf M}_{N\times D}:=({\bf m}_1,\cdots,{\bf m}_D)$, and ${\bf m}_k=\tp{(\mu_k(x_1),\cdots,\mu_k(x_N))}$, and ${\bf K}$ is the kernel matrix with elements $K_{ij}=\mathcal C(x_i,x_j)$.
\end{dfn}
\begin{rk}
Note for each $k=1,\cdots D$, we have the following marginal GP
\begin{equation*}
Z_k(x) \sim \mathcal{GP}(\vect\mu_k, \mathcal C)
\end{equation*}
In the above definition, we require a common kernel $\mathcal C$ for all the marginal GPs, whose dependence
is characterized by the cross covariance ${\bf V}_{D\times D}$.
On the other hand, for any fixed $x^*\in \mathcal X$, we have
\begin{equation*}
{\bf Z}(x^*) \sim \mathcal N_D(\vect\mu(x^*), {\bf V}_{D\times D})
\end{equation*}
%If we use a common mean $\mu$, i.e. $\mu_k(\cdot) = \mu(\cdot)$ for $k=1,\cdots D$, then 
%${\bf M}_{N\times D} = {\bf m}_{N\times 1} \otimes \tp{\bm 1}_D$ with ${\bf m}=\tp{(\mu(x_1),\cdots,\mu(x_N))}$.
For simplicity, we often consider $\mu\equiv 0$ and ${\bf V}_{D\times D}={\bf I}_D$. That is, $Z_k(x)\overset{iid}{\sim} \mathcal{GP}(0,\mathcal C)$ for $k=1,\cdots,D$.
%Then, the above matrix $\widetilde{\bf Z}_{N\times D}$ has the following density
%\begin{equation*}
%p(\widetilde{\bf Z}|{\bf m}, {\bf K}, {\bf I}_D) = \frac{1}{(2\pi)^{\frac{ND}{2}}|{\bf K}|^{\frac{D}{2}}} \exp\left\{-\half\tr ( \tp{\widetilde{\bf Z}} {\bf K}^{-1} \widetilde{\bf Z})\right\}
%\end{equation*}
\end{rk}

Restricting vGP ${\bf Z}(\cdot)$ to sphere yields a \emph{unit-vector Gaussian process (uvGP)} ${\bf Z}^*(\cdot):={\bf Z}(\cdot) |\, \{\Vert {\bf Z}(\cdot)\Vert_2\equiv 1\}$, denoted as ${\bf Z}^*(\cdot) \sim \mathcal{GP}_D^{\mathcal S}(\vect\mu,\mathcal C, {\bf V})$.
Note for any fixed $x^*\in\mathcal X$, ${\bf Z}^*(x^*) \sim \mathcal N_D^{\mathcal S}(\vect\mu, {\bf V})$. %defines a normalized $D$-dimensional Gaussian random vector, which is uniformly distributed on $\mathcal S^{D-1}(1)$ if $\vect\mu\equiv{\bf 0}$ and ${\bf V}={\bf I}$.
%Now normalizing each row of $\widetilde{\bf Z}$ (to get the resulted matrix $\widetilde{\bf Z}^*$), and conditioning on its length $\ell_n$, we have
Setting $\mu\equiv 0$, ${\bf V}={\bf I}$, and conditioned on the length $\ell_n$ of each row of $\widetilde{\bf Z}$, we have
\begin{equation*}
p(\widetilde{\bf Z}^*|\,\{\Vert z_{n\cdot}\Vert=\ell_n\}) 
%= p_{\mathcal{MN}}(\widetilde{\bf Z}) \left| \frac{d \VEC(\widetilde{\bf Z})}{d \VEC(\widetilde{\bf Z}^*)} \right|
%= p_{\mathcal{MN}}(\diag(\{\ell_n\})\widetilde{\bf Z}^*) \prod_{n=1}^N \ell_n^D
= \frac{\prod_{n=1}^N \ell_n^D}{(2\pi)^{\frac{ND}{2}}|{\bf K}|^{\frac{D}{2}}} \exp\left\{-\half\tr \left[ \tp{(\widetilde{\bf Z}^*)} \diag(\{\ell_n\}){\bf K}^{-1}\diag(\{\ell_n\}) \widetilde{\bf Z}^* \right]\right\}
\end{equation*}
This conditional density is preserved by the inference algorithm in Section \ref{sec:postinfer} and used for defining priors for correlations with all $\ell_n=1$.
For each marginal GP, we select the following exponential function as the common kernel
\begin{equation*}
\mathcal C(x,x') = \gamma \exp(-0.5\Vert x-x'\Vert^s/\rho^s)
\end{equation*}
where $s$ controls the smoothness, the scale parameter $\gamma$ is given an inverse-Gamma prior, and the correlation length parameter $\rho$ is given a log-normal prior.
Figure \ref{fig:sph_proc} shows a realization of vector GP ${\bf Z}_t$, unit-vector GP (forming rows of) ${\bf L}_t$ and the induced correlation process $\vect\Rho_t$ respectively. %under the model \eqref{eq:dyn_strt}. 

\begin{figure}[t] %  figure placement: here, top, bottom, or page
   \centering
   \includegraphics[width=1\textwidth,height=.28\textwidth]{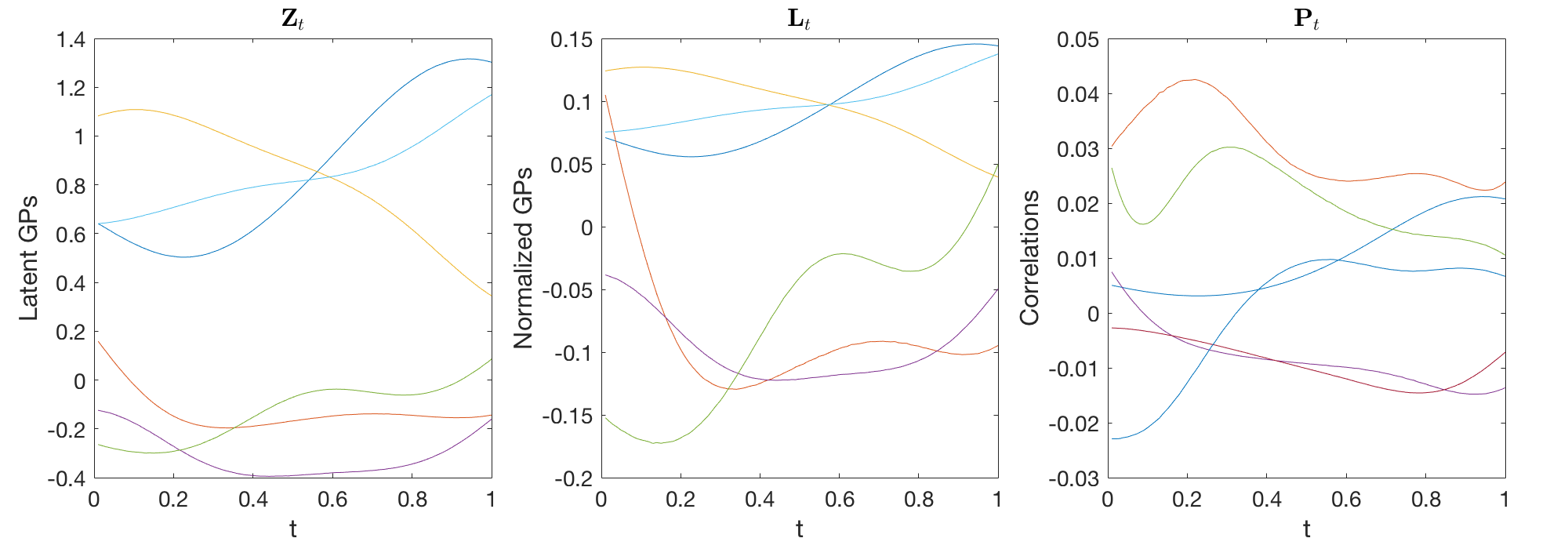} 
   \caption{A realization of vector GP ${\bf Z}_t$ (left), unit-vector GP (forming rows of) ${\bf L}_t$ (middle) and the induced correlation process $\vect\Rho_t$ (right).}
   \label{fig:sph_proc}
\end{figure}

%\textcolor{red}{Some theory?}

In what follows, we focus on multivariate time series; therefore, we use the one dimensional time index $t\in \mathcal X=\mathbb R^+$. The overall dynamic correlation model can be summarized as follows:
\begin{equation}\label{eq:strt_dyn_model}
\begin{aligned}
{\bf y}_t \sim \mathcal N(\vect\mu_t,\vect\Sigma_t), &\quad \vect\Sigma_t = \diag(\vect\sigma_t) {\bf L}_t \tp{{\bf L}_t} \diag(\vect\sigma_t) \\
\vect\mu_t \sim \mathcal{GP}_D(0,\mathcal C_{\mu},{\bf I}), &\quad \mathcal C_\mu(t,t') = \gamma_{\mu} \exp(-0.5\Vert t-t'\Vert^s/\rho_{\mu}^s) \\
\log \vect\sigma_t \sim \mathcal{GP}_D(0,\mathcal C_{\sigma},{\bf I}), &\quad \mathcal C_\sigma(t,t') = \gamma_{\sigma} \exp(-0.5\Vert t-t'\Vert^s/\rho_{\sigma}^s) \\
{\bf l}_i(t) \sim \mathcal{GP}_i^{\mathcal S}({\bf n}_i,\mathcal C_L,{\bf I}), &\quad \mathcal C_L(t,t') = \gamma_L \exp(-0.5\Vert t-t'\Vert^s/\rho_L^s) \\
\gamma_* \sim \Gamma^{-1}(a_*,b_*), &\quad \log\rho_* \sim \mathcal N(m_*,V_*), \; * = \mu, \sigma, \,\textrm{or}\, L
\end{aligned}
\end{equation}
where a constant mean function ${\bf n}_i=(0,\cdots,0,1)$ is used in the uvGP prior for ${\bf l}_i(t)$, with mean matrix ${\bf M}={\bm 1}_N\otimes \tp{\bf n}_i$ for the realization $\tilde{\bf l}_i$.
This model \eqref{eq:strt_dyn_model} captures the spatial dependence in the matrix $\vect\Sigma_t$, which evolves along the time; while the temporal correlation is characterized by various GPs.
The induced covariance process $\vect\Sigma_t$ is not a generalized Wishart process \citep{wilson10}, which only models Cholesky factor of covariance using GP.
Though with GP, dynamic covariance model may work similarly as the dynamic correlation model \eqref{eq:strt_dyn_model}, yet the latter provides extra flexibility in modeling the evolution of variances and correlations separately.
In general such flexibility could be useful in handling constraints for processes, e.g. modeling the dynamic probability for binary time series.

With this structured model \eqref{eq:strt_dyn_model}, one can naturally model the evolution of variances and correlations separately in order to obtain more flexibility. 
If the focus is on modeling the correlation among multiple time series, then one can substitute $\vect\sigma_t$ with a point estimate $\widehat{\vect\sigma}$ from one trial and assume a steady variance vector. Alternatively, if sufficient trials are present, one can obtain an empirical estimate, $\widehat{\vect\sigma}_t$, from multiple trials at each time point.
%In the next section, we will discuss a computationally efficient sampling algorithm to obtain the resulting posterior distribution on the covariance (or correlation) matrices given the observed data.
In the following, we study the posterior contraction of GP modeling in this setting.

\subsection{Posterior Contraction Theorem}
We now provide a theorem on the posterior contraction of the dynamic covariance model %\eqref{eq:strt_dyn_model} 
before we conclude this section.
Because the posterior contraction for mean regression using Gaussian process has been vastly investigated in the literature \citep{vanderVaart08a,vandervaart09,vanderVaart11,yang16},
we only investigate the posterior contraction for the covariance regression and set $\vect\mu_t\equiv 0$.
We leave the posterior contraction of the dynamic correlation model \eqref{eq:strt_dyn_model} for future work.
Note, the Cholesky decomposition of covariance matrix $\vect\Sigma={\bf L}\tp{\bf L}$ is unique if all the diagonal entries of ${\bf L}$ are positive.
Therefore in the remaining of this section, we identify Cholesky factors up to a column-wise sign, i.e. ${\bf L} \sim {\bf L} \diag(-\sum_{j\in J} e_j)$ for $J\subset\{1,\cdots,D\}$ where $e_j$ is the $j$-th column of identity matrix ${\bf I}_D$.

In most cases, Gaussian process ${\bf L}_t$ can be viewed as a tight Borel measurable map in a Banach space, e.g. a space of continuous functions or $L_p$ space.
It is well known that the support of a centered GP is equal to the closure of the \emph{reproducible kernel Hilbert space (RKHS)} $\mathbb H$ associated to this process \citep[Lemma 5.1 of][]{vanderVaart08b}.
Because the posterior distribution necessarily puts all its mass on the support of the prior, the posterior consistency requires the true parameter ${\bf L}_0$ governing the distribution of the data to fall in this support \citep{vanderVaart08a}. 
Following \cite{vanderVaart08a,vanderVaart08b,vanderVaart11}, we express the rate of the posterior contraction in terms of the \emph{concentration} function
\begin{equation}\label{eq:concent_1}
\phi_{{\bf L}_0}(\eps) = \inf_{h\in\mathbb H : \,\Vert h-{\bf L}_0\Vert < \eps} \Vert h\Vert_{\mathbb H}^2 - \log \Pi({\bf L} :\, \Vert {\bf L}\Vert < \eps)
\end{equation}
where $\Vert\cdot\Vert$ is the norm of the Banach space where the GP ${\bf L}$ takes value, $\Pi$ is the GP prior and $\mathbb H$ is the associated RKHS with norm $\Vert\cdot\Vert_{\mathbb H}$.
Under certain regularity conditions, the posterior contracts with increasing data expressed in $n$ at the rate $\eps_n\rightarrow 0$ satisfying
\begin{equation}\label{eq:concent_2}
\phi_{{\bf L}_0}(\eps_n) \leq n \eps_n^2
\end{equation}

%Let $\vect\Sigma(t)={\bf L}(t)\tp{{\bf L}(t)}$ be the Cholesky decomposition.
Define $\Vert{\bf L}(t)\Vert_\infty := \max_{1\leq i,j\leq D} \sup_{t\in\mathcal X} |l_{ij}(t)|$. 
Consider the separable Banach space $L^\infty(\mathcal X)^{D(D+1)/2}:=\{{\bf L}(t): \Vert{\bf L}(t)\Vert_\infty<+\infty\}$.
Let $p$ be a (centered) Gaussian model, which is uniquely determined by the covariance matrix $\vect\Sigma={\bf L}\tp{\bf L}$.
Therefore the model density is parametrized by ${\bf L}$, hence denoted as $p_{\bf L}$.
Denote $P_{\bf L}^{(n)}:=\otimes_{i=1}^n P_{{\bf L},i}$ as the product measure on $\otimes_{i=1}^n(\mathcal X_i,\mathcal B_i,\mu_i)$.
Each $P_{{\bf L},i}$ has a density $p_{{\bf L}_i}$ with respect to the $\sigma$-finite measure $\mu_i$.
Define the average Hellinger distance as $d_n^2({\bf L},{\bf L}')=\frac1n\sum_{i=1}^n\int (\sqrt{p_{{\bf L},i}}-\sqrt{p_{{\bf L}',i}})^2 d\mu_i$.
Denote the observations $Y^{(n)}=\{Y_i\}_{i=1}^n$ with $Y_i={\bf y}(t_i)$. Note they are independent but not identically distributed (inid).
%Let $c_{\bf L}(t) := \lambda_{\min}({\bf L}(t))$ be the smallest eigenvalue of ${\bf L}(t)$.
%We consider $\mathcal X\subset \mathbb R_+$ to be compact, e.g. $[0,1]$.
Now we state the main theorem of posterior contraction.
%\begin{restatable}{thm}{postcontr}[Posterior contraction]
\begin{thm}[Posterior contraction]
\label{thm:post_contr}
Let ${\bf L}-{\bf I}$ be a Borel measurable, zero-mean tight Gaussian random element in $L^\infty(\mathcal X)^{D(D+1)/2}$ and $P_{\bf L}^{(n)}=\otimes_{i=1}^n P_{{\bf L},i}$ be the product measure of $Y^{(n)}$ parametrized by ${\bf L}$.
Let $\phi_{{\bf L}_0}$ be the function in \eqref{eq:concent_1} with the uniform norm $\Vert\cdot\Vert_\infty$.
If ${\bf L}_0$ is contained in the support of ${\bf L}$ and  $\phi_{{\bf L}_0}$ satisfies \eqref{eq:concent_2} with $\eps_n\geq n^{-\half}$,
then $\Pi_n({\bf L} : d_n({\bf L}, {\bf L}_0)> M_n\eps_n | Y^{(n)}) \rightarrow 0$ in $P_{{\bf L}_0}^{(n)}$-probability for every $M\rightarrow \infty$.
%\end{restatable}
\end{thm}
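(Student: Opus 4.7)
The plan is to apply the general posterior contraction theorem for independent but non-identically distributed observations (as in Ghosal and van der Vaart, 2007) combined with the concentration-function machinery for Gaussian priors developed by van der Vaart and van Zanten (2008). The strategy reduces verifying posterior contraction at rate $\epsilon_n$ in the average Hellinger distance $d_n$ to three standard conditions: (i) a prior mass condition $\Pi(B_n({\bf L}_0,\epsilon_n)) \geq e^{-C n\epsilon_n^2}$ on a Kullback--Leibler-type neighborhood; (ii) existence of sieves $\Theta_n \subset L^\infty(\mathcal X)^{D(D+1)/2}$ with $\log N(\epsilon_n,\Theta_n,d_n) \leq C n\epsilon_n^2$; and (iii) $\Pi(\Theta_n^c) \leq e^{-(C+4) n\epsilon_n^2}$.

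The first step, and the technical heart of the proof, is to establish Lipschitz-type estimates translating the uniform norm $\|{\bf L}-{\bf L}_0\|_\infty$ on Cholesky factors into the relevant divergences on centered Gaussian densities $p_{{\bf L},i} = \mathcal N(0, {\bf L}(t_i){\bf L}(t_i)^{\mathsf T})$. Writing $\Sigma_i = {\bf L}(t_i)\tp{\bf L}(t_i)$ and $\Sigma_{0,i} = {\bf L}_0(t_i)\tp{{\bf L}_0}(t_i)$ and expanding the closed-form expressions for the Hellinger distance and KL divergence between two centered Gaussians in terms of eigenvalues of $\Sigma_i^{-1}\Sigma_{0,i}$, a first-order perturbation bound (using that the smallest eigenvalue of $\Sigma_{0,i}$ is bounded away from zero uniformly in $i$, since ${\bf L}_0$ lies in the prior support and ${\bf L}$ is close to ${\bf L}_0$ in uniform norm) gives
\begin{equation*}
d_n^2({\bf L},{\bf L}_0) \;\lesssim\; \|{\bf L}-{\bf L}_0\|_\infty^2, \qquad K_n({\bf L}_0,{\bf L}) + V_n({\bf L}_0,{\bf L}) \;\lesssim\; \|{\bf L}-{\bf L}_0\|_\infty^2,
\end{equation*}
where $K_n, V_n$ are the averaged KL divergence and its variance. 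This reduces all three conditions to statements purely about the Gaussian prior on ${\bf L}$ in the Banach space $L^\infty(\mathcal X)^{D(D+1)/2}$.

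The second step invokes the key Borell--Sudakov--Tsirelson/concentration-function lemma (Lemma 5.3 of van der Vaart and van Zanten, 2008): for any $\epsilon_n$ with $\phi_{{\bf L}_0}(\epsilon_n) \leq n\epsilon_n^2$, there exists a Borel set $B_n \subset L^\infty(\mathcal X)^{D(D+1)/2}$ satisfying simultaneously the small-ball lower bound $\Pi(\|{\bf L}-{\bf L}_0\|_\infty \leq 2\epsilon_n) \geq e^{-n\epsilon_n^2}$, the entropy bound $\log N(3\epsilon_n, B_n, \|\cdot\|_\infty) \leq 6 n\epsilon_n^2$, and the remainder bound $\Pi(({\bf L}-{\bf L}_0)+B_n^c) \leq e^{-n\epsilon_n^2}$. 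Setting the sieve $\Theta_n = {\bf L}_0 + M B_n$ for a large constant $M$ and using the Lipschitz inequalities of step one, the three contraction conditions are verified with the same rate $\epsilon_n$ (up to absorbing constants into $M$). Combined with the fact that suitable tests of the required exponential quality exist whenever the local entropy under $d_n$ is controlled (standard Hellinger testing), the posterior contraction statement follows directly from the inid master theorem.

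The main obstacle is step two: proving that the uniform-norm Lipschitz bounds between the Cholesky parametrization and the Gaussian densities hold uniformly over $B_n$ and uniformly in $t \in \mathcal X$. This requires showing that the smallest eigenvalue of $\Sigma(t)={\bf L}(t)\tp{\bf L}(t)$ is bounded below uniformly on the sieve; this follows from the fact that ${\bf L}_0(t)$ is invertible on $\mathcal X$ (compact) and $\|{\bf L}-{\bf L}_0\|_\infty$ is small on $B_n$ by Weyl's inequality, but needs to be made precise. The sign-identification nuisance mentioned in the setup (${\bf L} \sim {\bf L}\diag(\pm 1,\dots,\pm 1)$) is benign: the likelihood is invariant under these sign flips, so the posterior on equivalence classes is unaffected, and one may restrict to the representative with positive diagonal without loss of generality.
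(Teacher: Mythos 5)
Your proposal follows essentially the same route as the paper: bound the average Hellinger, Kullback--Leibler and variance divergences of the centered Gaussian models by the uniform norm of the difference of Cholesky factors (the paper's Lemma A.2 in the supplement), then feed the concentration-function machinery of van der Vaart and van Zanten (2008) into the inid master theorem of Ghosal and van der Vaart (2007). The one substantive discrepancy is in the exponents of your comparison inequalities: you claim $d_n^2({\bf L},{\bf L}_0)\lesssim \Vert {\bf L}-{\bf L}_0\Vert_\infty^2$ and $K_n+V_n\lesssim \Vert {\bf L}-{\bf L}_0\Vert_\infty^2$, whereas the paper proves only $h^2(p_0,p_1)\lesssim \Vert {\bf L}_0-{\bf L}_1\Vert_\infty$ and $K(p_0,p_1)\lesssim \Vert {\bf L}_0-{\bf L}_1\Vert_\infty$ (only $V$ gets the square). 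The quadratic versions are in fact the correct local orders (both $K$ and $h^2$ are Bregman-type gaps of $\log\det$, hence second order in the perturbation), but they require a genuine second-order expansion with a uniform eigenvalue lower bound, not the ``first-order perturbation bound'' you invoke; and the choice matters for bookkeeping, since with the paper's weaker H\"older-$\half$ bounds the entropy and prior-mass conditions must be verified at uniform-norm scale $\eps_n^2$ (whence the $e^{-n\eps_n^4/4}$ terms in the paper's proof), while your sharper bounds would let all three conditions be checked directly at scale $\eps_n$. Finally, your assertion that $\Vert {\bf L}-{\bf L}_0\Vert_\infty$ is small on the sieve is not right---the van der Vaart--van Zanten set $B_n=M_n\mathbb H_1+\eps_n\mathbb B_1$ is large---so the uniform lower bound on the diagonal entries (equivalently on the smallest eigenvalue of $\vect\Sigma(t)$) over $\Theta_n$ has to be imposed as a hypothesis, as the paper does via the condition $\min_j\inf_t |l_{i,jj}(t)|\geq c_0>0$ in its lemma, rather than deduced from proximity to ${\bf L}_0$ by Weyl's inequality.
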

\begin{proof}
See Section %\ref{apx:post_contr}.
B
in the supplementary file.
\end{proof}
\begin{rk}
In principle, the smoothness of GP should match the regularity of the true parameter to achieve the optimal rate of contraction \citep{vanderVaart08a,vanderVaart11}.
One can scale GP, e.g. using an inverse-Gamma bandwidth, to get optimal contraction rate for every regularity level so that the resulting estimator is rate adaptive \citep{vandervaart09,vanderVaart11}.
One can refer to Section 3.2 of \citep{vanderVaart11} for posterior contraction rates using squared exponential kernel for GP. 
We leave further investigation on contraction rates in the setting of covariance regression to future work.
\end{rk}
\begin{rk}
Here the GP prior ${\bf L}$ defines a (mostly finite) probability measure on the space of bounded functions. The true parameter function ${\bf L}_0$ is required to be contained in the support of the prior, the RKHS of ${\bf L}$. The contraction rate depends on the position of ${\bf L}_0$ relative to the RKHS and the small-ball probability $\Pi(\Vert {\bf L}\Vert<\eps)$.
\end{rk}

%%%%%%%%%%%%%%%%%%%%%%%%%%%%%%%%%%%%%%%%%%%%
\section{Posterior Inference}\label{sec:postinfer}
Now we obtain the posterior probability of mean $\vect\mu_t$, variance $\vect\sigma_t$, Cholesky factor of correlation ${\bf L}_t$, hyper-parameters $\vect\gamma:=(\gamma_\mu,\gamma_\sigma,\gamma_L)$ and $\vect\rho:=(\rho_\mu,\rho_\sigma,\rho_L)$ in the model \eqref{eq:strt_dyn_model}.
Denote the realization of processes $\vect\mu_t, \vect\sigma_t, {\bf L}_t$ at discrete time points $\{t_n\}_{n=1}^N$ as $\widetilde{\vect\mu}_{N\times D}, \widetilde{\vect\sigma}_{N\times D}, \widetilde{\bf L}_{N\times D\times D}$ respectively.
Transform the parameters $\widetilde{\vect\tau}:=\log(\widetilde{\vect\sigma})$, $\vect\eta:=\log(\vect\rho)$ for the convenience of calculation. 
Denote $\widetilde{\bf Y}_{M\times N\times D}:=\{{\bf Y}_1,\cdots, {\bf Y}_M\}$ for $M$ trials, $({\bf Y}_m)_{N\times D}:=\tp{[{\bf y}_{m1},\cdots, {\bf y}_{mN}]}$ and ${\bf y}_{mn}^*:=({\bf y}_{mn}-\vect\mu_n) \circ e^{-\vect\tau_n}$ where $\circ$ is the Hadamard product (a.k.a. Schur product), i.e. the entry-wise product.
Let ${\bf K}_*(\gamma_*,\eta_*) = \gamma_* {\bf K}_{0*}(\eta_*)$ and $\tilde{\bf l}_i^*:=\tilde{\bf l}_i-{\bm 1}_N\otimes \tp{\bf n}_i$. 
%Then, we have
%\begin{equation*}
%\begin{aligned}
%& \log p(\widetilde{\vect\mu}, \widetilde{\vect\tau}, \widetilde{\bf L}, \vect\gamma, \vect\eta| \widetilde{\bf Y}) \\
%=& \log p(\widetilde{\bf Y}| \widetilde{\vect\mu}, \widetilde{\vect\Sigma}) +  \log p(\widetilde{\vect\mu}| \gamma_\mu, \eta_\mu) + \log p(\widetilde{\vect\tau}| \gamma_\tau, \eta_\tau) + \log p(\widetilde{\bf L}| \gamma_L, \eta_L) \\
%& + \log p(\gamma_\mu) + \log p(\gamma_\tau) + \log p(\gamma_L) + \log p(\eta_\mu) + \log p(\eta_\tau) + \log p(\eta_L) \\
%=& - \sum_{n=1}^N \left[ M (\tp{\bm 1}_D \vect\tau_n + \log |{\bf L}_n|) + \sum_{m=1}^M\half \tp{{\bf y}_{mn}^*} \vect\Rho_n^{-1} {\bf y}_{mn}^*\right] - \frac{D}{2} \log |{\bf K}_\mu(\gamma_\mu,\eta_\mu)| - \half \tr (\tp{\widetilde{\vect\mu}} {\bf K}_\mu^{-1} \widetilde{\vect\mu}) \\
%& -\frac{D}{2} \log |{\bf K}_\tau(\gamma_\tau,\eta_\tau)| - \half \tr (\tp{\widetilde{\vect\tau}} {\bf K}_\tau^{-1} \widetilde{\vect\tau}) - \frac{(D+2)(D-1)/2}{2} \log |{\bf K}_L(\gamma_L,\eta_L)| - \half \sum_{i=2}^D \tr (\tilde{\bf l}_i^{*\mathsf T} {\bf K}_L^{-1} \tilde{\bf l}_i^*) \\
%& - (a_\mu+1) \log \gamma_\mu - b_\mu/\gamma_\mu - (a_\tau+1) \log \gamma_\tau - b_\tau/\gamma_\tau - (a_L+1) \log \gamma_L - b_L/\gamma_L \\
%& - \half (\eta_\mu-m_\mu)^2/V_\mu - \half (\eta_\tau-\mu_\tau)^2/V_\tau - \half (\eta_L-m_L)^2/V_L
%\end{aligned}
%\end{equation*}
%\textcolor{red}{Consider the re-arrangement here; seems overwhelming.}

\subsection{Metropolis-within-Gibbs}\label{sec:sample_procedure}
We use a Metropolis-within-Gibbs algorithm and alternate updating the model parameters $\widetilde{\vect\mu}, \widetilde{\vect\tau}, \widetilde{\bf L}, \vect\gamma, \vect\eta$.  We now list the parameters and their respective updates one by one.

%\paragraph{$(\vect\gamma)$.}
\noindent $(\vect\gamma)$. \quad
Note the prior for $\vect\gamma$ is conditionally conjugate given $* = \mu, \tau, \,\textrm{or}\, L$,
\begin{equation*}
\gamma_* | \tilde*, \eta_* \sim \Gamma^{-1}(a'_*,b'_*), \quad a'_* = a_* + \frac{ND}{2} (\frac{D+1}{2}-\frac1D)^{[*=L]},\quad b'_* = b_* + \half \tr (\tp{\tilde*} {\bf K}_{0*}(\eta_*)^{-1} \tilde* )%, \; * = \mu, \tau, \,\textrm{or}\, L
\end{equation*}
where $[{\rm condition}]$ is 1 with the condition satisfied and 0 otherwise.

%\paragraph{$(\vect\eta)$.}
\noindent $(\vect\eta)$. \quad
Given $* = \mu, \tau, \,\textrm{or}\, L$, we could sample $\eta_*$ using the slice sampler \citep{neal03}, which only requires log-posterior density and works well for scalar parameters,
\begin{equation*}
\log p(\eta_*| \tilde*, \gamma_*) = -\frac{D(\frac{D+1}{2}-\frac1D)^{[*=L]}}{2} \log|{\bf K}_{0*}(\eta_*)| - \frac{\tr (\tp{\tilde*} {\bf K}_{0*}(\eta_*)^{-1} \tilde* )}{2\gamma_*} - \frac{(\eta_*-m_*)^2}{2V_*}%, \; * = \mu, \tau, \,\textrm{or}\, L
\end{equation*}

%\paragraph{$(\widetilde{\vect\mu})$.}
\noindent $(\widetilde{\vect\mu})$. \quad
By the definition of vGP, we have $\widetilde{\vect\mu}| \gamma_\mu,\eta_\mu \sim \mathcal{MN}_{N\times D}({\bf 0},{\bf K}_\mu,{\bf I}_D)$; therefore,
%\begin{equation*}
$\VEC(\widetilde{\vect\mu}) |\gamma_\mu,\eta_\mu \sim \mathcal N_{ND}({\bf 0},{\bf I}_D\otimes {\bf K}_\mu)$.
%\end{equation*}
On the other hand, one can write
\begin{equation*}
\begin{aligned}
\sum_{m=1}^M \sum_{n=1}^N \tp{{\bf y}_{mn}^*} \vect\Rho_n^{-1} {\bf y}_{mn}^* &= \sum_{m=1}^M \tp{\VEC(\tp{({\bf Y}_m-\widetilde{\vect\mu})})} \diag(\{\widetilde{\vect\Sigma}_n^{-1}\}) \VEC(\tp{({\bf Y}_m-\widetilde{\vect\mu})}) \\
%&= \sum_{m=1}^M \tp{\VEC({\bf Y}_m-\widetilde{\vect\mu})} K_{(D,N)} \diag(\{\widetilde{\vect\Sigma}_n\})^{-1} K_{(N,D)} \VEC({\bf Y}_m-\widetilde{\vect\mu}) \\
&= \sum_{m=1}^M \tp{(\VEC({\bf Y}_m)-\VEC(\widetilde{\vect\mu}))} \widetilde{\vect\Sigma}_K^{-1} (\VEC({\bf Y}_m)-\VEC(\widetilde{\vect\mu}))
\end{aligned}
\end{equation*}
where $\widetilde{\vect\Sigma}_K^{-1}:=K_{(D,N)} \diag(\{\widetilde{\vect\Sigma}_n\})^{-1} K_{(N,D)}$, and
$K_{(N,D)}$ is the \emph{commutation matrix} of size $ND\times ND$ such that for any $N\times D$ matrix ${\bf A}$, $K_{(N,D)} \VEC({\bf A})=\VEC(\tp{\bf A})$ \citep{tracy69,magnus79}. Therefore, the prior on $\VEC(\widetilde{\vect\mu})$ is conditionally conjugate, and we have
\begin{align*}
\VEC(\widetilde{\vect\mu}) | \widetilde{\bf Y}, \widetilde{\vect\Sigma}, \gamma_\mu,\eta_\mu &\sim \mathcal N_{ND}(\vect\mu', \vect\Sigma')\\ 
\vect\mu' &= \vect\Sigma' \widetilde{\vect\Sigma}_K^{-1} \sum_{m=1}^M \VEC({\bf Y}_m), \quad \vect\Sigma' = \left({\bf I}_D\otimes{\bf K}_\mu^{-1}+M \widetilde{\vect\Sigma}_K^{-1}\right)^{-1}
\end{align*}

%\paragraph{$(\widetilde{\vect\tau})$.}
\noindent $(\widetilde{\vect\tau})$. \quad
Using a similar argument by %, we have a 
matrix Normal prior for $\widetilde{\vect\tau}$, we have
%\begin{equation*}
$\VEC(\widetilde{\vect\tau})| \gamma_\tau,\eta_\tau \sim \mathcal N_{ND}({\bf 0},{\bf I}_D\otimes {\bf K}_\tau)$.
%\end{equation*}
Therefore, we could use the elliptic slice sampler \citep[ESS,][]{murray10}, which only requires the log-likelihood
\begin{equation*}
\log p(\widetilde{\vect\tau}; \widetilde{\bf Y},\widetilde{\vect\mu}) 
%= - \sum_{n=1}^N \left[ M \tp{\bm 1}_D \vect\tau_n + \sum_{m=1}^M \half \tp{{\bf y}_{mn}^*} \vect\Rho_n^{-1} {\bf y}_{mn}^*\right] %- \half \tr (\tp{\widetilde{\vect\tau}} {\bf K}_\tau^{-1} \widetilde{\vect\tau})
= - M \tp{\bm 1}_{ND} \VEC(\widetilde{\vect\tau}) - \sum_{m=1}^M \half \tp{\VEC({\bf Y}_m^*)} \widetilde{\vect\Rho}_K^{-1} \VEC({\bf Y}_m^*)
\end{equation*}
where $\widetilde{\vect\Rho}_K^{-1}:=K_{(D,N)} \diag(\{\widetilde{\vect\Rho}_n\})^{-1} K_{(N,D)}$ and ${\bf Y}_m^*:=({\bf Y}_m-\widetilde{\vect\mu}) \circ \exp(-\widetilde{\vect\tau})$. 
%Alternatively, we could use Hamiltonian Monte Carlo \citep[HMC,][]{duane87,neal11}, which additionally requires the gradient of log-posterior density,
%\begin{equation*}
%\frac{\pa}{\pa \VEC(\widetilde{\vect\tau})} \log p(\widetilde{\vect\tau}| \widetilde{\bf Y},\widetilde{\vect\mu},\gamma_\tau,\eta_\tau) = - M {\bm 1}_{ND} + \sum_{m=1}^M \diag(\VEC({\bf Y}_m^*)) (\widetilde{\vect\Rho}_K^{-1} \VEC({\bf Y}_m^*)) - ({\bf I}_D\otimes{\bf K}_\tau^{-1}) \VEC(\widetilde{\vect\tau}) % last term = - \VEC({\bf K}_\tau^{-1}\widetilde{\vect\tau})
%\end{equation*}

%\paragraph{$(\widetilde{\bf L})$.}
\noindent $(\widetilde{\bf L})$. \quad
For each $n\in\{1,\cdots N\}$, we have $\tp\vech({\bf L}_n)\in \prod_{i=1}^D \mathcal S_0^{i-1}$. 
We could sample from its posterior distribution using the \emph{$\Delta$-Spherical Hamiltonian Monte Carlo ($\Delta$-SphHMC)} described below. The log-posterior density of $\widetilde{\bf L}$ is
\begin{equation*}
\log p(\widetilde{\bf L}| \widetilde{\bf Y}, \widetilde{\vect\mu},\widetilde{\vect\tau},\gamma_L,\eta_L) = - \sum_{n=1}^N \left[ M \log |{\bf L}_n| + \sum_{m=1}^M \half \tp{{\bf y}_{mn}^*} \vect\Rho_n^{-1} {\bf y}_{mn}^*\right] - \half \sum_{i=2}^D \tr (\tilde{\bf l}_i^{*\mathsf T} {\bf K}_L^{-1} \tilde{\bf l}_i^*)
\end{equation*}
The derivative of log-likelihood with respect to ${\bf L}_n$ and the derivative of log-prior with respect to $\tilde{\bf l}_i$ can be calculated as
\begin{equation*}
\frac{\pa \log p(\widetilde{\bf L}; \widetilde{\bf Y},\widetilde{\vect\mu},\widetilde{\vect\tau}) }{\pa {\bf L}_n} = - M \frac{{\bf I}_D}{{\bf L}_n} + \sum_{m=1}^M \mathrm{tril} ( \vect\Rho_n^{-1} {\bf y}_{mn}^* \tp{{\bf y}_{mn}^*} {\bf L}_n^{-\mathsf T} ), \;
\frac{\pa \log p(\widetilde{\bf L}| \gamma_L,\eta_L) }{\pa \tilde{\bf l}_i} = - {\bf K}_L^{-1} \tilde{\bf l}_i^*
\end{equation*}

\subsection{Spherical HMC}\label{sec:SphHMC}
We need an efficient algorithm to handle the intractability in the posterior distribution of $\widetilde{\bf L}$ introduced by various flexible priors.
\emph{Spherical Hamiltonian Monte Carlo} \citep[SphHMC,][]{lan14b,lan16} is a Hamiltonian Monte Carlo \citep[HMC,][]{duane87,neal11} algorithm on spheres that
can be viewed as a special case of geodesic Monte Carlo \citep{byrne13}, or manifold Monte Carlo methods \citep{girolami11,lan14a}.
The algorithm was originally proposed to handle norm constraints in sampling so it is natural to use it to sample each row of the Cholesky factor of a correlation matrix with unit 2-norm constraint. The general notation $\bq$ is instantiated as ${\bf l}_i$ in this section.

Assume a probability distribution with density function $f(\bq)$ is defined on a $(D-1)$ dimensional sphere with radius $r$, $\mathcal S^{D-1}(r)$.
%We are interested in an algorithm that can efficiently explore this space. 
Due to the norm constraint, there are $(D-1)$ free parameters $\bq_{-D}:=(q_1,\cdots,q_{D-1})$,
which can be viewed as the Cartesian coordinates for the manifold $\mathcal S_+^{D-1}(r)$.
%The last component is determined up to a sign: $q_D=\pm\sqrt{r^2-\Vert \bq_{-D}\Vert_2^2}$).
To induce Hamiltonian dynamics on the sphere, we define the potential energy for position $\bq$ as $U(\bq):= -\log f(\bq)$.
Endowing the canonical spherical metric $\bG(\bq_{-D}) = {\bf I}_{D-1}+ \frac{\bq_{-D}\tp \bq_{-D}}{q_D^2}$ on the Riemannian manifold $\mathcal S^{D-1}(r)$,
we introduce the auxiliary velocity vector $\bv|\bq \sim \mathcal N({\bf 0}, \bG(\bq)^{-1})$ and define the associated kinetic energy as $K(\bv;\bq):= -\log f_{\mathcal N}(\bv|\bq)=-\half \log|\bG(\bq_{-D})|+\half \tp \bv_{-D} \bG(\bq_{-D})\bv_{-D}$ \citep{girolami11}.
Therefore the total energy is defined as
\begin{equation}\label{eq:energy}
E(\bq,\bv) := U(\bq) + K(\bv;\bq) %= -\log f(\bq) -\half \log|\bG(\bq_{-D})|+\half \tp \bv_{-D} \bG(\bq_{-D})\bv_{-D} 
= \tilde U(\bq) + K_0(\bv;\bq)
\end{equation}
where we denote $\tilde U(\bq):=U(\bq) -\half \log|\bG(\bq_{-D})|=-\log f(\bq) +\log|q_D|$, and $K_0(\bv;\bq):=\half \tp \bv_{-D} \bG(\bq_{-D})\bv_{-D}=\half \tp \bv \bv$ \citep{lan16}.
%Denote $\nabla_{\bq}^{-1} \,\cdot\,:= \bG(\bq)^{-1} \nabla_{\bq} \,\cdot\,$.
Therefore the Lagrangian dynamics with above total energy \eqref{eq:energy} is \citep{lan14a}
\begin{equation}\label{eq:LD}
\begin{aligned}
\dot \bq_{-D} & =  \bv_{-D}\\
\dot \bv_{-D} & = -\tp \bv_{-D} \bGamma(\bq_{-D}) \bv_{-D} - \bG(\bq_{-D})^{-1} \nabla_{\bq_{-D}} \tilde U(\bq)
\end{aligned}
\end{equation}
where $\bGamma(\bq_{-D}) = r^{-2}\bG(\bq_{-D})\otimes \bq_{-D}$ is the Christoffel symbols of second kind \citep[see details in][for $r=1$]{lan16}.
A splitting technique is applied to yield the following geometric integrator \citep{lan14b,lan16}, which also includes the last coordinates $q_D, v_D$:
%\begin{equation}\label{eq:sphHMC_prop}
%\begin{aligned}
%\bv^- &= \bv - \frac{h}{2} ({\bf I} - r^{-2}\bq \tp\bq) \nabla_{\bq} U(\bq)\\
%\begin{bmatrix}
%\bq' \\ \bv^+
%\end{bmatrix}
%&= \begin{bmatrix}
%\bq \cos(r^{-1}\Vert \bv^-\Vert_2 h) + r\bv^- \Vert \bv^-\Vert_2^{-1} \sin(r^{-1}\Vert \bv^-\Vert_2 h)\\
%- r^{-1}\bq \Vert \bv^-\Vert_2 \sin(r^{-1}\Vert \bv^-\Vert_2 h) + \bv^- \cos(r^{-1}\Vert \bv^-\Vert_2 h)
%\end{bmatrix}\\
%\bv' &= \bv^+ - \frac{h}{2} ({\bf I} - r^{-2}\bq' \tp{\bq'}) \nabla_{\bq} U(\bq')
%\end{aligned}
%\end{equation}
\begin{equation}\label{eq:sphHMC_prop}
\begin{aligned}
\bv^- &= \bv - \frac{h}{2} \mathcal P(\bq) \bg(\bq)\\
\begin{bmatrix}
\bq' \\ \bv^+
\end{bmatrix}
&= 
\begin{bmatrix}
r & 0\\
0 & \Vert \bv^-\Vert_2
\end{bmatrix}
\begin{bmatrix}
\cos(\Vert \bv^-\Vert_2r^{-1} h) + \sin(\Vert \bv^-\Vert_2r^{-1} h)\\
- \sin(\Vert \bv^-\Vert_2r^{-1} h) + \cos(\Vert \bv^-\Vert_2r^{-1} h)
\end{bmatrix}
\begin{bmatrix}
r^{-1} & 0\\
0 & \Vert \bv^-\Vert_2^{-1}
\end{bmatrix}
\begin{bmatrix}
\bq \\ \bv^-
\end{bmatrix}\\
\bv' &= \bv^+ - \frac{h}{2} \mathcal P(\bq') \bg(\bq')
\end{aligned}
\end{equation}
where $\bg(\bq):=\nabla_{\bq}\tilde U(\bq)$, $\mathcal P(\bq):={\bf I}_D - r^{-2}\bq \tp\bq$.
\eqref{eq:sphHMC_prop} defines a mapping $\mathcal T_h: (\bq, \bv)\mapsto (\bq',\bv')$.
Denote $\Vert {\bf u}\Vert_{\mathcal P(\bq)}^2:= \tp {\bf u} \mathcal P(\bq) {\bf u}$.
After applying such integrator $T$ times, a proposal $(\bq_T, \bv_T)=\mathcal T_h^T(\bq_0, \bv_0)$ is accepted with the following probability
\begin{equation}\label{eq:sphHMC_acpt}
\begin{aligned}
a_{sphHMC} =& 1 \wedge \exp(-\Delta E)\\
\Delta E %= \sum_{\tau=0}^T \Delta E_{\tau} 
=& \tilde U(\bq_T) - \tilde U(\bq_0) - \frac{h^2}{8} \left[ \Vert \bg(\bq_T) \Vert_{\mathcal P(\bq)}^2 - \Vert \bg(\bq_0) \Vert_{\mathcal P(\bq)}^2 \right] \\
& - \frac{h}{2} \left[ \langle \bv_0, \bg(\bq_0)\rangle + \langle \bv_T, \bg(\bq_T)\rangle \right] - h \sum_{\tau=1}^{T-1} \langle \bv_{\tau}, \bg(\bq_{\tau})\rangle
\end{aligned}
\end{equation}
We can prove the following limiting result \citep{beskos11}.
%\begin{restatable}{thm}{energyconserv}
\begin{thm}
\label{thm:energy_conserv}
Let $h\to 0$ we have the following energy conservation
\begin{equation*}
E(\bq(T), \bv(T)) - E(\bq(0), \bv(0)) = \tilde U(\bq(T)) - \tilde U(\bq(0)) - \int_0^{T} \langle \bv(t), \bg(\bq(t))\rangle dt = 0
\end{equation*}
%\end{restatable}
\end{thm}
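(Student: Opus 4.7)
The plan is to pass to the limit $h\to 0$ in the discrete energy-difference formula \eqref{eq:sphHMC_acpt}, interpreting $(\bq_T,\bv_T)$ as the iterate after $N=T/h$ steps so that the discrete trajectory $(\bq_\tau,\bv_\tau)$ tracks the continuous Lagrangian flow \eqref{eq:LD} with $(\bq_N,\bv_N)\to(\bq(T),\bv(T))$. The proof then splits into two independent claims: (i) the right-hand side of \eqref{eq:sphHMC_acpt} converges to $\tilde U(\bq(T))-\tilde U(\bq(0)) - \int_0^T \langle \bv(t),\bg(\bq(t))\rangle\,dt$; and (ii) this limiting expression vanishes along the continuous flow.

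For (i), uniform convergence of the splitting scheme \eqref{eq:sphHMC_prop} to the true Lagrangian flow on a bounded time interval is the standard Strang-splitting result invoked here from \citep{beskos11}: the sub-steps consist of an exact geodesic rotation on the sphere and an explicit gradient kick, and composing them yields an integrator with uniform $O(h^2)$ trajectory error. Given this, the prefactor $\tfrac{h^2}{8}\bigl[\Vert\bg(\bq_T)\Vert_{\mathcal P(\bq_T)}^2 - \Vert\bg(\bq_0)\Vert_{\mathcal P(\bq_0)}^2\bigr]$ vanishes as $h\to 0$ because the bracket stays bounded (the sphere is compact and $\bg$ is continuous), while the weighted sum
\begin{equation*}
\tfrac{h}{2}\bigl[\langle\bv_0,\bg(\bq_0)\rangle + \langle\bv_N,\bg(\bq_N)\rangle\bigr] + h\sum_{\tau=1}^{N-1}\langle\bv_\tau,\bg(\bq_\tau)\rangle
\end{equation*}
is precisely the composite trapezoidal rule applied to $\langle\bv(t),\bg(\bq(t))\rangle$, and converges to $\int_0^T \langle \bv(t),\bg(\bq(t))\rangle\,dt$ by continuity of the integrand along the limiting flow.

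For (ii), along the continuous Lagrangian flow \eqref{eq:LD} one has $\dot\bq(t)=\bv(t)$ as a tangent vector to $\mathcal S^{D-1}(r)$ (the last coordinate being forced by the orthogonality constraint $\langle\bq,\bv\rangle=0$). Because $\tilde U$ is defined on the ambient space with gradient $\bg$, the chain rule gives $\tfrac{d}{dt}\tilde U(\bq(t)) = \langle \bg(\bq(t)),\bv(t)\rangle$, and the fundamental theorem of calculus yields $\tilde U(\bq(T))-\tilde U(\bq(0)) = \int_0^T\langle\bg(\bq(t)),\bv(t)\rangle\,dt$; combined with (i), this gives $\Delta E \to 0$. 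The principal technical obstacle is the uniform convergence in (i), which is imported as a black box from \citep{beskos11}; a minor care-point is that $\bg(\bq)$ need not be tangent to the sphere, but since $\bv$ is tangent one has $\langle\bg,\bv\rangle=\langle\mathcal P(\bq)\bg,\bv\rangle$, so the projection appearing in the kick step of \eqref{eq:sphHMC_prop} does not alter the limiting integrand.
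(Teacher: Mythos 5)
Your proposal is correct in substance but reaches the first equality of the theorem by a genuinely different route than the paper. The paper never touches the discrete integrator in its proof: it works entirely with the continuous Lagrangian flow \eqref{eq:LD}, substituting the second equation of \eqref{eq:LD} into $-\langle \bv,\bg(\bq)\rangle$ and using the metric/Christoffel identity $\tp{\bv_{-D}}\bG\dot\bv_{-D}+\half\tp{\bv_{-D}}\,d\bG\,\bv_{-D}=\frac{d}{dt}\half\tp{\bv_{-D}}\bG\bv_{-D}$ to conclude that $-\langle\bv,\bg(\bq)\rangle=\frac{d}{dt}\half\Vert\bv\Vert_2^2$, so the kinetic-energy increment is exactly $-\int_0^T\langle\bv,\bg\rangle\,dt$; the second equality then follows from $\dot\bq=\bv$ and the chain rule, which is identical to your part (ii). You instead obtain the first equality as the $h\to 0$ limit of the exact discrete identity underlying \eqref{eq:sphHMC_acpt}, reading the weighted endpoint-plus-interior sum as a trapezoidal rule. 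This is a legitimate and arguably more literal reading of the ``$h\to 0$'' phrasing, and it lets you bypass the differential-geometric computation entirely; the cost is that you must import uniform convergence of the splitting scheme to the continuous flow as a black box, which is the one point where your argument is thinner than the paper's --- it requires smoothness that degenerates near the coordinate singularity $q_D=0$ (where $\tilde U$ contains $\log|q_D|$ and the chart for $\bq_{-D}$ breaks down), whereas the paper's continuous-time argument needs no statement about the integrator at all. Your remark that $\langle\bg,\bv\rangle=\langle\mathcal P(\bq)\bg,\bv\rangle$ because $\bv$ is tangent is correct and mirrors the paper's identity $\tp{\bv}\nabla_{\bq}\tilde U=\tp{\bv_{-D}}\nabla_{\bq_{-D}}\tilde U$.
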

\begin{proof}
See Section %\ref{apx:sphHMC}.
C
in the supplementary file.
\end{proof}

\subsection{Adaptive Spherical HMC}
There are two tuning parameters in HMC and its variants: the step size $h$ and the number of integration (leapfrog) steps $T$. 
Hand tuning heavily relies on domain expertise and could be inefficient. %and tedious.
Here, we adopt the `No-U-Turn' idea from \cite{hoffman14} and introduce a novel adaptive algorithm that obviates manual tuning of these parameters.

%\subsubsection{Number of Leapfrogs: `Two-Orthants' Rule}
First, for any given step size $h$, we adopt a rule for setting the number of leapfrog steps based on the same philosophy as `No-U-Turn' \citep{hoffman14}.
The idea is to avoid waste of computation occurred (e.g. when the sampler backtracks on its trajectory) without breaking the detailed balance condition for the MCMC transition kernel. 
$\mathcal S^{D-1}(r)$ is a compact manifold where any two points $\bq(0), \bq(t)\in \mathcal \mathcal S^{D-1}(r)$ have bounded geodesic distance $\pi r$.
%We can take advantage of the following analytic form
%\begin{equation}\label{eq:geodist}
%\mathrm{dist}_{\mathrm{geod}}(\bq(0), \bq(t)) = r \cos^{-1} \frac{\langle \bq(0), \bq(t) \rangle}{r^2}
%\end{equation}
We adopt the stopping rule for the leapfrog when the sampler exits the orthant of the initial state, that is,
the trajectory measured in geodesic distance is at least $\frac{\pi}{2} r$,
%\begin{equation}\label{eq:stop_cond}
%\mathrm{dist}_{\mathrm{geod}}(\bq(0), \bq(t)) = r \cos^{-1} \frac{\langle \bq(0), \bq(t) \rangle}{r^2} > \frac{\pi}{2} r \quad \Longleftrightarrow \quad \langle \bq(0), \bq(t) \rangle < 0
%\end{equation}
%We refer to this stopping rule as the `\emph{Two-Orthants}' rule because the starting and end points occupy two orthants.
which is equivalent to $\langle \bq(0), \bq(t) \rangle < 0$.
%Note that since the geometric integrator \eqref{eq:sphHMC_prop} does not follow a geodesic (great circle) in general (only the middle part does),
%we set some threshold $T_{\max}$ on the number of iterations before the condition \eqref{eq:stop_cond} is satisfied.
On the other hand, this condition may not be satisfied within reasonable number of iterations because the geometric integrator \eqref{eq:sphHMC_prop} does not follow a geodesic (great circle) in general (only the middle part does), therefore we set some threshold $T_{\max}$ for the number of tests, and adopt the following `Two-Orthants' (as the starting and end points occupy two orthants) rule for the number of leapfrogs:
%More specifically, the `Two-Orthants' rule sets number of leapfrogs, $T_{2orth}$, as follows
\begin{equation}\label{eq:sphHMC_2orth}
T_{2orth} = \min_{\tau\in\{0,\cdots,T_{\max}\}} \{\tau: \langle \bq_0, \bq_{\tau} \rangle < 0\}
\end{equation}
%which is simpler than the `No-U-Turn' algorithm \citep{hoffman14}.
%Further, our approach is time reversible so no recursive binary tree is needed.
%We want to comment that `No-U-Turn' rule on $\mathcal S^{D-1}(r)$ could be sub-optimal.
%In the extreme case when the initial and the final states in the leapfrog achieve the maximal geodesic distance $\pi r$,
%the final state could be obtained by just flipping the sign of the last coordinate in the initial state (think of the north and south poles).
%They could give the same probability density
%if the sign of the last coordinate does not affect $f(\bq)$.
%\subsubsection{Stopping the Leapfrogs: Symmetric Stochastic Rule}
Alternatively, %one would like to encourage long trajectories towards the maximum $\pi r$. 
one can stop the leapfrog steps in a \emph{stochastic} way
%After each leapfrog step, determine whether to stop by flipping a coin with probability 
based on the geodesic distance travelled:
\begin{equation}\label{eq:sphHMC_stoch}
T_{stoch} = \min_{\tau} \{\tau : Z_{\tau}=0\}, \quad 
Z_{\tau} \sim \mathrm{Bern}(p_{\tau}), \quad p_{\tau} = \frac{r^{-2} \langle \bq_0, \bq_{\tau} \rangle +1}{2}
\end{equation}
%Such a stopping rule is symmetric in time, and one can prove that it maintains the detailed balance condition.
These stopping criteria are already time reversible, so the recursive binary tree as in `No-U-Turn' algorithm \citep{hoffman14} is no longer needed.

Lastly, we adopt the \emph{dual averaging} scheme \citep{nesterov09} for the adaptation of step size $h$. %, as in \cite{hoffman14}.
%%%% ----- Dual averaging is basically adopted from NUTS and the details of are irrelevant ---- %%%%
%We update $x:=\log h$ thus:
%\begin{equation}\label{eq:dual_averg}
%x_{n+1} = \mu -\frac{\sqrt n}{\gamma} \frac{1}{n+n_0} \sum_{i=1}^n A_i; \qquad \bar x_{n+1} = \eta_n x_{n+1} + (1-\eta_n) \bar x_n
%\end{equation}
%\textcolor{blue}{This paragraph is too messy. We should define all terms in the eqnarray and not in the text.}
%Here, $x_n$ shrink towards $\mu$, which is a freely chosen point, $\gamma>0$ is a free parameter that controls the amount of shrinkage towards $\mu$, $n_0$ is a free a parameter that stabilizes the initial iterations of the algorithm,
%$A_n = a_0 - a_{sphHMC}({\bf z}_n)$, with $a_0$ being some desired acceptance rate, e.g. 0.65. \citep{beskos13},
%$\eta_n=n^{-\kappa}$ is a step size schedule obeying vanishing adaption rule \citep{robbins51,andrieu08}, and $\bar x_1=x_1$.
%We choose $\kappa\in(0.5,1]$ so that $\lim_{n\to \infty} \bar x_n = \lim_{n\to \infty} \frac{1}{n}\sum_{i=1}^n x_i$, and $\lim_{n\to \infty} \alpha(\bar x_n)=0$,
%and restrict such adaptation strictly within the warmup (burn-in) phase.
%In experiments, we follow \cite{hoffman14} to set $\gamma=0.05$, $n_0=10$, $\kappa=0.75$
%and adopt Algorithm 4 of \cite{hoffman14} to choose a reasonable initial value for step size $h_1$
%(repeatedly double or halves the value of $h_1$ until the corresponding acceptance rate crosses $0.5$),
%and then set $\mu=\log(10 h_1)$. 
%%%% ----- Dual averaging is basically adopted from NUTS and the details of are irrelevant ---- %%%%
See \cite{hoffman14} for more details.
We summarize our \emph{Adaptive Spherical Hamiltonian Monte Carlo (adp-SphHMC)} in %Algorithm \ref{alg:adp-sphHMC}.
the supplementary file.

To sample ${\bf L}$ (or ${\bf L}_t$), we could update each row vector ${\bf l}_i\in \mathcal S_0^{i-1}$ according to \eqref{eq:sphHMC_prop} (in parallel), and accept/reject $\tp\vech({\bf L})$ (or $\tp\vech({\bf L}_t)$) simultaneously based on \eqref{eq:sphHMC_acpt} in terms of the sum of total energy of all components.
We refer to the resulting algorithm as \emph{$\Delta$-Spherical HMC ($\Delta$-SphHMC)}.

The computational complexity involving GP prior is $\mathcal O(N^3)$, and that of the likelihood evaluation is $\mathcal O(MD^2)$. MCMC updates of $\widetilde{\vect\mu}_{N\times D}, \widetilde{\vect\sigma}_{N\times D}, \widetilde{\bf L}_{N\times D\times D}$ have complexity $\mathcal O(ND)$, $\mathcal O(ND)$ and $\mathcal O(ND^2)$ respectively.
To scale up applications to larger dimension $D$, one could preliminarily classify data into groups, and arrange the corresponding blocks of their covariance/correlation matrix in some `band' along the main diagonal assuming no correlation among groups.
%In this way, one does not have to maintain the full covariance/correlation matrices/processes.
More specifically, we can assume ${\bf L}_t$ is $w$-band lower triangular matrix for each time $t$, i.e. $l_{ij}=0$ for $i<j$ or $i-j\geq w$, then the resulting covariance/correlation matrix will be $(2w-1)$-banded. In this way the complexity of likelihood evaluation and updating $\widetilde{\bf L}$ will be reduced to $\mathcal O(MwD)$ and $\mathcal O(NwD)$ resepctively. Therefore the total computational cost would scale linearly with the dimension $D$.
This technique will be investigated in Section \ref{sec:scalability}.

%\textcolor{red}{I am working on some theories of posterior contraction rates in terms of $M$ and $N$, which are expected to be an independent section, if possible.}
%\textcolor{red}{One thing we could comment is, if $M$ and $N$ has similar effect on posterior contraction, we definitely want to increase $M$ rather than $N$, for sake of the computation! That is, if you have more trials, use them!}

%%%%%%%%%%%%%%%%%%%%%%%%%%%%%%%%%%%%%%%%%%%%
%\section{Numerical Experiments}\label{sec:numerics}
\section{Simulation Studies}\label{sec:numerics}
In this section, we use simulated examples to illustrate the advantage of our structured models for covariance.
First, we consider the normal-inverse-Wishart problem. Since there is conjugacy and we know the true posterior, we use this to verify our method and investigate flexible priors in Section \ref{sec:priors}.
Then we test our dynamical modeling method in Section \ref{sec:dynamic} on a periodic process model.
Our model manifests full flexibility compared to a state-of-the-art nonparametric covariance regression model based on latent factor process \citep{fox15}.
%Finally, we apply our methodology to local potential field data collected from a memory sequence experiment to model the evolving pattern of multiple time series.

\subsection{Normal-inverse-Wishart Problem}
Consider the following example involving inverse-Wishart prior
\begin{equation}\label{eq:conj-iwishart}
\begin{aligned}
{\bf y}_n|\vect\Sigma & \sim \mathcal N(\vect\mu_0,\vect\Sigma), \quad n=1,\cdots,N\\
\vect\Sigma & \sim \mathcal W^{-1}_D(\vect\Psi, \nu)
\end{aligned}
\end{equation}
It is known that the posterior of $\vect\Sigma|{\bf Y}$ is still inverse-Wishart distribution:
\begin{equation}\label{eq:post-iwishart}
\vect\Sigma|{\bf Y} \sim \mathcal W^{-1}_D (\vect\Psi + ({\bf Y}-\vect\mu_0)\tp{({\bf Y}-\vect\mu_0)}, \nu+N), \qquad {\bf Y}=\tp{[{\bf y}_1,\cdots, {\bf y}_N]}
\end{equation}
We consider dimension $D=3$ and generate data ${\bf Y}$ with %the following setting
%\begin{equation*}
$\vect\mu_0={\bf 0}, \quad \vect\Sigma=\vect\Sigma_0=\frac{1}{11} ({\bf I} + {\bm 1}\tp{\bm 1})$
%\end{equation*}
%where we consider $a=1$ and $a=10$. These two choices of $\vect\Sigma_0$ result in common correlation $\rho_{ij}\equiv \rho=0.50$ and $\rho=0.91$ respectively.
%where %we %consider $a=1$ which results in a common correlation
%$\rho_{ij}\equiv \rho=0.50$ for $i\neq j$.
for $N=20$ data points %with $\vect\mu_0$ and $\vect\Sigma_0$. The small number, 20, is chosen 
so that the prior is not overwhelmed by data.

%We first calculate the log-likelihood and its gradients (See more details in Appendix \ref{apx:calculations}). Denote ${\bf y}_n^*:=({\bf y}_n-\vect\mu_0)/\vect\sigma$.
%\begin{equation}
%\begin{aligned}
%\ell ({\bf y}^*; \vect\sigma, \vect\Rho) &= -N \tp{\bm 1}_D \log\vect\sigma - \frac{N}{2} \log|\vect\Rho| -\half \sum_{n=1}^N \tp{{\bf y}_n^*} \vect\Rho^{-1} {\bf y}_n^*\\
%\frac{\pa \ell}{\pa \vect\sigma} &= -N \vect\sigma^{-1} + \sum_{n=1}^N \diag({\bf y}^*_n/\vect\sigma) (\vect\Rho^{-1} {\bf y}_n^*)\\
%\end{aligned}
%\end{equation}
%where we use reversed Cholesky decomposition $\vect\Rho = {\bf U}^* \tp{({\bf U}^*)}$ or Cholesky decomposition $\vect\Rho = {\bf L} \tp{\bf L}$ depending on the purpose.
%We work with the transformed parameter $\vect\tau:=\log(\vect\sigma)$ for the convenience of specifying priors
%and therefore have
%\begin{equation}
%\frac{\pa \ell}{\pa \vect\tau} = -N {\bm 1}_D + \sum_{n=1}^N \diag({\bf y}^*_n) (\vect\Rho^{-1} {\bf y}_n^*), \quad {\bf y}_n^*:=({\bf y}_n-\vect\mu_0) \circ e^{-\vect\tau}
%\end{equation}
%%where $\circ$ is the Hadamard product (a.k.a. Schur product), i.e. the entry-wise product.

\begin{figure}[t] %  figure placement: here, top, bottom, or page
   \centering
   \includegraphics[width=1\textwidth,height=.4\textwidth]{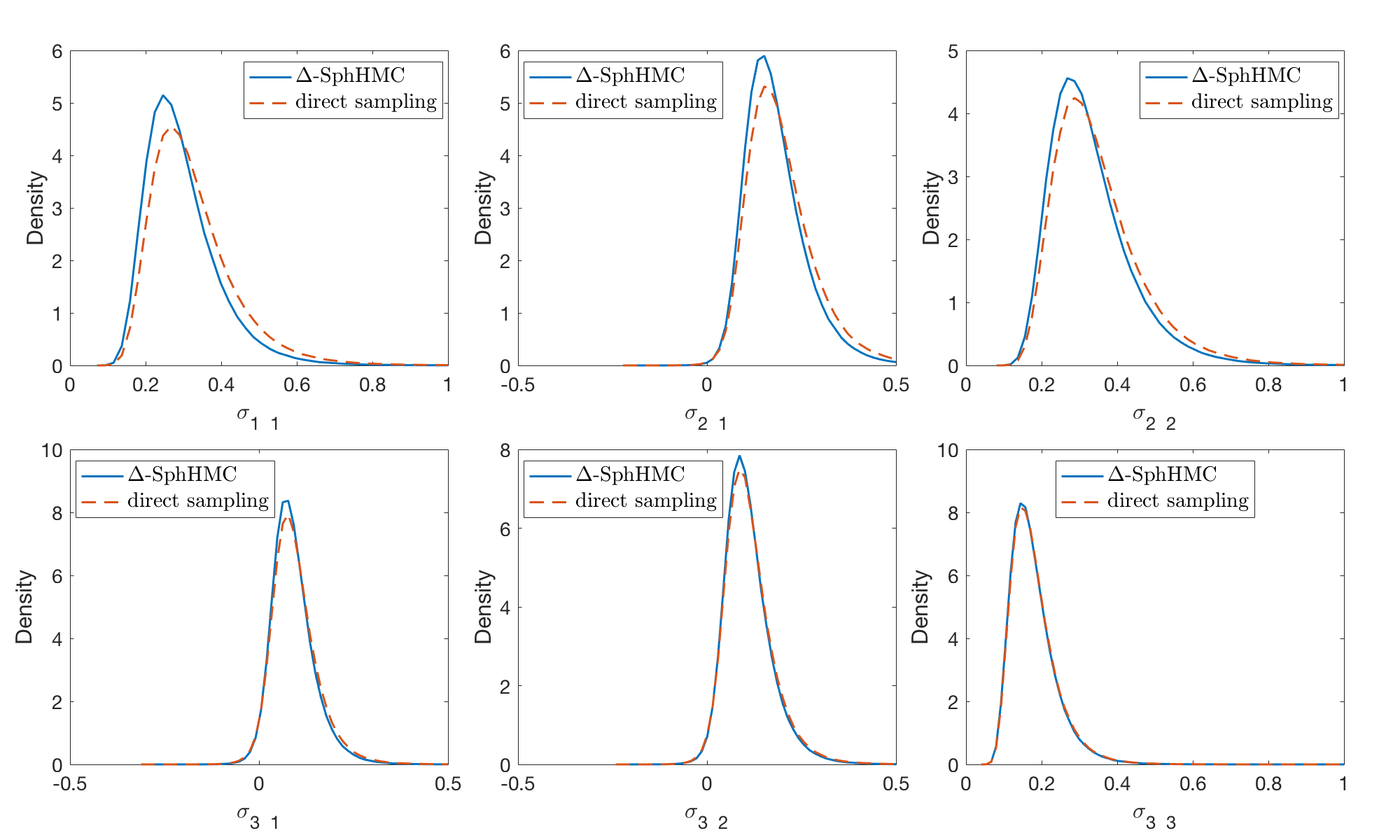} 
   \caption{Marginal posterior densities of $\sigma_{ij}$ in the normal-inverse-Wishart problem. Solid blue lines are estimates by $\Delta$-SphHMC and dashed red lines are estimates by direct sampling. %using Matlab function {\tt iwishrnd}. 
   All densities are estimated with $10^6$ samples.}
   \label{fig:post_bartlett}
\end{figure}

\subsubsection{Verification of Validity}\label{sec:validity}
Specifying conditional priors based on \eqref{eq:jtpri4iwishart} in the structured model \eqref{eq:strt_corr},
we want to check the validity of our proposed method by comparing the posterior estimates using $\Delta$-SphHMC agains the truth \eqref{eq:post-iwishart}.

We sample $\vect\tau:=\log(\vect\sigma)$ using standard HMC and ${\bf U}^*$ using $\Delta$-SphHMC. They are updated in Metropolis-Within-Gibbs scheme.
$10^6$ samples are collected after burning the first $10\%$ and subsampling every $1$ of $10$.
For each sample of $\vect\tau$ and $\vech({\bf U}^*)$, we calculate $\vect\Sigma=\diag(e^{\vect\tau}){\bf U}^*\tp{({\bf U}^*)}\diag(e^{\vect\tau})$.
Marginal densities of entries in $\vect\Sigma$ are estimated with these samples and plotted against the results by direct sampling in Figure \ref{fig:post_bartlett}.
Despite of sampling variance, these estimates closely match the results by direct sampling, indicating the validity of our proposed method.

\begin{figure}[t] %  figure placement: here, top, bottom, or page
   \centering
   \includegraphics[width=1\textwidth,height=.45\textwidth]{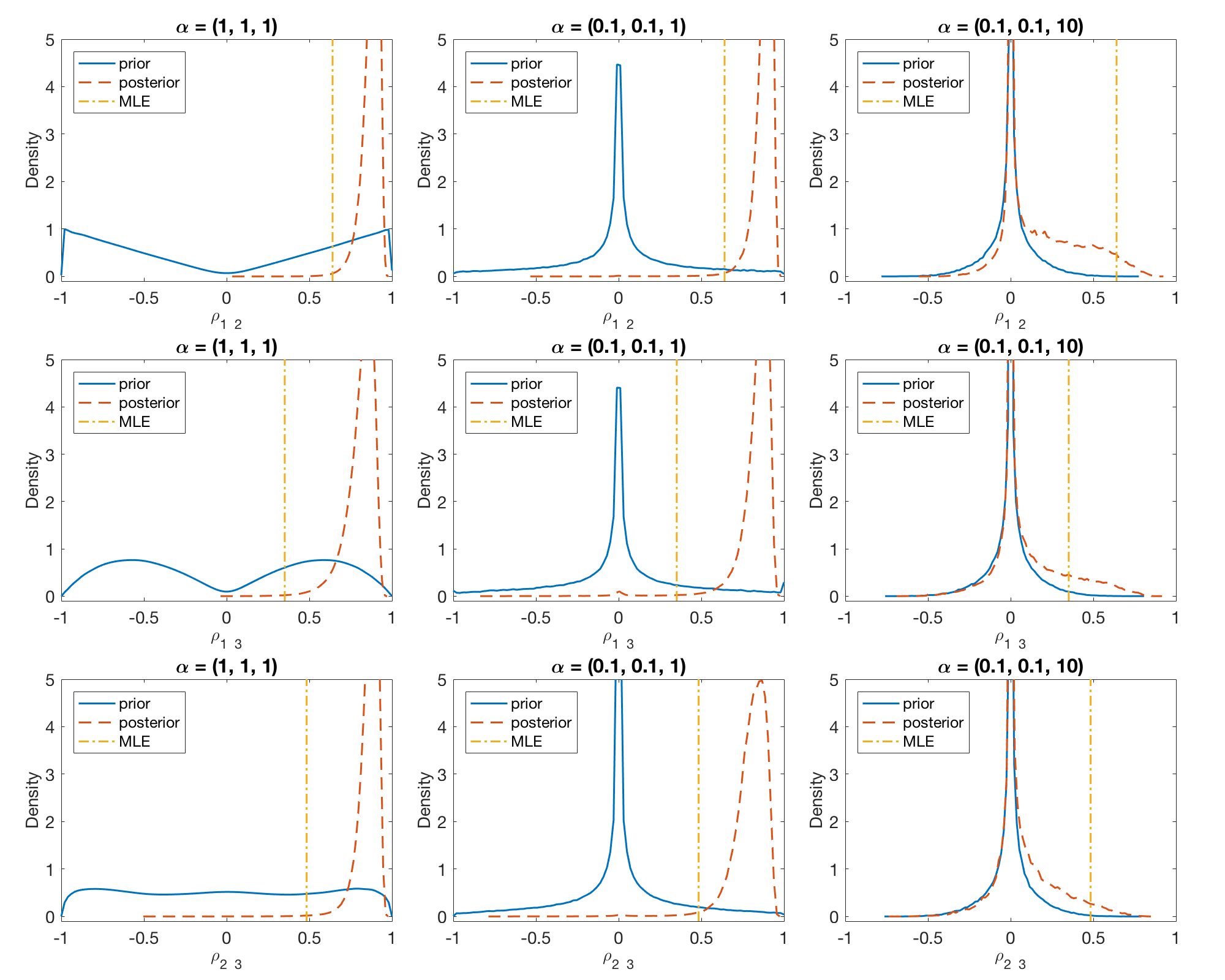} 
   \caption{Marginal posterior, prior (induced from squared-Dirichlet distribution) densities of correlations and MLEs with different settings for concentration parameter $\vect\alpha$, estimated with $10^6$ samples.}
   \label{fig:post_corr_asqdirpri}
\end{figure}

\subsubsection{Examining Flexibility of Priors} \label{sec:prieff}
We have studied several spherical priors for the Cholesky factor of correlation matrix proposed in Section \ref{sec:priors}.
Now we examine the flexibility of these priors in providing prior information for correlation with various parameter settings.
%Here we consider Cholesky decomposition $\vect\Rho = {\bf L} \tp{\bf L}$.
%The derivative of log-likelihood with respect to ${\bf L}$ has become
%\begin{equation*}
%\frac{\pa \ell}{\pa {\bf L}} = - \frac{N{\bf I}_D}{{\bf L}} + \mathrm{tril} ( \vect\Rho^{-1} \sum_{n=1}^N {\bf y}_n^* \tp{{\bf y}_n^*} {\bf L}^{-\mathsf T} )
%\end{equation*}

%\begin{figure}[t] %  figure placement: here, top, bottom, or page
%   \centering
%   \includegraphics[width=1\textwidth,height=.33\textwidth]{pri_corr_asymsqdir} 
%   \caption{Marginal densities of correlations $\rho_{ij}$ induced from squared-Dirichlet distribution with different settings for concentration parameter $\vect\alpha$, estimated with $10^6$ samples.}
%   \label{fig:pri_corr_asymsqdir}
%\end{figure}

With the same data generated according to \eqref{eq:conj-iwishart}, 
we now consider the squared-Dirichlet prior \eqref{eq:sqDir_density} for ${\bf L}$ in the structured model \eqref{eq:strt_corr} with the following setting
\begin{equation}\label{eq:nonconj-iwishart}
\begin{aligned}
\tau_i=\log(\sigma_i) &\sim \mathcal N(0,0.1^2), \quad i=1,\cdots, D\\
{\bf l}_i &\sim \sqDir (\vect\alpha_i), \quad \vect\alpha_i = (\alpha {\bm 1}_{i-1},\alpha_0), \quad i=2,\cdots, D
\end{aligned}
\end{equation}
where we consider three cases i) $\alpha =1,\,\alpha_0=1$; ii) $\alpha =0.1,\,\alpha_0=1$; iii) $\alpha =0.1,\,\alpha_0=10$.

We generate $10^6$ prior samples (according to \eqref{eq:nonconj-iwishart}) and posterior samples (by $\Delta$-SphHMC) for ${\bf L}$ respectively
and covert them to $\vect\Rho={\bf L}\tp{\bf L}$.
For each entry of $\rho_{ij}$, we estimate the marginal posterior (prior) density based on these posterior (prior) samples.
The posteriors, priors and maximal likelihood estimates (MLEs) of correlations $\rho_{ij}$ are plotted in Figure \ref{fig:post_corr_asqdirpri} for different $\vect\alpha$'s respectively.
In general, the posteriors are compromise between priors and the likelihoods (MLEs).
With more and more weight (through $\vect\alpha$) put around the poles (last component) of each factor sphere, the priors become increasingly dominant that the posteriors (red dash lines) almost fall on priors (blue solid lines) when $\vect\alpha=(0.1,0.1,10)$.
In this extreme case, the squared-Dirichlet distributions induce priors in favor of trivial (zero) correlations.
We have similar conclusion on Bingham prior and von Mises-Fisher prior but results are reported in
Section %\ref{apx:morepriors}
E.1
of the supplementary file.

\subsection{Simulated Periodic Processes}
In this section, we investigate the performance of our dynamic model \eqref{eq:strt_dyn_model} on the following periodic process example
\begin{equation}\label{eq:periodic}
\begin{aligned}
y(t) &\sim \mathcal N_D(\mu(t),\Sigma(t)), \quad \Sigma(t)=L(t)\tp{L(t)} \circ S, \quad t \in [0,2] \\
\mu_i(t) &= \sin(it\pi/D), \quad L_{ij}(t) = (-1)^i\sin(it\pi/D) (-1)^j\cos(jt\pi/D), \quad j\leq i=1,\cdots, D,\\
&\phantom{= \sin(it\pi/D), \;}\quad S_{ij} = (|i-j|+1)^{-1}, \quad i,j=1,\cdots, D
\end{aligned}
\end{equation}
Based on the model \eqref{eq:periodic}, we generate $M$ trials (process realizations) of data $y$ at $N$ evenly spaced points for $t$ in $[0,2]$, and therefore the whole data set $\{y(t)\}$ is an $M\times N\times D$ array.
We first consider $D=2$ to investigate the posterior contraction phenomena and the model flexibility; then we consider $D=100$ over a shorter period $[0,1]$ to show the scalability using the `$w$-band' structure.
%Figure \ref{fig:periodic_sim_data} shows one set of the $M$ simulated data points and underlying true mean and covariance functions with dimension $D=2$.

\begin{figure}[t] %  figure placement: here, top, bottom, or page
   \centering
   \includegraphics[width=.495\textwidth,height=.225\textwidth]{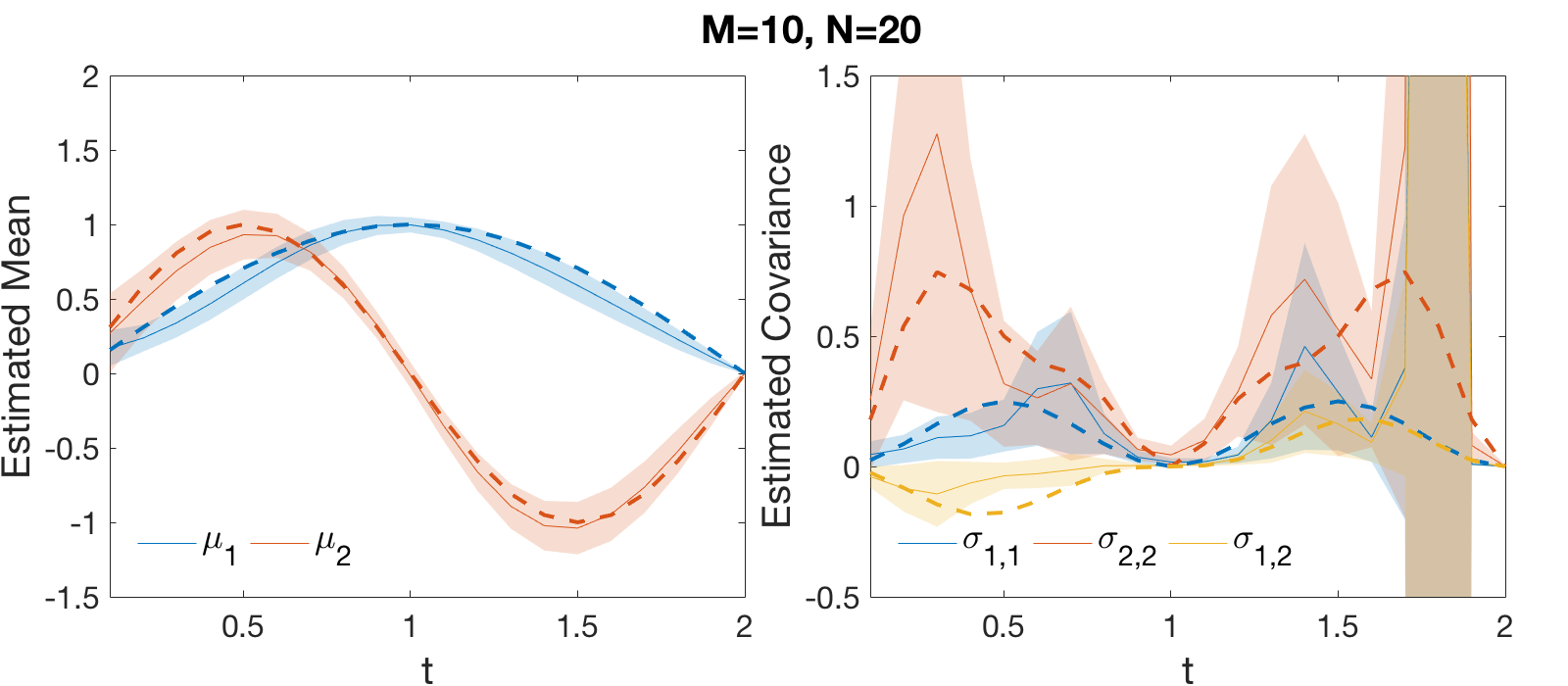}
   \includegraphics[width=.495\textwidth,height=.225\textwidth]{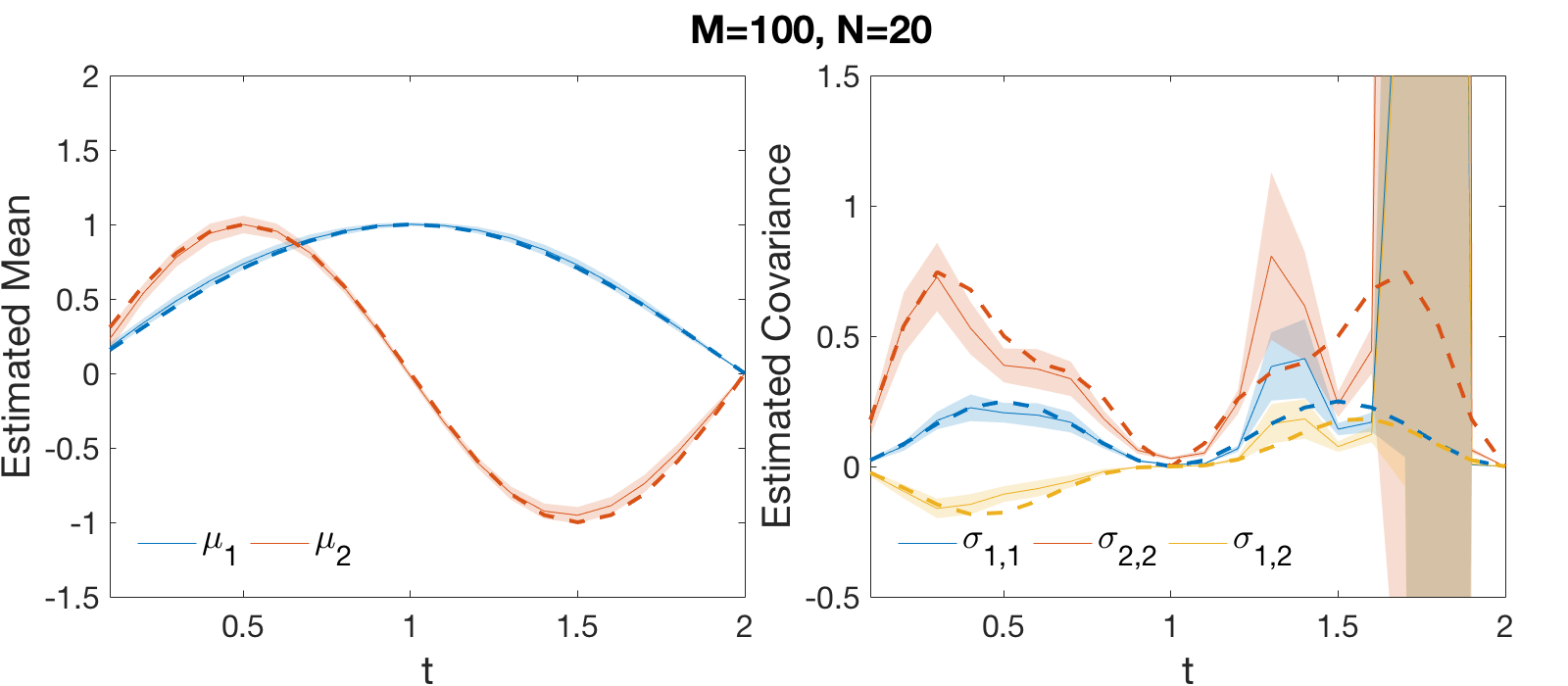}
   \includegraphics[width=.495\textwidth,height=.225\textwidth]{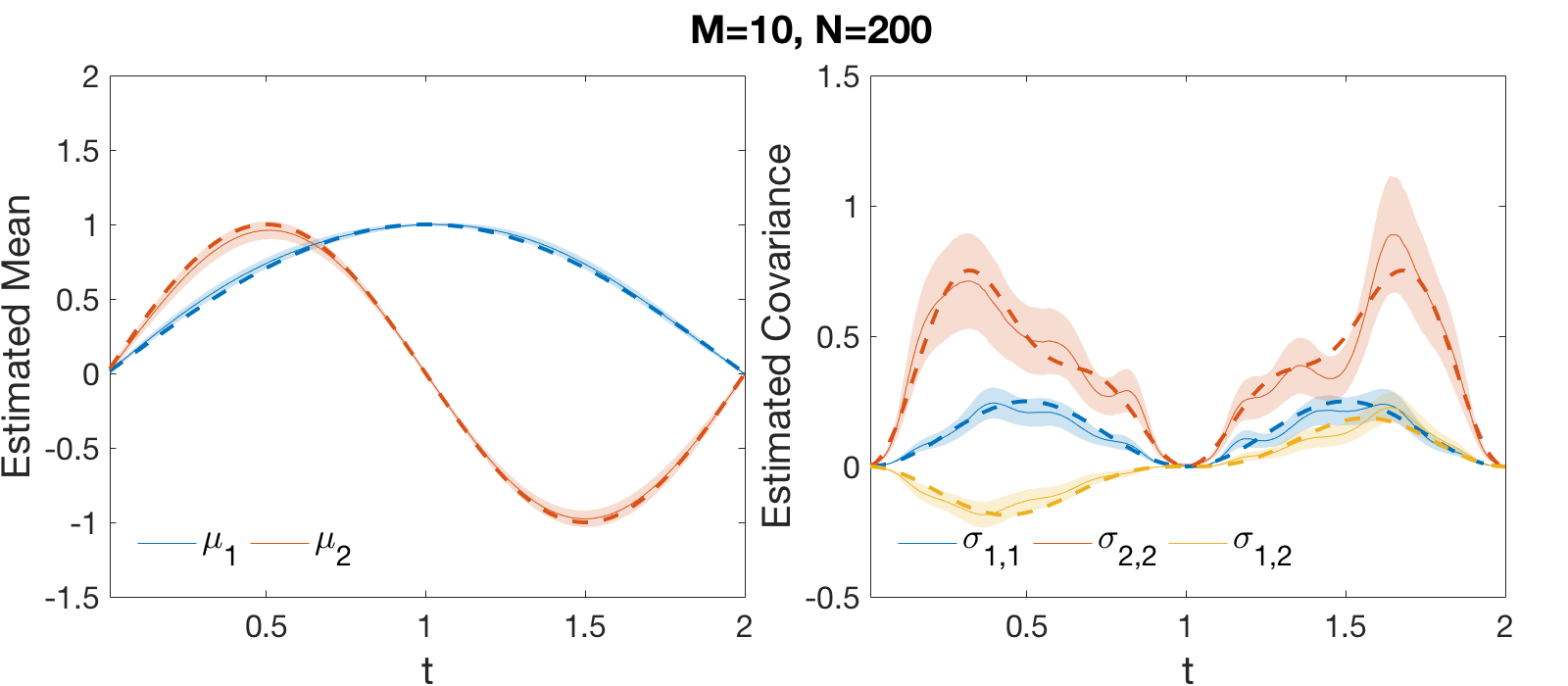}
   \includegraphics[width=.495\textwidth,height=.225\textwidth]{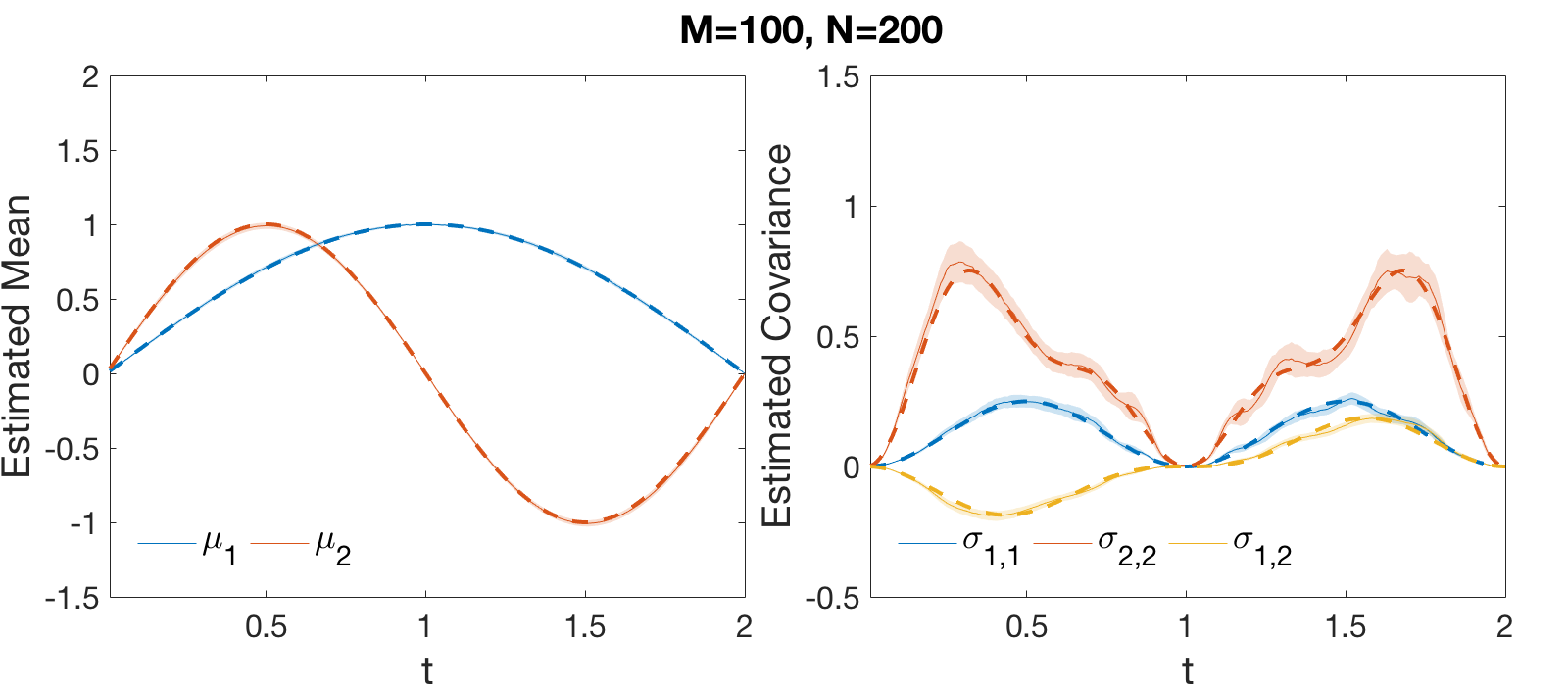}
   \caption{Estimation of the underlying mean functions $\mu_t$ (left in each of 4 subpannels) and covariance functions $\Sigma_t$ (right in each of 4 subpannels) of 2-dimensional periodic processes.
   		$M$ is the number of trials, and $N$ is the number of discretization points.
   		Dashed lines are true values, solid lines are estimates and shaded regions are $95\%$ credible bands.}
   \label{fig:periodic_postcntrt}
\end{figure}

\subsubsection{Posterior Contraction}\label{sec:pc_sim}
Posterior contraction describes the phenomenon that the posterior concentrates on smaller and smaller neighborhood of the true parameter (function) given more and more data \citep{vanderVaart08a}.
We investigate such phenomena in both mean functions and covariance functions in our model \eqref{eq:strt_dyn_model} using the following settings
$i) M=10, N=20; \; ii) M=100, N=20; \; iii) M=10, N=200; \; iv) M=100, N=200$.

To fit the data using the model \eqref{eq:strt_dyn_model}, we set $s=2$, $a=(1,1,1)$, $b=(0.1,10^{-3},0.2)$, $m=(0,0,0)$ for all settings, $V=(1,0.5,1)$ for $N=20$ and $V=(1,1,0.3)$ for $N=200$.
We also add an additional nugget of $10^{-5}I_n$ to all the covariance kernel of GPs to ensure non-degeneracy.
Following the procedure in Section \ref{sec:sample_procedure}, we run MCMC for $1.5\times 10^5$ iterations, burn in the first $5\times 10^4$ and subsample 1 for every 10. %to obtain $10^4$ posterior samples in the end. %We repeat this for each of the 4 settings. 
Based on the resulting $10^4$ posterior samples, we estimate the underlying mean functions and covariance functions and plot the estimates in Figure \ref{fig:periodic_postcntrt}.

Note in Figure \ref{fig:periodic_postcntrt}, 
both $M$ and $N$ have effect on the amount of data information thereafter on the posterior contraction but the contraction rate may depend on them differently.
Both mean and covariance functions have narrower credible bands for more discretization points $N$ (comparing $N=20$ in the first row with $N=200$ for the second row).
On the other hand, both posteriors contract further with more trials $M$ (comparing $M=10$ in the first column agains $M=100$ for the second column).
In general the posterior of mean function contracts to the truth faster than the posterior of covariance function.
With $M=100$ trials and $N=200$ discretization points, both mean and covariance functions are almost recovered by the model \eqref{eq:strt_dyn_model}.

\begin{figure}[t] %  figure placement: here, top, bottom, or page
   \centering
   \includegraphics[width=1\textwidth,height=.225\textwidth]{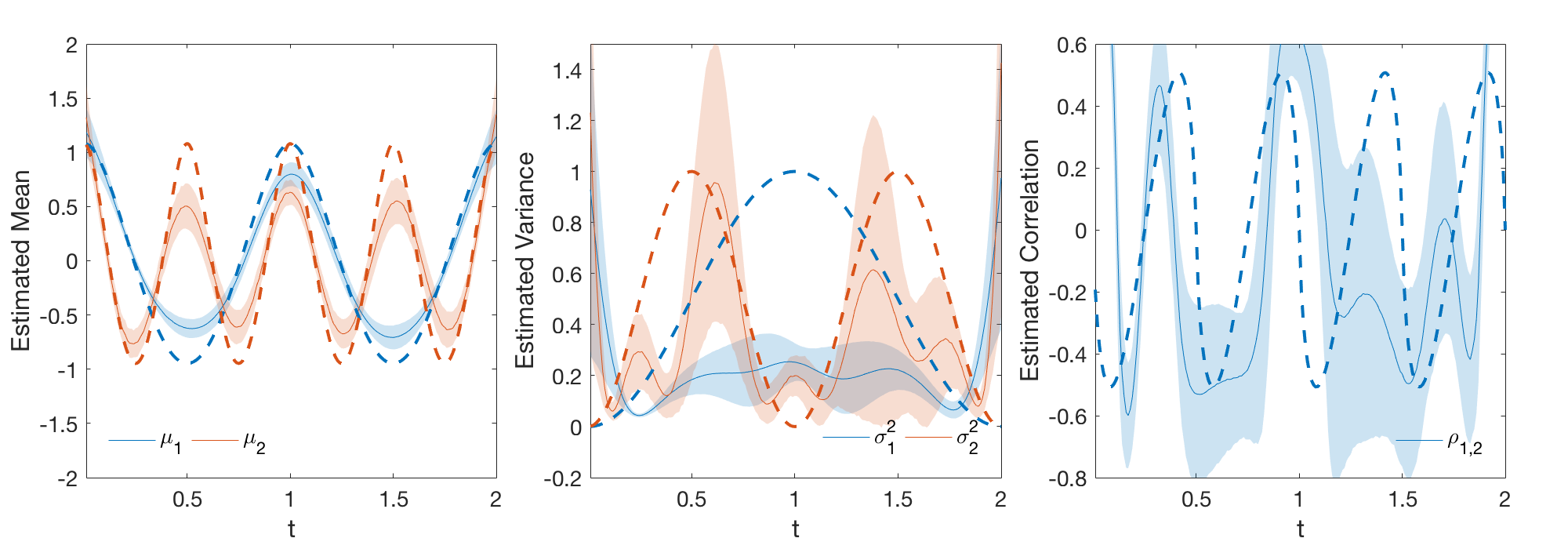}
   \includegraphics[width=1\textwidth,height=.225\textwidth]{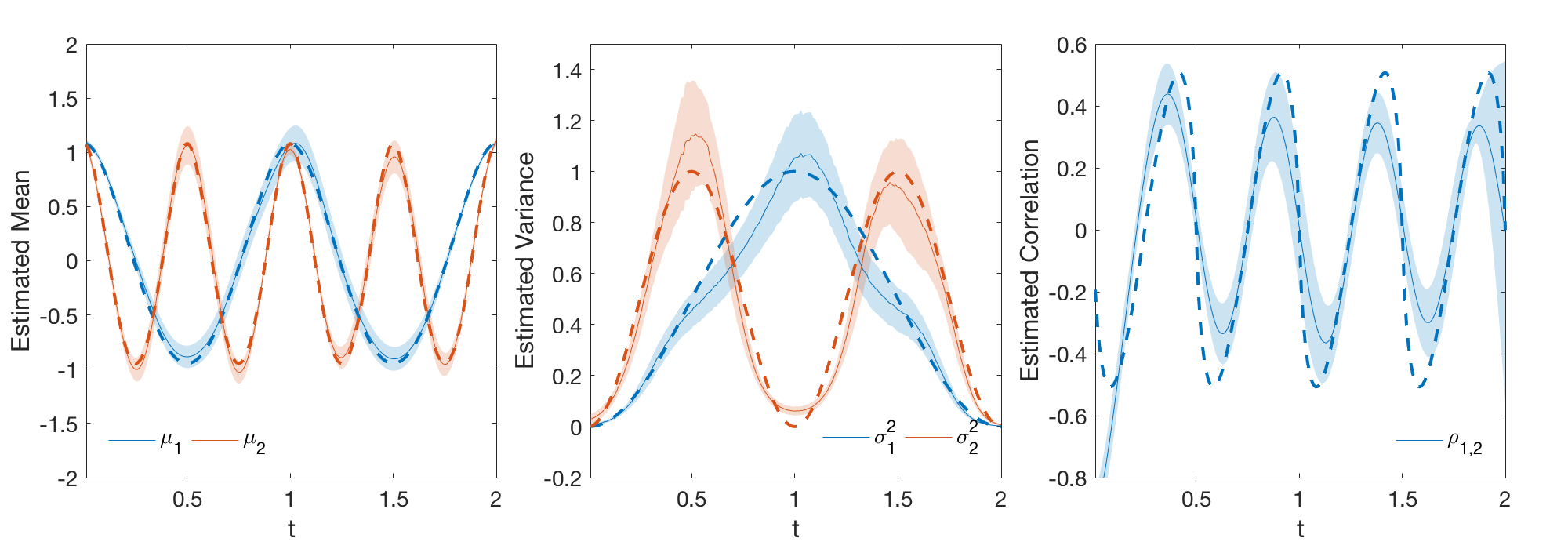}
   \caption{Estimation of the underlying mean functions $\mu_t$ (left column), variance functions $\sigma_t$ (middle column) and correlation function $\rho_t$ (right column) of 2-dimensional periodic processes,
   		using latent factor process model (upper row) and our flexible model (lower row), based on $M=10$ trials of data over $N=200$ evenly spaced points.
   		Dashed lines are true values, solid lines are estimates and shaded regions are $95\%$ credible bands.}
   \label{fig:periodic_constrast}
\end{figure}

\subsubsection{Full Flexibility}\label{sec:flexibility}
Our method \eqref{eq:strt_dyn_model} grants full flexibility because it models mean, variance and correlation processes separately.
This is particularly useful if they behave differently.
It contrasts with latent factor based models that tie mean and covariance processes together. One of the state-of-the-art models of this type is Bayesian nonparametric covariance regression \citep{fox15}:
%\begin{equation*}
%y = \Lambda(x) \eta + \eps, \qquad \eta \sim N(\psi(x), I), \quad \eps \sim N_D(0, \Sigma_0)
%\end{equation*} 
%which is equivalent to
\begin{equation}\label{eq:lpf_bncr}
y(x) \sim \mathcal N_D(\mu(x),\Sigma(x)), \qquad \mu(x) = \Lambda(x) \psi(x), \quad \Sigma(x) = \Lambda(x) \tp{\Lambda(x)} + \Sigma_0
\end{equation}
We tweak the simulated example \eqref{eq:periodic} for $D=2$ to let mean and correlation processes have higher frequency than variance processes, as shown in the dashed lines in Figure \ref{fig:periodic_constrast}. We generate $M=10$ trials of data over $N=200$ evenly spaced points.
In this case, the true mean processes $\mu(x)$ and true covariance processes $\Sigma(x)$ behave differently but are modeled with a common loading matrix $\Lambda(x)$ in model \eqref{eq:lpf_bncr}. This imposes difficulty on \eqref{eq:lpf_bncr} to have a latent factor process $\psi(x)$ that could properly accommodate the heterogeneity in mean and covariance processes.
Figure \ref{fig:periodic_constrast} shows that 
due to this reason, latent factor based model \eqref{eq:lpf_bncr} (upper row) fails to generate satisfactory fit for all of the mean, variance and correlation processes. Our fully flexible model \eqref{eq:strt_dyn_model} (bottom row), on the contrary, successfully produces more accurate characterization for all of them.
Note that this artificial example is used to demonstrate the flexibility of our dynamic model \eqref{eq:strt_dyn_model}. For cases that are not as extreme, \eqref{eq:lpf_bncr} may performance equally well. %, or even better, especially for large dimensions when $D\gg 2$. 
See more discussion in Section \ref{sec:conclusion} and more details in 
Section %\ref{apx:more_latentfactor}
E.2
in the supplementary file.

\begin{figure}[t] %  figure placement: here, top, bottom, or page
   \centering
   \includegraphics[width=1\textwidth,height=.3\textwidth]{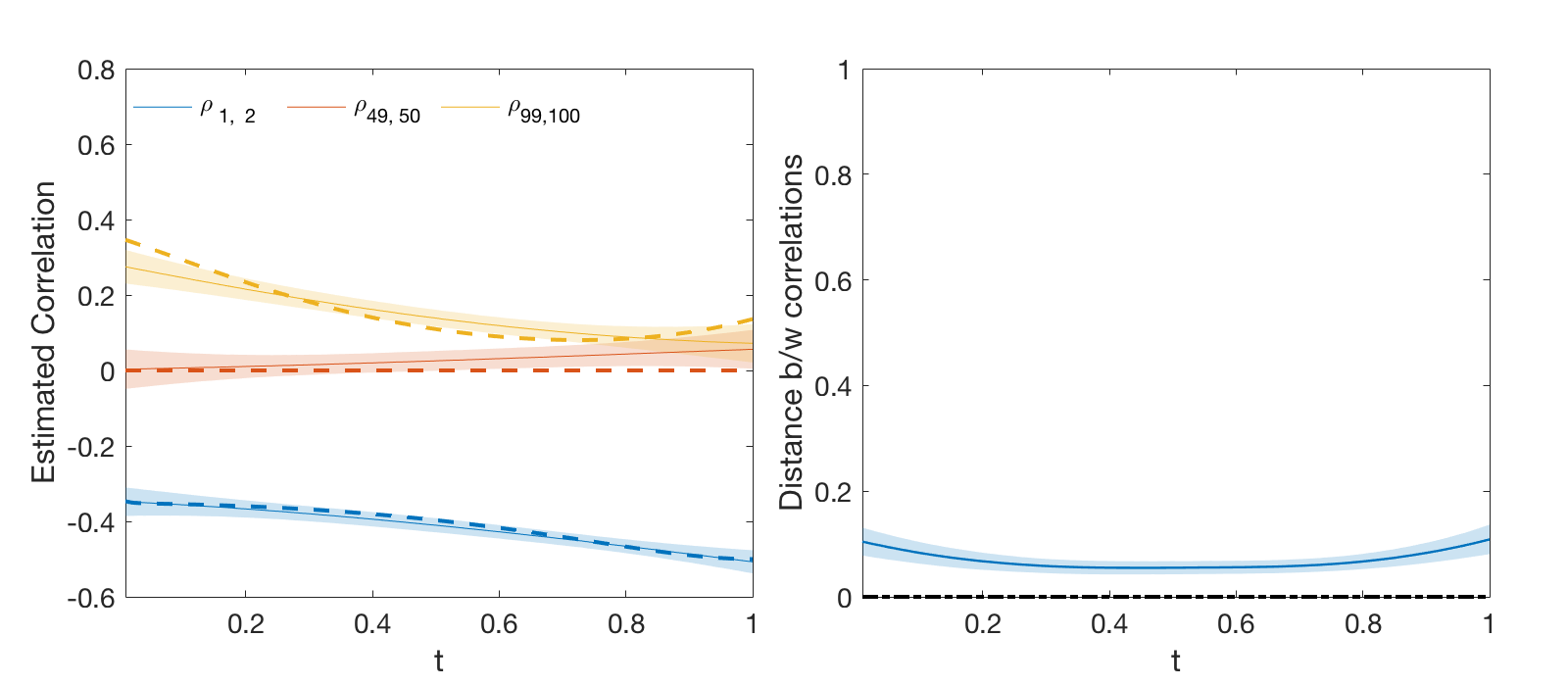}
   \caption{Posterior estimation of the underlying correlation functions $P_t$ (left) and its 2-norm distance to the truth (right) of 100-dimensional periodic processes with 2-band structure,
   		based on $M=100$ trials of data over $N=100$ discretization points.
   		Dashed lines are true values, solid lines are estimates and shaded regions are $95\%$ credible bands.}
   \label{fig:periodic_wbd_post}
\end{figure}

\subsubsection{Scalability}\label{sec:scalability}
Now we use the same simulation model \eqref{eq:periodic} for $D=100$ dimensions to test the scalability of our dynamic model \eqref{eq:strt_dyn_model}.
However instead of the full covariance, we only consider a diagonal covariance matrix plus 4 non-zero off-diagonal entries $\sigma_{1,2}$ ($\sigma_{2,1}$) and $\sigma_{99,100}$ ($\sigma_{100,99}$).
We focus on the correlation process in this example thus set $\mu_t\equiv 0$ and $\sigma_t\equiv 1$, for $t\in[0,1]$.
More specifically when generating data $\{y_t\}$ with \eqref{eq:periodic}, if $i\notin\{2,100\}$ we set $i$-th rows $L_i=S_i=e_i$ with $e_i$ being the $i$-th row of identity matrix.
%The true covariance $\Sigma_t$ at each time $t\in[0,1]$ has the following structure:
%\begin{equation*}
%\begin{bmatrix}
%1 & * & & & \\
%* & 1 & 0 & & \\
%  & 0 & \ddots & 0 & \\
%  &  & 0 & 1 & * \\
%  &  & & * & 1 \\
%\end{bmatrix}
%\end{equation*}

To apply our dynamical model \eqref{eq:strt_dyn_model} in this setting, we let ${\bf L}_t$ have `$w$-band' structure with $w=2$ at each time $t$.
Setting $s=2$, $a=1$, $b=0.1$, $m=0$ and $V=10^{-3}$, $N=100$ and $M=100$, we repeat the MCMC runs for $7.5\times 10^4$ iterations, burn in the first $2.5\times 10^4$ and subsample 1 for every 10 to obtain $5\times10^3$ posterior samples in the end. Based on those samples, we estimate the underlying correlation functions and only plot $\rho_{1,2}$, $\rho_{49,50}$ and $\rho_{99,100}$ in Figure \ref{fig:periodic_wbd_post}.
With the `$w$-band' structure, we have less entries in the covariance matrix and focus on the `in-group' correlation. 
Our dynamical model \eqref{eq:strt_dyn_model} is sensitive enough to discern the informative non-zero components from the non-informative ones in these correlation functions. 
Unit-vector GP priors provide flexibility for the model to capture the changing pattern of informative correlations.
The left panel of Figure \ref{fig:periodic_wbd_post} shows that the model \eqref{eq:strt_dyn_model} correctly identify the non-zero components $\rho_{1,2}$ and $\rho_{99,100}$ and characterize their evolution.
The right panel shows that the 2-norm distance between the estimated and true correlation matrices, $\Vert \widehat P(t)-P(t)\Vert_2$, is small, indicating that our dynamic model \eqref{eq:strt_dyn_model} performs well with higher dimension in estimating complex dependence structure among multiple stochastic processes.

%\begin{figure}[t] %  figure placement: here, top, bottom, or page
%   \centering
%   \includegraphics[width=1\textwidth,height=.5\textwidth]{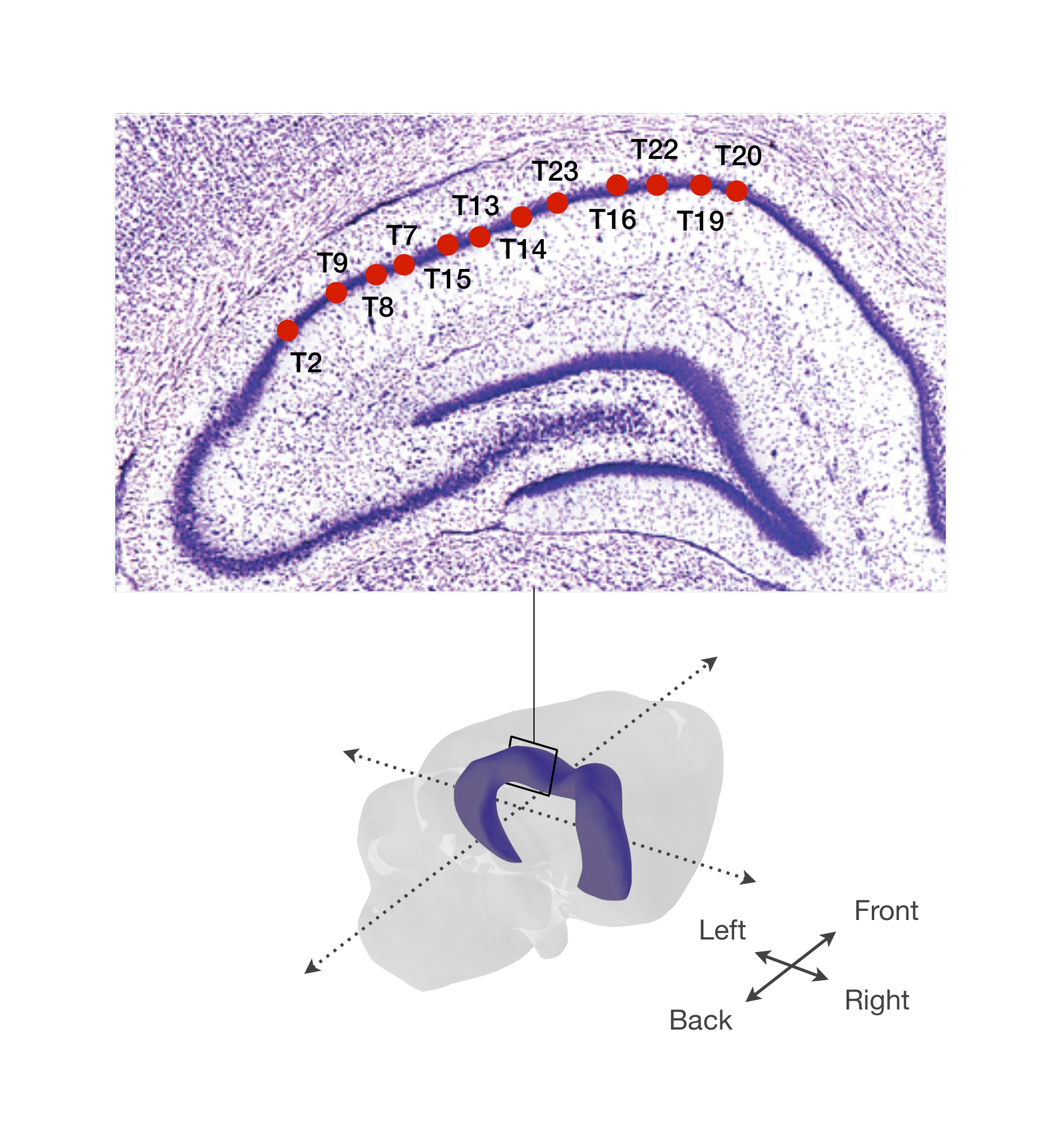} 
%   \caption{Locations of recorded LFP signals. 
%   \textcolor{red}{Check with Prof. Fortin if we need to cite another paper for this figure.}}
%   \label{fig:LFP_tetrode_locations}
%\end{figure}

\section{Analysis of Local Field Potential Activity}\label{sec:lfp}
Now we use the proposed model \eqref{eq:strt_dyn_model} to analyze a local field potential (LFP) activity dataset. The goal of this analysis is
%The overarching goal of this work is 
to elucidate how memory encoding, retrieval and decision-making arise from functional interactions among brain regions, by modeling how their dynamic connectivity varies during performance of complex memory tasks. Here we focus on LFP activity data recorded from 24 electrodes spanning the dorsal CA1 subregion of the hippocampus as rats performed a  sequence memory task \citep{allen14,allen16,ng17,holbrook17}. The task involves repeated presentations of a sequence of odors (e.g., ABCDE) at a single port and requires rats to correctly determine whether each odor is presented `in sequence' (InSeq; e.g., ABCDE; by holding their nosepoke response until the signal at 1.2s) or `out of sequence' (OutSeq; e.g., AB\underline{D}DE; by withdrawing their nose before the signal). In previous work using the same dataset, \cite{holbrook16} used a direct MCMC algorithm %named positive definite Hamiltonian Monte Carlo (PDHMC) 
to study the spectral density matrix of LFP from 4 selected channels. However, they did not examine how their correlations varied across time and recording site. These limitations are addressed in this paper. 

\begin{figure}[t] %  figure placement: here, top, bottom, or page
   \centering
   \includegraphics[width=1\textwidth,height=.45\textwidth]{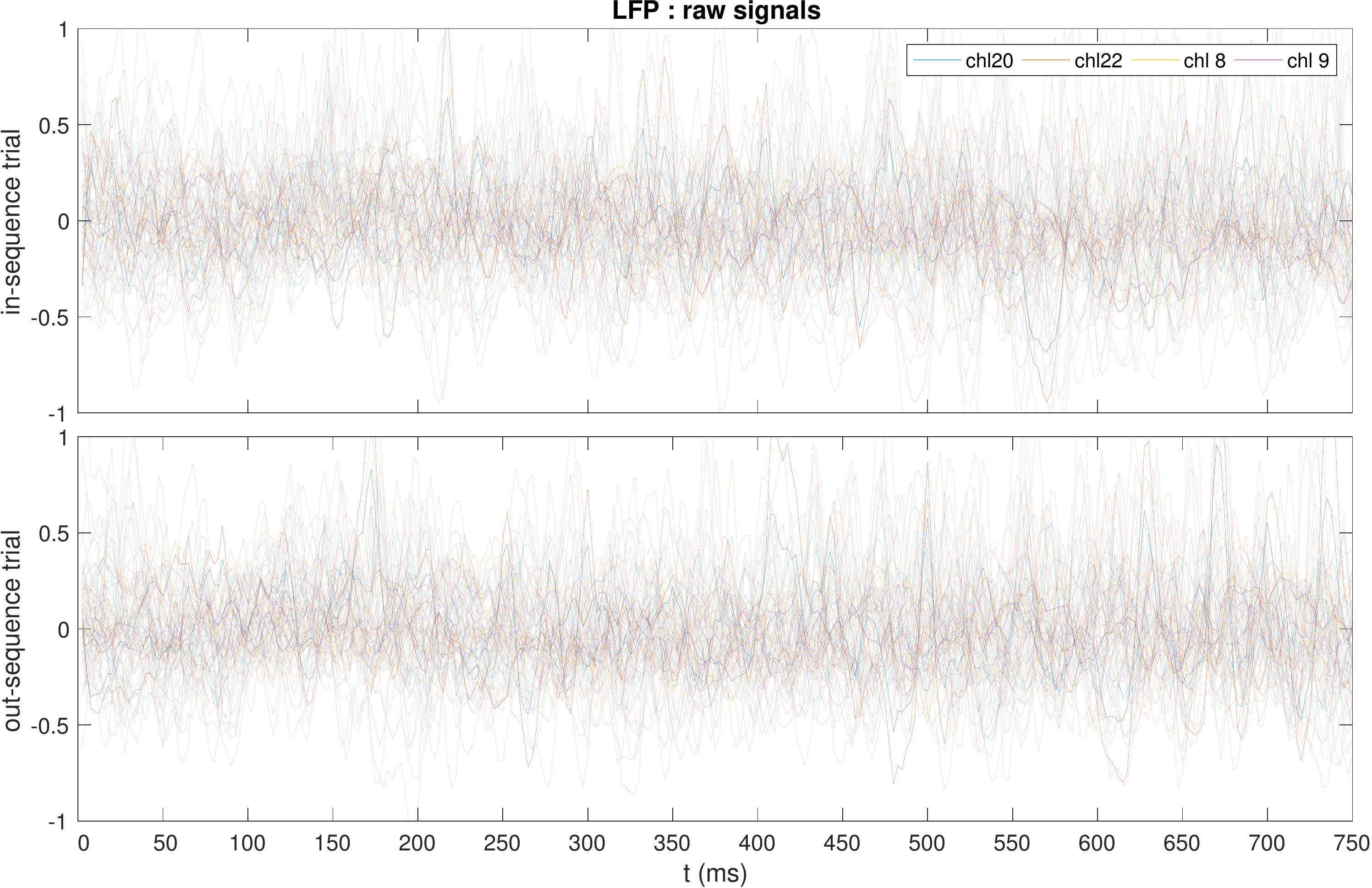} 
   \caption{LFP signals on ``in sequence" and ``out of sequence" trials. It is difficult to identify differences between the two conditions based on a mere visual inspection of the LPFs.}
   \label{fig:LFP_raw_data}
\end{figure}

We focus our analyses on the time window from 0ms to 750ms (with 0 corresponding to when the rat's nose enters the odor port). Critically, this includes a time period during which the behavior of the animal is held constant (0-500ms) so differences in LFP reflect the cognitive processes associated with task performance, and, to serve as a comparison, a time period near 750ms during which the behavioral state of the animal is known to be different (i.e., by 750ms the animal has already withdrawn from the port on the majority of OutSeq trials, but is still in the port on InSeq trials). We also focus our analyses on two sets of adjacent electrodes (electrodes 20 and 22, and electrodes 8 and 9), which allows for comparisons between probes that are near each other ($<$1mm; i.e., 20:22 and 8:9) or more distant from each other ($>$2mm; i.e., 20:8, 20:9, 22:8, and 22:9). Figure \ref{fig:LFP_raw_data} shows $M = 20$ trials of these LFP signals from $D = 4$ channels under both InSeq and OutSeq conditions. 
Our main objective is to quantify how correlations among these LFP channels varied across trial types (InSeq vs OutSeq) and over time (within the first 750ms of trials). To do so, we discretize the time window of $0.75$ seconds into $N = 300$ equally-spaced small intervals. Under each experiment condition (InSeq or OutSeq), we treat all the signals as a 4 dimensional time series and fit them using our proposed dynamic correlation model \eqref{eq:strt_dyn_model} in order to discover the evolution of their relationship. Note that we model the mean, variance, and correlation processes separately but only report findings about the evolution of correlation among those brain signals.

\begin{figure}[t] %  figure placement: here, top, bottom, or page
   \centering
   \includegraphics[width=1\textwidth,height=.4\textwidth]{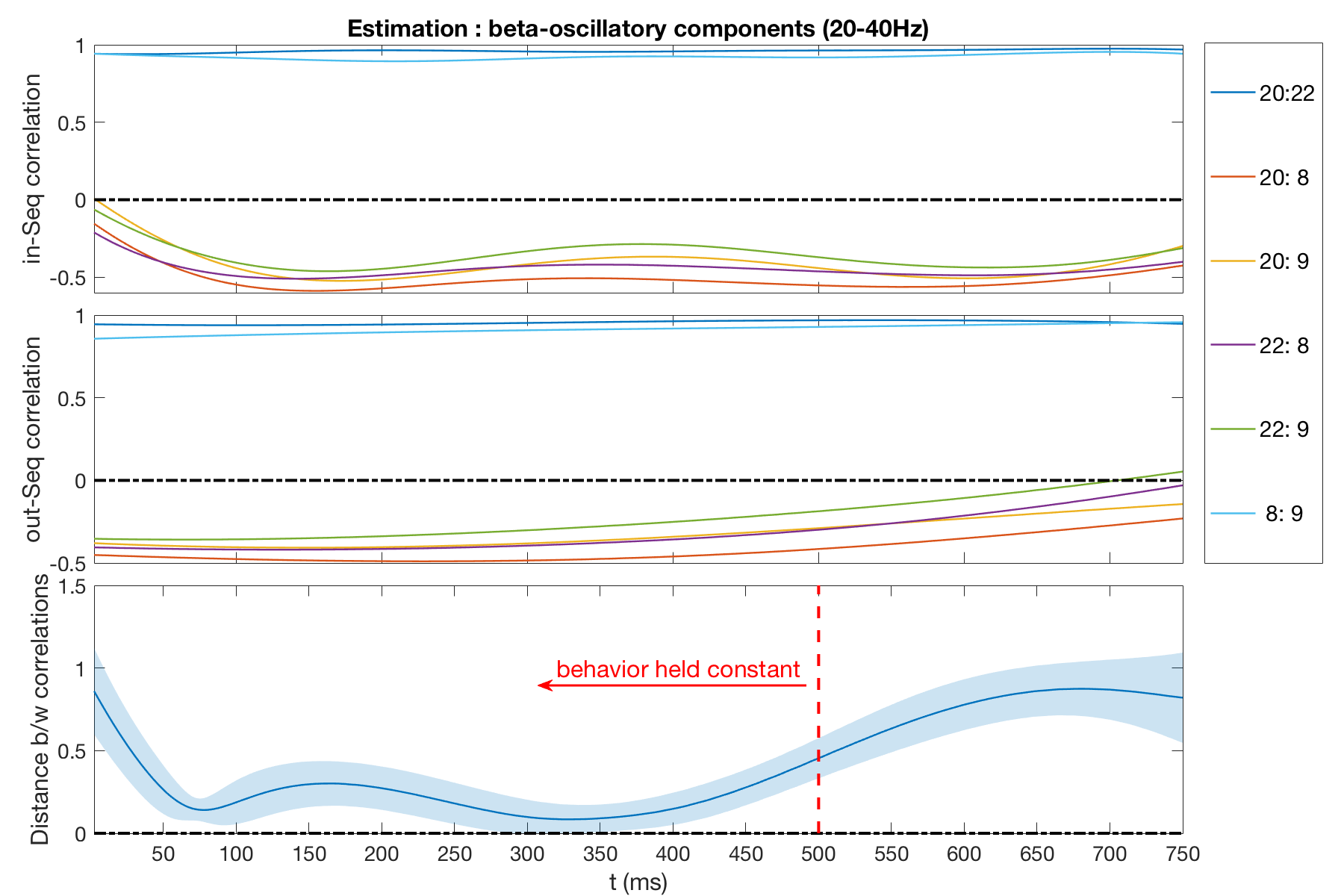}
   \caption{Estimated correlation processes of LFPs (beta) under in-sequence condition (top), out-of-sequence condition (middle) and the (Frobenius) distance between two correlation matrices (bottom).}
                 %Solid lines are estimates and shaded regions are $95\%$ credible bands.}
   \label{fig:LFP_est_beta}
\end{figure}

We set $s=2$, $a=(1,1,1)$, $b=(1,0.1,0.2)$, $m=(0,0,0)$, $V=(1,1.2,2)$; and the general results are not very sensitive to the choice of these fine-tuning parameters. We also scale the discretized time points into $(0,1]$ and add an additional nugget of $10^{-5}I_n$ to the covariance kernel of GPs. We follow the same procedure in Section \ref{sec:sample_procedure} to collect $7.5\times 10^4$ samples, burn in the first $2.5\times 10^4$ and subsample 1 for every 10. The resulting $10^4$ samples yield estimates of correlation processes as shown in Figure \ref{fig:LFP_est_beta} for beta-filtered traces (20-40Hz) but similar patterns were also observed for theta-filtered traces (4-12Hz; see the supplement). The bottom panel of Figure \ref{fig:LFP_est_beta} shows the dissimilarity between correlation processes under different conditions measured by the Frobenius norm of their difference.% between the two correlation matrices.

Our approach revealed many important patterns in the data. First, it showed that electrodes near each other (20:22 and 8:9) displayed remarkably high correlations in their LFP activity on InSeq and OutSeq trials, whereas correlations were considerably lower among more distant electrodes (20:8, 20:9, 22:8, and 22:9). Second, it revealed that the correlations between InSeq and OutSeq matrices evolved during the presentation of individual trials. 
These results are consistent with other analyses on learning \citep[see, e.g.,][]{fiecas16}.
As expected, InSeq and OutSeq activity was very similar at the beginning of the time window (e.g., before 350ms), which is before the animal has any information about the InSeq or OutSeq status of the presented odor, but maximally different at the end of the time window, which is after it has made its response on OutSeq trials. Most important, however, is the discovery of InSeq vs OutSeq differences before 500ms, which reveal changes in neural activity associated with the complex cognitive process of identifying if events occurred in their expected order. These findings highlight the sensitivity of our novel approach, as such differences have not been detected with traditional analyses.
%The limitation of this current analysis is that it is conducted on a single rat. The proposed model will be generalized to account for variation among rats.
Interested readers can find more results about all the 12 channels in Section %\ref{apx:moreLFP}
E.3
of the supplementary file.

%%%%%%%%%%%%%%%%%%%%%%%%%%%%%%%%%%%%%%%%%%%%
\section{Conclusion}\label{sec:conclusion}
In this paper, we propose a novel Bayesian framework that grants full flexibility in modeling covariance and correlation matrices.
%The method is motivated by the statistical interpretation of covariances in terms of variances and correlations. 
It extends the separation strategy proposed by \cite{barnard00} and uses the Cholesky decomposition to maintain the positive definiteness of the correlation matrix.
By defining distributions on spheres, a large class of flexible priors can be induced for covariance matrix that go beyond the commonly used but restrictive inverse-Wishart distribution.
Furthermore, the structured models we propose maintain the interpretability of covariance in terms of variance and correlation.
Adaptive $\Delta$-Spherical HMC is introduced to handle the intractability of the resulting posterior.
Furthermore, we extend this structured scheme to dynamical models to capture complex dependence among multiple stochastic processes,
and demonstrate the effectiveness and efficiency in Bayesian modeling covariance and correlation matrices using a normal-inverse-Wishart problem, a simulated periodic process, and an analysis of LFP data.
In addition, we provide both theoretic characterization and empirical investigation of posterior contraction for dynamically covariance modeling, which to our best knowledge, is a first attempt.

In this work, we consider the \emph{marginal (pairwise)} dependence among multiple stochastic processes.
The priors for correlation matrix specified through the sphere-product representation are in general dependent among component variables. For example, the method we use to induce uncorrelated prior between $y_i$ and $y_j$ ($i<j$) by setting $l_{jk} \approx 0$ for $k\leq i$ has a direct consequence that $\Cor(y_{i'}, y_j)\approx 0$ for $i'\leq i$.
In another word, more informative priors (part of the components are correlated) may require careful ordering in $\{y_i\}$.
To avoid this issue, one might consider the inverse of covariance (precision) matrices instead. This leads to modeling the \emph{conditional} dependence, or \emph{Markov network} \citep{dempster72,friedman08}.
Our proposed methodology applies directly to (dynamic) precision matrices/processes, which will be our future direction.

To further scale our method to problems of greater dimensionality in future,
one could explore the low-rank structure of covariance and correlation matrices,
e.g. by adopting the similar factorization as in \citep{fox15} and assuming $\tp\vech({\bf L}_t)\in (\mathcal S^k)^D$ for some $k\ll D$,
or impose some sparse structure on the precision matrices.

%The vector Gaussian process is systematically defined through matrix normal distribution in this paper. We found a similar definition for a specific Brownian motion in \cite{cybis15}.
%%It is novel, to the best of our knowledge, to systematically define the vector Gaussian process through the matrix normal distribution.
%In the current work, we only consider unit cross covariance matrix ${\bf V}={\bf I}$; there is possibility for theoretic exploration in non-unit cross-covariances to better capture the relationships between component processes.
%Another future direction is to generalize our dynamic modeling of multivariate time series to dynamical regression on covariates $x\in \mathcal X$. It would be interesting to investigate how they influence the response along the process.

%To our best knowledge, we have made the first attempt to study the posterior contraction in the setting of dynamic covariance modeling with GP. 
We have proved that the posterior of covariance function contracts at a rate given by the general form of concentration function \citep{vanderVaart08a}. 
Empirical evidence (Section \ref{sec:pc_sim}) shows that the posterior of covariance contracts slower than that of mean. More theoretical works is needed to compare their contraction rates.
Also, future research could involve investigating posterior contraction in covariance regression with respect to the optimal rates under different GP priors.

While our research has generated interesting new findings regarding brain signals during memory tasks, one limitation of our current analysis on LFP data is that it is conducted on a single rat. The proposed model can be generalized to account for variation among rats.
In the future, we will apply this sensitive approach to other datasets, including simultaneous LFP recordings from multiple brain regions in rats as well as BOLD fMRI data collected from human subjects performing the same task.

%\textcolor{red}{This section needs some more extension.}

%\begin{figure}[htbp] %  figure placement: here, top, bottom, or page
%   \centering
%   \includegraphics[width=.49\textwidth,height=.4\textwidth]{LFP_raw_data_Lintl750_Nintl300}
%   \includegraphics[width=.49\textwidth,height=.4\textwidth]{}
%   \caption{Results of LFP raw signals: data (left), estimation of correlations (right). \textcolor{red}{To be removed.}}
%   \label{fig:LFP_theta}
%\end{figure}
%
%\begin{figure}[htbp] %  figure placement: here, top, bottom, or page
%   \centering
%   \includegraphics[width=.49\textwidth,height=.4\textwidth]{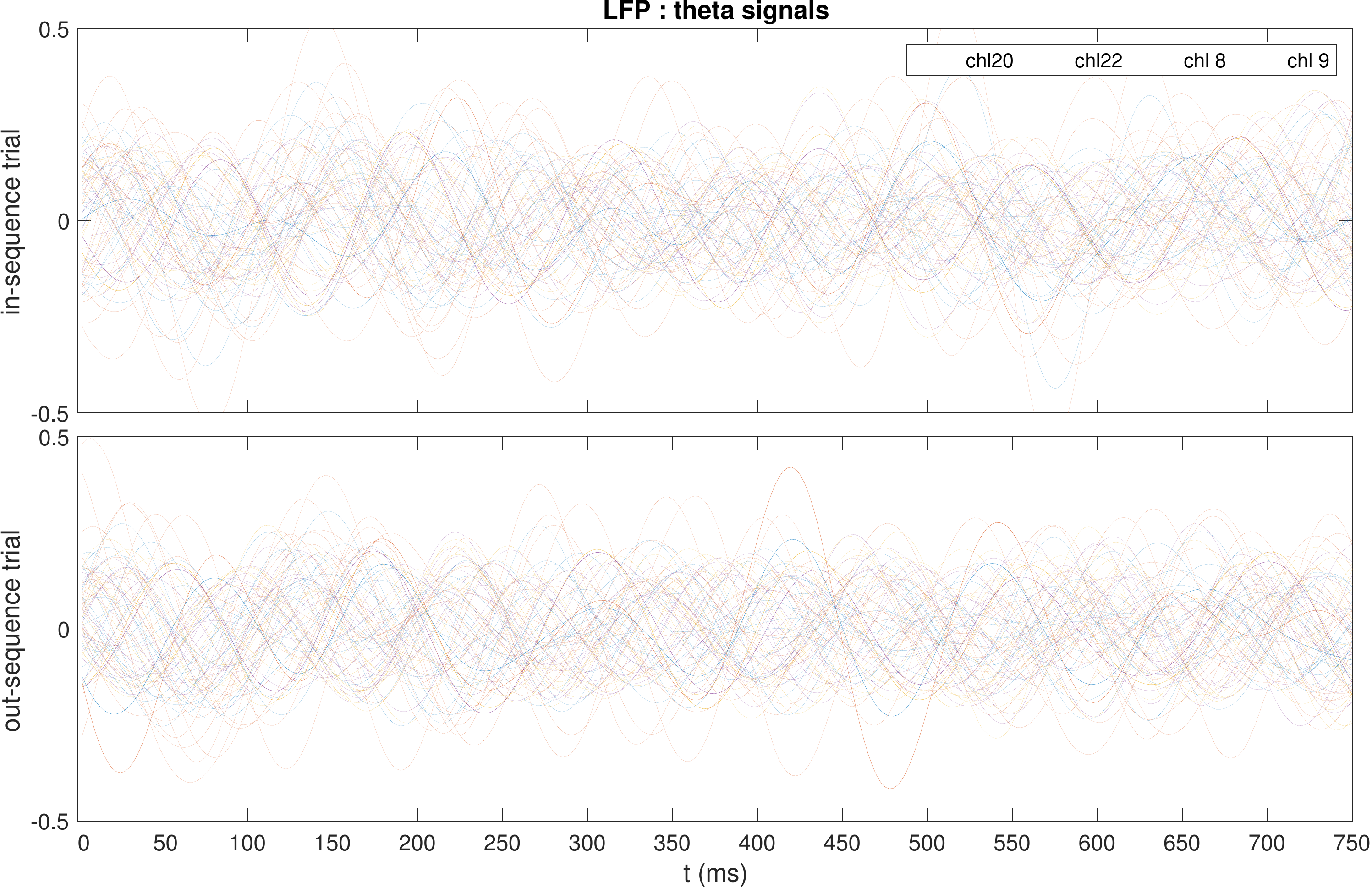}
%   \includegraphics[width=.49\textwidth,height=.4\textwidth]{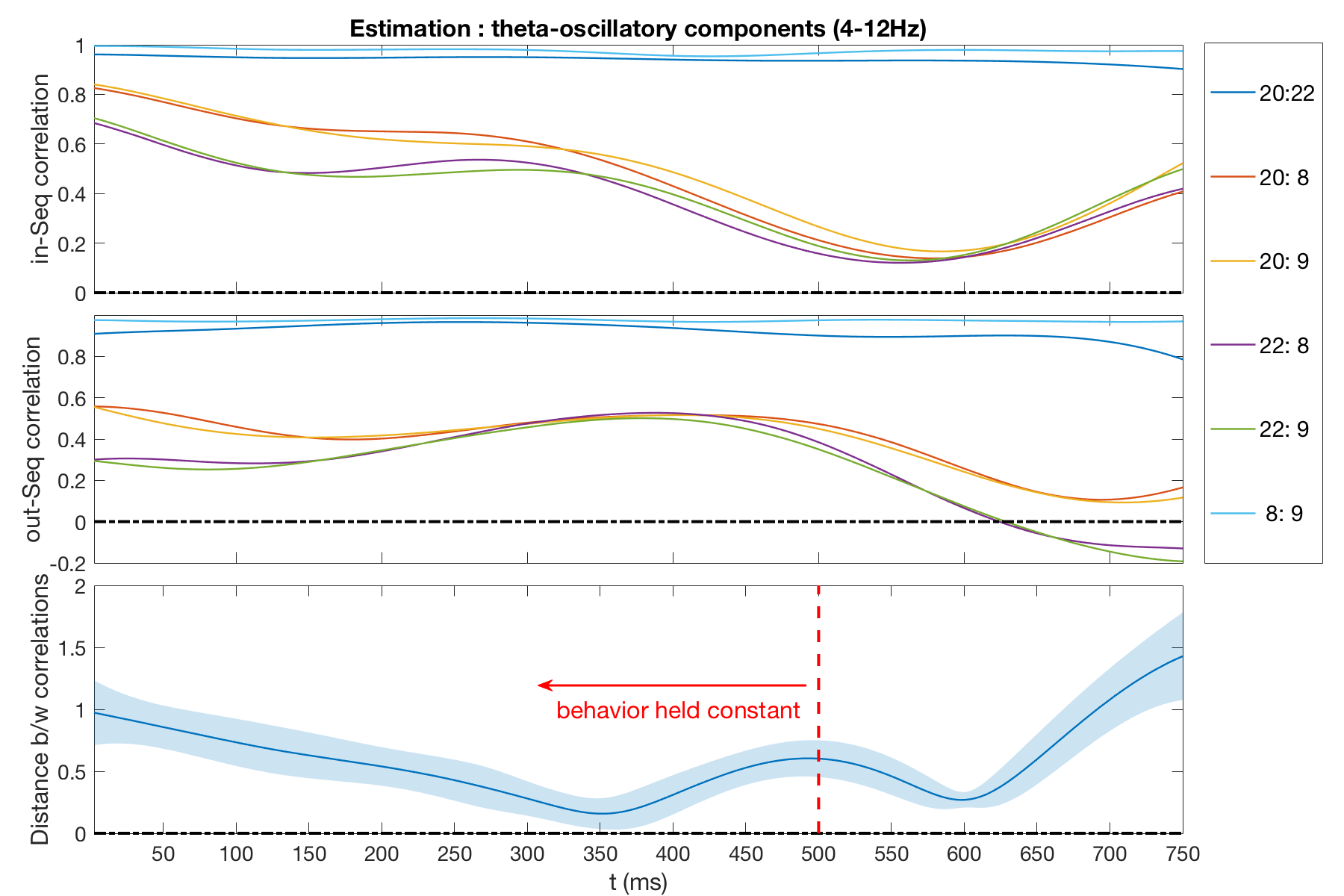}
%   \caption{Results of LFP theta signals: data (left), estimation of correlations (right). \textcolor{red}{To be removed.}}
%   \label{fig:LFP_theta}
%\end{figure}

%%%%%%%%%%%%%%%%%%%%%%%%%%%%%%%%%%%%%%%%%%%%
\section*{Acknowledgement}
SL is supported by ONR N00014-17-1-2079. 
AH is supported by NIH grant T32 AG000096.
GAE is supported by NIDCD T32 DC010775.
NJF is supported by NSF awards IOS-1150292 and BCS-1439267 and Whitehall Foundation award 2010-05-84.
HO is supported by NSF DMS 1509023 and NSF SES 1461534.
BS is supported by NSF DMS 1622490 and NIH R01 AI107034.
%We thank Tian Chen for the help with understanding the LFP data.
We thank Yulong Lu at Duke for discussions on posterior contraction of covariance regression with Gaussian processes.

\newpage
%%%%%%%%%%%%%%%%%%%%%%%%%%%%%%%%%%%%%%%%%%
\bibliography{references}
\bibliographystyle{apa}

%**********************************************************************************************************************************************************************************************%

%\begin{comment}
%%%%%%%%%%%%%%%%%%%%%%%%%%%%%%%%%%%%%%%%%%%%
\newpage
\begin{center}
{\large\bf SUPPLEMENTARY MATERIAL}
\end{center}

\appendix
\numberwithin{equation}{section}
\numberwithin{figure}{section}
\numberwithin{lem}{section}
\numberwithin{thm}{section}
\numberwithin{dfn}{section}
\numberwithin{rk}{section}

\section{Connection to Known Priors}\label{apx:conn2knownpriors}
The following lemma is essential in proving that our proposed methods \eqref{eq:strt_cov} \eqref{eq:strt_corr} generalize existing methods in specifying priors,
including the inverse-Wishart distribution, and two uniform distributions \citep{barnard00} as well.
\begin{lem}\label{lem:chol_jacobians}
Let $\vect\Sigma = {\bf U} \tp{\bf U}$ be the reversed Cholesky decomposition of $\vect\Sigma$. The Jocobian of the transformation $\vect\Sigma\mapsto {\bf U}$ is
\begin{equation*}
\left| \frac{d_h \vect\Sigma}{d_h \tp{\bf U}} \right| := \left| \frac{\pa \vech \vect\Sigma}{\pa \vech \tp{\bf U}} \right| = 2^D \prod_{i=1}^D |u_{ii}|^i
\end{equation*}
Let $\vect\Rho={\bf L}\tp{\bf L}$ be the Cholesky decomposition of $\vect\Rho$. The Jacobian of the transformation ${\bf L}\mapsto \vect\Rho$ is
\begin{equation*}
\left| \frac{d_h {\bf L}}{d_h \vect\Rho} \right| := \left| \frac{\pa \vech {\bf L}}{\pa \vech \vect\Rho} \right|  = 2^{-D} \prod_{i=1}^D |l_{ii}|^{i-(D+1)}
\end{equation*}
\end{lem}
\begin{proof}
Note we have
\begin{equation*}
d \vect\Sigma = d {\bf U}\tp{\bf U} + {\bf U} d\tp{\bf U}
\end{equation*}
Taking $\VEC$ on both sides and applying its property
\begin{equation*}
d \VEC\vect\Sigma = ({\bf U} \otimes {\bf I}) d \VEC{\bf U} + ({\bf I} \otimes {\bf U}) d\VEC\tp{\bf U}
\end{equation*}
Applying the elimination ${\bf L}_D$ on both sides
\begin{equation*}
\begin{aligned}
d \vech\vect\Sigma &= {\bf L}_D [ ({\bf U} \otimes {\bf I}) {\bf K}_D d \VEC\tp{\bf U} + ({\bf I} \otimes {\bf U}) d\VEC\tp{\bf U} ]
= {\bf L}_D ({\bf K}_D + {\bf I}) ({\bf I} \otimes {\bf U}) d\VEC \tp{\bf U}\\
&= 2 {\bf L}_D {\bf N}_D ({\bf I} \otimes {\bf U}) \tp{\bf L}_D d\vech \tp{\bf U} = 2 {\bf L}_D {\bf N}_D \tp{\bf L}_D \tp{\bf D}_D ({\bf I} \otimes {\bf U}) \tp{\bf L}_D d\vech \tp{\bf U}
\end{aligned}
\end{equation*}
where ${\bf K}_D$ is the \emph{commutation matrix} such that ${\bf K}_D \VEC{\bf A}=\VEC\tp{\bf A}$ for matrix ${\bf A}_{D\times D}$,
${\bf N}_D:=({\bf K}_D+{\bf I})/2$, and ${\bf D}_D$ is the \emph{duplication matrix} which is regarded as the inverse of the elimination matrix ${\bf L}_D$.
The last equation is by ${\bf D}_D {\bf L}_D {\bf N}_D = {\bf N}_D = \tp{\bf N}_D$ \citep[Lemma2.1 and Lemma3.5 in][]{magnus80}.
Thus according to \cite[Lemma3.4 and Lemma4.1 in][]{magnus80} we have
\begin{equation*}
\begin{aligned}
\left| \frac{d_h \vect\Sigma}{d_h \tp{\bf U}} \right| &= \left| \frac{\pa \vech \vect\Sigma}{\pa \vech \tp{\bf U}} \right| 
= |2 {\bf L}_D {\bf N}_D \tp{\bf L}_D \tp{\bf D}_D ({\bf I} \otimes {\bf U}) \tp{\bf L}_D|\\
&= 2^{D(D+1)/2} |{\bf L}_D {\bf N}_D \tp{\bf L}_D| |{\bf L}_D ({\bf I} \otimes \tp{\bf U}) {\bf D}_D| = 2^D \prod_{i=1}^D |u_{ii}^i|
\end{aligned}
\end{equation*}
%\end{proof}
%
%\begin{proof}
%Note we have
%\begin{equation}
%d \vect\Rho = d {\bf L}\tp{\bf L} + {\bf L} d\tp{\bf L}
%\end{equation}
%Taking $\VEC$ on both sides and applying its property
%\begin{equation}
%d \VEC\vect\Rho = ({\bf L} \otimes {\bf I}) d \VEC{\bf L} + ({\bf I} \otimes {\bf L}) d\VEC\tp{\bf L}
%\end{equation}
%Applying the elimination ${\bf L}_D$ operator (i.e. ${\bf L}_D \VEC\vect\Rho = \vech\vect\Rho$) on both sides
%\begin{equation}
%\begin{aligned}
%d \vech\vect\Rho &= {\bf L}_D [ ({\bf L} \otimes {\bf I}) d \VEC{\bf L} + ({\bf I} \otimes {\bf L}) {\bf K}_D d\VEC{\bf L} ]
%= {\bf L}_D ( {\bf I} + {\bf K}_D) ({\bf L} \otimes {\bf I}) d\VEC{\bf L}\\
%&= 2 {\bf L}_D {\bf N}_D ({\bf L} \otimes {\bf I}) \tp{\bf L}_D d\vech{\bf L} = 2 {\bf L}_D {\bf N}_D \tp{\bf L}_D \tp{\bf D}_D ({\bf L} \otimes {\bf I}) \tp{\bf L}_D d\vech{\bf L}
%\end{aligned}
%\end{equation}
%where ${\bf K}_D$ is the \emph{commutation matrix} such that ${\bf K}_D \VEC{\bf A}=\VEC\tp{\bf A}$ for matrix ${\bf A}_{D\times D}$,
%${\bf N}_D:=({\bf K}_D+{\bf I})/2$, and ${\bf D}_D$ is the \emph{duplication matrix} which is regarded as the inverse of the elimination matrix ${\bf L}_D$.
%The last equation is by ${\bf D}_D {\bf L}_D {\bf N}_D = {\bf N}_D = \tp{\bf N}_D$ \citep[Lemma2.1 and Lemma3.5 in][]{magnus80}.
%Thus according to \cite[Lemma3.4 and Lemma4.1 in][]{magnus80} we have
By similar argument, we have
\begin{equation*}
\begin{aligned}
\left| \frac{d_h \vect\Rho}{d_h {\bf L}} \right| &= \left| \frac{\pa \vech \vect\Rho}{\pa \vech {\bf L}} \right| 
= |2 {\bf L}_D {\bf N}_D \tp{\bf L}_D \tp{\bf D}_D ({\bf L} \otimes {\bf I}) \tp{\bf L}_D|\\
&= 2^{D(D+1)/2} |{\bf L}_D {\bf N}_D \tp{\bf L}_D| |{\bf L}_D (\tp{\bf L} \otimes {\bf I}) {\bf D}_D| = 2^D \prod_{i=1}^D |l_{ii}|^{D+1-i} \,.
\end{aligned}
\end{equation*}
Thus it completes the proof.
\end{proof}

%With Lemma \ref{lem:chol_jacobians}, we can prove the following Theorem \ref{thm:conn_iWishart}
%\conniWishart*

\begin{proof}[Proof of Proposition \ref{prop:conn_iWishart}]
We know that the density of $\vect\Sigma\sim \mathcal W^{-1}_D(\vect\Psi,\nu)$ is
\begin{equation*}
p_{\mathcal W^{-1}}(\vect\Sigma\, ;  \vect\Psi,\nu) = \frac{|\vect\Psi|^{\nu/2}}{2^{D\nu/2} \Gamma_D(\nu/2)} |\vect\Sigma|^{-(\nu+D+1)/2} \exp\left(-\half\tr (\vect\Psi {\vect\Sigma}^{-1}) \right)
\end{equation*}
By Lemma \ref{lem:chol_jacobians} we have
\begin{equation*}
\begin{aligned}
p({\bf U}) &= p(\vect\Sigma) \left| \frac{d_h \vect\Sigma}{d_h \tp{\bf U}} \right| = 2^D p_{\mathcal W^{-1}}({\bf U} \tp{\bf U}\, ; \vect\Psi,\nu) \prod_{i=1}^D |u_{ii}^i| \\
&= \frac{|\vect\Psi|^{\nu/2}}{2^{D(\nu-2)/2} \Gamma_D(\nu/2)} |{\bf U}|^{-(\nu+D+1)} \prod_{i=1}^D u_{ii}^i \exp\left(-\half\tr (\vect\Psi {\bf U}^{-\mathsf T} {{\bf U}^{-1}}) \right) \,.
\end{aligned}
\end{equation*}
Then the proof is completed.
\end{proof}

%We now prove the connection of squared-Dirichlet distributions to two uniform distributions for correlation matrices \citep{barnard00}.
%\connunif*

\begin{proof}[Proof of Theorem \ref{thm:conn_unif}]
To prove the first result, we use Lemma \ref{lem:chol_jacobians}
\begin{equation*}
p(\vect\Rho) = p({\bf L}) \left| \frac{d_h {\bf L}}{d_h \vect\Rho} \right| \propto \prod_{i=2}^D |{\bf l}_i|^{2\vect\alpha_i-1} \prod_{i=1}^D |l_{ii}|^{i-(D+1)}
= \prod_{i=1}^D |l_{ii}|^{(i-3)(D+1)}
\end{equation*}
On the other hand, from Equation (8) in \cite{barnard00}, we have the density of marginally uniform distribution:
\begin{equation*}
p(\vect\Rho) \propto |\vect\Rho|^{\frac{D(D-1)}{2}-1} (\prod_{i} \vect\Rho_{ii})^{-\frac{D+1}{2}}
= (\prod_{j=1}^D l_{jj}^2)^{\frac{D(D-1)}{2}-1} (\prod_{j=1}^D \prod_{i=1}^j l_{ii}^2)^{-\frac{D+1}{2}}
%= ( \prod_{j=1}^D |l_{jj}| )^{D(D-1)-2 - (D-j+1)(D+1)}
= \prod_{j=1}^D |l_{jj}|^{(j-3)(D+1)}
\end{equation*}
where $\vect\Rho_{ii}$ is the $i$-th principal minor of $\vect\Rho$.
%Therefore we have proved the first conclusion.
Similarly by Lemma \ref{lem:chol_jacobians} we can prove the second result
\begin{equation*}
p(\vect\Rho) = p({\bf L}) \left| \frac{d_h {\bf L}}{d_h \vect\Rho} \right| \propto \prod_{i=2}^D |{\bf l}_i|^{2\vect\alpha_i-1} \prod_{i=1}^D |l_{ii}|^{i-(D+1)}
\propto 1 \,.
\end{equation*}
Therefore we have finished the proof.
\end{proof}

%%---------------------------------------------------------%%

\section{Posterior Contraction}\label{apx:post_contr}
For the Gaussian likelihood models $p_i \sim \mathcal N(\vect\mu_i(t), \vect\Sigma_i(t))$ for $i=0,1$, 
we first bound the Hellinger distance, Kullback-Leibler distance and variance distance $V(p_0,p_1)= \E_0 (\log (p_0/p_1))^2$ with the uniform norm in the following lemma.
Notation $\lesssim$ means ``smaller than or equal to a universal constant times".
\begin{lem}
\label{lem:hKVbd}
For any bounded measurable functions %$\vect\mu_i : \mathcal X\rightarrow \mathbb R^D$ and 
$\vect\Sigma_i : \mathcal X\rightarrow \mathbb R^{D^2}$ with Cholesky decompositions $\vect\Sigma_i={\bf L}_i\tp{{\bf L}_i}$, 
assume $\min_{1\leq j\leq D} \inf_{t\in \mathcal X} | l_{i,jj}(t)| \geq c_0>0$, $i=0,1$.
Then we have
\begin{itemize}
\item $h(p_0, p_1) \lesssim \Vert {\bf L}_0 - {\bf L}_1\Vert_\infty^\half$
\item $K(p_0, p_1) \lesssim \Vert {\bf L}_0 - {\bf L}_1\Vert_\infty$
\item $V(p_0, p_1) \lesssim \Vert {\bf L}_0 - {\bf L}_1\Vert_\infty^2$
\end{itemize}
\end{lem}
\begin{proof}
First we calculate 
\begin{equation}\label{eq:logpdf_diff}
\begin{aligned}
\log p_0 -\log p_1 &= \half \left\{ \log \frac{|\vect\Sigma_1|}{|\vect\Sigma_0|} + \tp{({\bf y}-\vect\mu_*)} \vect\Sigma_*^{-1} ({\bf y}-\vect\mu_*) + ** \right\} \\
\vect\Sigma_*^{-1} &= \vect\Sigma_1^{-1} - \vect\Sigma_0^{-1}, \quad \vect\mu_* = \vect\Sigma_* (\vect\Sigma_1^{-1}\vect\mu_1 - \vect\Sigma_0^{-1}\vect\mu_0) \\
** &= - \tp{(\vect\mu_1-\vect\mu_0)} \vect\Sigma_1^{-1} \vect\Sigma_* \vect\Sigma_0^{-1} (\vect\mu_1-\vect\mu_0)
\end{aligned}
\end{equation}
Taking expectation of \eqref{eq:logpdf_diff} with respect to $p_0$ yields the following Kullback-Leibler divergence
\begin{equation*}
K(p_0, p_1) = \half \left\{ \tr (\vect\Sigma_1^{-1}\vect\Sigma_0) + \tp{(\vect\mu_1-\vect\mu_0)} \vect\Sigma_1^{-1} (\vect\mu_1-\vect\mu_0) - D + \log \frac{|\vect\Sigma_1|}{|\vect\Sigma_0|} \right\}
\end{equation*}
Consider $\vect\mu_i\equiv 0$. By the non-negativity of K-L divergence we have for general $\vect\Sigma_i>0$,
\begin{equation}\label{eq:logdetbd}
\log \frac{|\vect\Sigma_0|}{|\vect\Sigma_1|} \leq \tr (\vect\Sigma_1^{-1}\vect\Sigma_0 - {\bf I}) 
\end{equation}
Therefore we can bound K-L divergence
\begin{equation*}
K(p_0, p_1) \leq \half \{ \tr (\vect\Sigma_1^{-1}\vect\Sigma_0 - {\bf I}) + \tr (\vect\Sigma_0^{-1}\vect\Sigma_1 - {\bf I}) \} \leq C(D, c_0) \Vert \vect\Sigma_0 - \vect\Sigma_1\Vert_\infty \lesssim \Vert {\bf L}_0 - {\bf L}_1\Vert_\infty
\end{equation*}
where we bound each term involving trace
\begin{equation}\label{eq:tracebd}
\begin{aligned}
& \tr (\vect\Sigma_1^{-1}\vect\Sigma_0 - {\bf I}) = \tr (\vect\Sigma_1^{-1}(\vect\Sigma_0 - \vect\Sigma_1)) \leq \Vert \vect\Sigma_1^{-1} \Vert_F \Vert \vect\Sigma_0 - \vect\Sigma_1\Vert_F \\
& \leq D^{3/2} \Vert \vect\Sigma_1^{-1} \Vert_2 \Vert \vect\Sigma_0 - \vect\Sigma_1\Vert_\infty 
\lesssim c_0^{-2D} (\Vert {\bf L}_0 \Vert_\infty+\Vert {\bf L}_1 \Vert_\infty) \Vert {\bf L}_0 - {\bf L}_1\Vert_\infty
\end{aligned}
\end{equation}
Note the last inequality holds because
\begin{align*}
c_0^D &\leq \Pi_{j=1}^D | l_{jj} | = \left[ \Pi_{j=1}^D \lambda_j(\vect\Sigma)\right]^\half \leq \lambda_{\min}^\half \left[\frac{\tr (\vect\Sigma)^{D-1}}{D-1}\right]^\half
\leq \lambda_{\min}^\half \left[\frac{\Vert {\bf L} \Vert_F^{2(D-1)}}{D-1}\right]^\half \\
&\leq \lambda_{\min}^\half \left[\frac{D^{2(D-1)}}{D-1}\right]^\half \Vert {\bf L} \Vert_\infty^{D-1}
= \Vert \vect\Sigma_1^{-1} \Vert_2^{-\half} \left[\frac{D^{2(D-1)}}{D-1}\right]^\half \Vert {\bf L} \Vert_\infty^{D-1}
\end{align*}
and we can write
\begin{equation*}
\vect\Sigma_0 - \vect\Sigma_1 = \half \left[ ({\bf L}_0 - {\bf L}_1) \tp{({\bf L}_0 + {\bf L}_1)} + ({\bf L}_0 + {\bf L}_1)\tp{({\bf L}_0 - {\bf L}_1)}\right]
\end{equation*}

Now take expectation of squared \eqref{eq:logpdf_diff} with respect with $p_0$ to get the following variance distance
\begin{equation*}
V(p_0,p_1) = \half \tr ((\vect\Sigma_1^{-1}\vect\Sigma_0 - {\bf I})^2) + \tp{(\vect\mu_1-\vect\mu_0)} \vect\Sigma_1^{-1} \vect\Sigma_0 \vect\Sigma_1^{-1} (\vect\mu_1-\vect\mu_0) + K^2(p_0, p_1)
\end{equation*}
Consider $\vect\mu_i\equiv 0$ and we can bound the variance distance by similar argument as \eqref{eq:tracebd}
\begin{equation*}
V(p_0,p_1) \leq \half \Vert \vect\Sigma_1^{-1}(\vect\Sigma_0 - \vect\Sigma_1) \Vert_F^2 + K^2(p_0, p_1) \leq 
C \Vert \vect\Sigma_1^{-1} \Vert_F^2 \Vert \vect\Sigma_0 - \vect\Sigma_1\Vert_F^2 + K^2(p_0, p_1) \lesssim \Vert {\bf L}_0 - {\bf L}_1\Vert_\infty^2
\end{equation*}
where we use the fact $\Vert {\bf A}{\bf B}\Vert_F \leq \Vert {\bf A}\Vert_F \Vert {\bf B}\Vert_F$.

Lastly, the squared Hellinger distance for multivariate Gaussians can be calculated
\begin{equation*}
h^2(p_0, p_1) = 1 - \frac{|\vect\Sigma_0 \vect\Sigma_1|^{1/4}}{\left|\frac{\vect\Sigma_0+\vect\Sigma_1}{2}\right|^{1/2}} \exp\left\{-\frac18 \tp{(\vect\mu_0-\vect\mu_1)}\left(\frac{\vect\Sigma_0+\vect\Sigma_1}{2}\right)^{-1}(\vect\mu_0-\vect\mu_1)\right\}
\end{equation*}
Consider $\vect\mu_i\equiv 0$. Notice that $1-x\leq -\log x$, and by \eqref{eq:logdetbd} we can bound the squared Hellinger distance using similar argument in \eqref{eq:tracebd}
\begin{equation*}
\begin{aligned}
h^2(p_0, p_1) &\leq \log \frac{\left|\frac{\vect\Sigma_0+\vect\Sigma_1}{2}\right|^{1/2}}{|\vect\Sigma_0 \vect\Sigma_1|^{1/4}}
\leq \half \tr ({\bf L}_0^{-\mathsf T} {\bf L}_1^{-1} (\vect\Sigma_0+\vect\Sigma_1)/2 - {\bf I} ) \\
& = \frac14 \{ \tr ({\bf L}_1^{-1}{\bf L}_0 - {\bf I}) + \tr({\bf L}_0^{-\mathsf T} \tp{\bf L}_1 - {\bf I}) \}
\lesssim c_0^{-D} \Vert {\bf L}_0 - {\bf L}_1\Vert_\infty
\end{aligned}
\end{equation*}
where we use $\Vert{\bf L}^{-1}\Vert_2=\lambda_{\min}^{-1}({\bf L}) \leq c_0^{-D}\left[\frac{D\Vert{\bf L}\Vert_\infty}{D-1}\right]^{D-1}$.
\end{proof}

Define the following \emph{coordinate concentration} function as in \eqref{eq:concent_1}
\begin{equation}\label{eq:concent_coord}
\phi_{l_{0,ij}}(\eps) = \inf_{h_{ij}\in\mathbb H_{ij} : \,\Vert h_{ij}-l_{0,ij}\Vert_{ij} < \eps} \Vert h\Vert_{\mathbb H_{ij}}^2 - \log \Pi(l_{ij} :\, \Vert l_{ij} \Vert_{ij} < \eps)
\end{equation}
It is easy to see that $\phi_{l_{0,ij}}(\eps)\leq \phi_{{\bf L}_0}(\eps)$ for $\forall 1\leq j \leq i\leq D$.
Let $\Vert \cdot\Vert:=\max_{i,j}\Vert\cdot\Vert_{ij}$.
For $\eps>0$, let $N(\eps, B, d)$ denote the minimum number of balls of radius $\eps$ that a cover $B$ in a metric space with metric $d$, which is named \emph {$\eps$-covering number} for $B$.
Now we are ready to prove the theorem of posterior concentration. %\label{thm:post_contr}.
%\postcontr*

\begin{proof}[Proof of Theorem \ref{thm:post_contr}]
We use Theorem 4 of \cite{ghosal07} and it suffices to verify three conditions (the entropy condition 3.2, the complementary asertion 3.3 and the prior mass condition 3.4) as follows:
\begin{align}
\sup_{\eps>\eps_n} \log N(\eps/36, \{{\bf L}\in \Theta_n: d_n({\bf L},{\bf L}_0)<\eps\},d_n) &\leq n\eps_n^2 \label{eqa:entropy}\\
\frac{\Pi_n(\Theta\backslash \Theta_n)}{\Pi_n(\bar B_n({\bf L}_0, \eps_n))} &= o(e^{-2n\eps_n^2}) \label{eqa:complement}\\
\frac{\Pi_n({\bf L}\in\Theta_n: \kappa\eps_n<d_{n,H}({\bf L},{\bf L}_0)<2\kappa\eps_n)}{\Pi_n(\bar B_n({\bf L}_0, \eps_n))} & \leq e^{n\eps_n^2\kappa^2/4}, \; \text{for\, large}\; \kappa \label{eqa:priormass}
\end{align}
where $\bar B_n({\bf L}_0, \eps):=\{{\bf L}\in\Theta: \frac1n\sum_{i=1}^n K_i({\bf L}_0,{\bf L})\leq \eps^2, \frac1n\sum_{i=1}^n V_i({\bf L}_0,{\bf L})\leq \eps^2\}$,
with $\Theta=L^\infty(\mathcal X)^{D(D+1)/2}$, $K_i({\bf L}_0,{\bf L})=K(P_{{\bf L}_0,i}, P_{{\bf L},i})$ and $V_i({\bf L}_0,{\bf L})=V(P_{{\bf L}_0,i}, P_{{\bf L},i})$.

Applying Theorem 2.1 of \cite{vanderVaart08a} to each Gaussian random element $l_{ij}$ in $\mathbb B_{ij}=L^\infty(\mathcal X)$ for $1\leq j\leq i\leq D$, with $l_{0,ij}\in\bar{\mathbb H}_{ij}$, we have $C>2$ and the measurable set $B_{n,ij}\subset \mathbb B_{ij}$ such that
\begin{align}
\log N(3\eps_n, B_{n,ij}, \Vert \cdot \Vert_\infty) &\leq 6 Cn\eps_n^2 \label{eqa:gp_entropy}\\
\Pi(l_{ij}\notin B_{n,ij}) &\leq e^{-Cn\eps_n^2} \label{eqa:gp_complement}\\
\Pi(\Vert l_{ij}-l_{0,ij}\Vert_\infty <2\eps_n) &\geq e^{-n\eps_n^2}  \label{eqa:gp_priormass}
\end{align}
Now set $\Theta_n=\{{\bf L}: l_{ij} \in B_{n,ij}\}$, and $N(\eps_n, \Theta_n, d_n) = \max_{1\leq j\leq i\leq D} N(3\eps_{n,ij}, B_{n,ij}, \Vert \cdot \Vert_\infty)$.
By Lemma \ref{lem:hKVbd} and \eqref{eqa:gp_entropy}, we have the following global entropy bound because $d_n^2({\bf L},{\bf L}')\leq \Vert {\bf L}-{\bf L}'\Vert_\infty\leq \eps_n^2$ for $\forall {\bf L}, {\bf L}'\in\Theta_n$.
\begin{equation*}
\log N(\eps_n, \Theta_n, d_n) \leq 6Cn(\eps_n^2)^2 \leq C'n\eps_n^4 \leq n\eps^2
\end{equation*}
which is stronger than the local entropy condition \eqref{eqa:entropy}.
Now by Lemma \ref{lem:hKVbd} and \eqref{eqa:gp_priormass} we have
\begin{align*}
\Pi_n(\bar B_n({\bf L}_0, \eps_n)) & \geq \Pi_n(\Vert {\bf L}_0-{\bf L}\Vert_\infty\leq \eps_n^2, \Vert {\bf L}_0-{\bf L}\Vert_\infty^2\leq \eps_n^2) 
= \Pi_n(\Vert {\bf L}_0-{\bf L}\Vert_\infty\leq \eps_n^2) \\
&= \Pi(\Vert l_\mathrm{argmax}-l_{0,\mathrm{argmax}}\Vert_\infty <\eps_n^2) 
\geq e^{-n(\eps_n^2/2)^2} = e^{-n\eps_n^4/4} \geq e^{-n\eps_n^2\kappa^2/4}
\end{align*}
Then \eqref{eqa:priormass} is immediately satisfied because the numerator is bounded by 1.
Finally, by \eqref{eqa:gp_complement} we have
\begin{align*}
\Pi_n(\Theta\backslash \Theta_n) \leq \sum_{i=1}^D\sum_{j=1}^i \Pi(l_{ij}\notin B_{n,ij}) \leq \frac{D(D+1)}2 e^{-Cn\eps_n^2} = o(e^{-2n\eps_n^2})
\end{align*}
Then \eqref{eqa:complement} holds because the denominator is bounded below by a term ($e^{-n\eps_n^4/4}$) of smaller order.
Therefore the proof is completed.
\end{proof}

%\begin{comment}

\section{Spherical Hamiltonian Monte Carlo}\label{apx:sphHMC}
\subsection{Derivation of the geometric integrator for SphHMC}
The Lagrangian dynamics \eqref{eq:LD} on the sphere $\mathcal S^{D-1}(r)$ with the first $(D-1)$ coordinates
can be split into the following two smaller dynamics:

\noindent
\begin{subequations}
\begin{minipage}{.5\textwidth}
\begin{equation}\label{eq:geodesic}
\begin{dcases}
\begin{aligned}
\dot \bq_{-D} & = \bv_{-D}\\
\dot \bv_{-D} & = - \tp \bv_{-D} \bGamma(\bq_{-D}) \bv_{-D}
\end{aligned}
\end{dcases}
\end{equation}
\end{minipage}
\begin{minipage}{.5\textwidth}
\begin{equation}\label{eq:residual}
\begin{dcases}
\begin{aligned}
\dot \bq_{-D} & = {\bf 0}\\
\dot \bv_{-D} & = - \nabla^{-1}_{\bq_{-D}} \tilde U(\bq_{-D})
\end{aligned}
\end{dcases}
\end{equation}
\end{minipage}\par\medskip\noindent
\end{subequations}
where \eqref{eq:geodesic} is the equation of geodesic on manifold $\mathcal S^D$ which has analytical solution;
and \eqref{eq:residual} has analytical solution. Both define volume preserving maps.

The mapping $\mathcal I:\, \bq_{-D} \mapsto \bq=(\bq_{-D}, q_D)$ can be viewed as an imbedding of $\mathcal S^{D-1}_+$ into $\mathbb R^{D}$.
Denote its Jacobian as $d \mathcal I(\bq):=\begin{bmatrix} {\bf I}_{D-1}\\ -\frac{\tp{\bq}_{-D}}{q_D}\end{bmatrix}$.
Then we have
\begin{equation*}
\begin{aligned}
\tp{d \mathcal I(\bq)} d \mathcal I(\bq) &= \bG(\bq_{-D}), & d \mathcal I(\bq) \bG(\bq_{-D})^{-1} \tp{d \mathcal I(\bq)} &=\mathcal P(\bq) = {\bf I} - r^{-2} \bq\tp \bq \\
\nabla_{\bq_{-D}} \tilde U(\bq) &= \tp{d \mathcal I(\bq)} \nabla_{\bq} \tilde U(\bq), & \bv = d \mathcal I(\bq) \bv_{-D}, \qquad & \tp \bv \bv = \tp \bv_{-D} \bG(\bq_{-D}) \bv_{-D}
\end{aligned}
\end{equation*}

Then Equation \eqref{eq:geodesic} has the following solution with full coordinates
\begin{equation}\label{eq:geod}\small
\begin{aligned}
\begin{bmatrix}
\bq(t) \\ \bv(t)
\end{bmatrix}
&= \begin{bmatrix}
{\bf I} & {\bf 0}\\
\tp{\bf 0} & r^{-1}\Vert \bv(0)\Vert_2
\end{bmatrix}
\begin{bmatrix}
\cos(r^{-1}\Vert \bv(0)\Vert_2 t) & \sin(r^{-1}\Vert \bv(0)\Vert_2 t)\\
-\sin(r^{-1}\Vert \bv(0)\Vert_2 t) & \cos(r^{-1}\Vert \bv(0)\Vert_2 t)
\end{bmatrix}
\begin{bmatrix}
{\bf I} & {\bf 0}\\
\tp{\bf 0} & r\Vert \bv(0)\Vert_2^{-1}
\end{bmatrix}
\begin{bmatrix}
\bq(0) \\ \bv(0)
\end{bmatrix}\\
&= \begin{bmatrix}
\bq(0) \cos(r^{-1}\Vert \bv(0)\Vert_2 t) + r \bv(0) \Vert \bv(0)\Vert_2^{-1} \sin(r^{-1}\Vert \bv(0)\Vert_2 t)\\
- r^{-1} \bq(0) \Vert \bv(0)\Vert_2 \sin(r^{-1}\Vert \bv(0)\Vert_2 t) + \bv(0) \cos(r^{-1}\Vert \bv(0)\Vert_2 t)
\end{bmatrix}
\end{aligned}
\end{equation}
and Equation \eqref{eq:residual} has the following solution in full coordinates
\begin{equation}\label{eq:resid}
\begin{aligned}
\bq(t) &= \bq(0)\\
\bv(t) &= \bv(0) - \frac{t}{2}
d \mathcal I(\bq(0)) \nabla^{-1}_{\bq_{-D}} \tilde U(\bq(0))
= \bv(0) - \frac{t}{2} \mathcal P(\bq) \nabla_{\bq} \tilde U(\bq(0))
\end{aligned}
\end{equation}

So numerically updating \eqref{eq:resid} for $h/2$, updating \eqref{eq:geod} for $h$ and updating \eqref{eq:resid} for another $h/2$ 
yield the integrator \eqref{eq:sphHMC_prop}.

\subsection{Reformulating Acceptance}
At the end of the numerical simulation, a proposal $(\bq_T, \bv_T)$ is accepted according to the following probability
\begin{equation}\label{eq:sphHMC_acpt_classic}
a_{sphHMC} = 1 \wedge \exp(-\Delta E),\qquad \Delta E = E(\bq_T, \bv_T) - E(\bq_0, \bv_0)
\end{equation}
Such classic definition of acceptance probability can be reformulated by replacing $\Delta E$ in \eqref{eq:sphHMC_acpt_classic} with
\begin{equation*}\label{eq:sphHMC_acpt_reform}
\Delta E=\sum_{\tau=1}^T \Delta E_{\tau} \qquad \Delta E_{\tau} = E(\bq_{\tau}, \bv_{\tau}) - E(\bq_{\tau-1}, \bv_{\tau-1}) 
\end{equation*}
With \eqref{eq:sphHMC_prop} we can write
\[\small
\begin{split}
\Delta E' =& E(\bq',\bv') - E(\bq,\bv)\\
=& \tilde U(\bq') - \tilde U(\bq) + \half \tp{\bv'}_{-D} \bG(\bq'_{-D}) \bv'_{-D} - \half \tp \bv_{-D} \bG(\bq_{-D}) \bv_{-D} \\
=& \Delta \tilde U - \half \Vert \bv \Vert_2^2
     + \half \left\Vert \bv^+ - \frac{h}{2} \mathcal P(\bq') \nabla_{\bq}\tilde U(\bq') \right\Vert_2^2\\
=& \Delta \tilde U - \half \Vert \bv \Vert_2^2 + \half \tp{\bv^+} \bv^+ - \frac{h}{2} \tp{\bv^+} \mathcal P(\bq') \nabla_{\bq}\tilde U(\bq') + \frac{h^2}{8} \tp{\nabla_{\bq} \tilde U(\bq')} \mathcal P(\bq') \nabla_{\bq} \tilde U(\bq') \\
=& \Delta \tilde U - \half \Vert \bv \Vert_2^2 + \half \Vert \bv^-\Vert_2^2  - \frac{h}{2} \tp{\bv^+} \nabla_{\bq}\tilde U(\bq') + \frac{h^2}{8} \Vert \nabla_{\bq}\tilde U(\bq')\Vert_{\mathcal P(\bq')}^2\\
=& \Delta \tilde U - \half \Vert \bv \Vert_2^2 - \frac{h}{2} \tp{\bv^+} \nabla_{\bq}\tilde U(\bq') + \frac{h^2}{8} \Vert \nabla_{\bq}\tilde U(\bq')\Vert_{\mathcal P(\bq')}^2 %\\
%&  + \half \left\Vert \bv - \frac{h}{2} \mathcal P(\bq) \nabla_{\bq}\tilde U(\bq)\right\Vert_2^2 \\
  + \half \Vert \bv \Vert_2^2 - \frac{h}{2} \tp \bv \nabla_{\bq}\tilde U(\bq) + \frac{h^2}{8} \Vert \nabla_{\bq}\tilde U(\bq)\Vert_{\mathcal P(\bq)}^2 \\
%=& \Delta \tilde U - \frac{h}{2} \tp{\left[\bv' + \frac{h}{2} \mathcal P(\bq') \nabla_{\bq}\tilde U(\bq')\right]} \nabla_{\bq}\tilde U(q') + \frac{h^2}{8} \Vert \nabla_{\bq}\tilde U(\bq')\Vert_{\mathcal P(\bq')}^2 \\
%&  - \frac{h}{2} \tp \bv \nabla_{\bq}\tilde U(\bq) + \frac{h^2}{8} \Vert \nabla_{\bq}\tilde U(\bq)\Vert_{\mathcal P(\bq)}^2 \\
=& \Delta \tilde U - \frac{h}{2} \left[ \tp{\bv'} \nabla_{\bq}\tilde U(\bq') + \tp \bv \nabla_{\bq}\tilde U(\bq) \right] - \frac{h^2}{8} \left[ \Vert \nabla_{\bq}\tilde U(\bq')\Vert_{\mathcal P(\bq')}^2 - \Vert \nabla_{\bq}\tilde U(\bq)\Vert_{\mathcal P(\bq)}^2 \right]
\end{split}
\]
where $\mathcal P(\bq') \bv^+ = \bv^+$, $\mathcal P(\bq) \bv^-=\bv^-$, and $\Vert \bv^+\Vert_2^2 = \Vert \bv^-\Vert_2^2$.
Accumulating the above terms over $\tau=1,\cdots,T$ yields the reformulated acceptance probability \eqref{eq:sphHMC_acpt}.

We now prove the energy conservation theorem \ref{thm:energy_conserv} \citep{beskos11}.
%\energyconserv*
\begin{proof}[Proof of Theorem \ref{thm:energy_conserv}]
With the second equation of Lagrangian dynamics \eqref{eq:LD} we have
\begin{equation*}
\begin{aligned}
- \langle \bv(t), \bg(\bq(t))\rangle &= \tp{\bv(t)} \nabla_{\bq}\tilde U(\bq(t)) = \tp{\bv_{-D}(t)} \tp{d \mathcal I(\bq)} \nabla_{\bq}\tilde U(\bq(t)) = \tp{\bv_{-D}(t)} \nabla_{\bq_{-D}}\tilde U(\bq(t)) \\
&= \tp{\bv_{-D}(t)} \bG(\bq_{-D}(t)) \left[ \dot \bv_{-D}(t) + \tp \bv_{-D}(t)\bGamma(\bq_{-D}(t)) \bv_{-D}(t) \right] \\
&= \tp{\bv_{-D}(t)} \bG(\bq_{-D}(t)) \dot \bv_{-D}(t) + \half \tp{\bv_{-D}(t)} d\bG(\bq_{-D}(t)) \bv_{-D}(t) \\
&= \frac{d}{dt} \half \tp{\bv_{-D}(t)} \bG(\bq_{-D}(t)) \bv_{-D}(t) = \frac{d}{dt} \half \Vert \bv(t) \Vert_2^2
\end{aligned}
\end{equation*}
Then we have the first equality hold because
\begin{equation*}
- \int_0^{T} \langle \bv(t), \bg(\bq(t))\rangle dt = \half \Vert \bv(T) \Vert_2^2 - \half \Vert \bv(0) \Vert_2^2
\end{equation*}
Lastly, from the first equation of Lagrangian dynamics \eqref{eq:LD}
\begin{equation*}
\tilde U(\bq(T)) - \tilde U(\bq(0)) = \int_0^T \dot{\tilde U}(\bq(t)) = \int_0^T \langle \dot \bq(t), \nabla_{\bq} \tilde U(\bq(t)) \rangle dt = \int_0^{T} \langle \bv(t), \bg(\bq(t))\rangle dt
\end{equation*}
Therefore the second equality is proved.
\end{proof}

\begin{algorithm}[t]
\caption{Adaptive Spherical HMC (adp-SphHMC)}
\label{alg:adp-sphHMC}
\begin{algorithmic}\small
\STATE Given $\bq_0, a_0, N, N^{\textrm adapt}$.
\STATE Set $h_0=1$ or using Algorithm 4 of \cite{hoffman14}, $\mu=\log(10h_0), \bar h_0=1, \bar A_0=0, \gamma=0.05, n_0=10, \kappa=0.75$.
\FOR{$n=1$ to $N$}
\STATE Sample a new velocity $\bv_{n-1} \sim \mathcal N({\bf 0},{\bf I}_D)$, and set $\bv_{n-1} = \mathcal P(\bq_{n-1}) \bv_{n-1}$.
\STATE Set $\bq^{(0)} = \bq_{n-1}$, $\bv^{(0)} = \bv_{n-1}$.
\FOR{$\tau=0$ to $T-1$ ($T =T_{2orth}\; \textrm{or}\; T_{stoch}$)}
%\begin{equation*}
%\begin{aligned}
%\bv^{(\tau-\half)} &= \bv^{(\tau)} - \frac{h}{2} \mathcal P(\bq^{(\tau)}) \bg(\bq^{(\tau)})\\
%\begin{bmatrix}
%\bq^{(\tau+1)} \\ \bv^{(\tau+\half)}
%\end{bmatrix}
%&= 
%\begin{bmatrix}
%r & 0\\
%0 & \Vert \bv^{(\tau-\half)}\Vert_2
%\end{bmatrix}
%\begin{bmatrix}
%\cos(\Vert \bv^{(\tau-\half)}\Vert_2r^{-1} h) + \sin(\Vert \bv^{(\tau-\half)}\Vert_2r^{-1} h)\\
%- \sin(\Vert \bv^{(\tau-\half)}\Vert_2r^{-1} h) + \cos(\Vert \bv^{(\tau-\half)}\Vert_2r^{-1} h)
%\end{bmatrix}
%\begin{bmatrix}
%r^{-1} & 0\\
%0 & \Vert \bv^{(\tau-\half)}\Vert_2^{-1}
%\end{bmatrix}
%\begin{bmatrix}
%\bq^{(\tau)} \\ \bv^{(\tau-\half)}
%\end{bmatrix}\\
%\bv^{(\tau+1)} &= \bv^{(\tau+\half)} - \frac{h}{2} \mathcal P(\bq^{(\tau+1)}) \bg(\bq^{(\tau+1)})
%\end{aligned}
%\end{equation*}}
\STATE Run leapfrog step \eqref{eq:sphHMC_prop} to update $(\bq^{(\tau+1)}, \bv^{(\tau+1)}) \leftarrow \mathcal T_{h_{n-1}}(\bq^{(\tau)}, \bv^{(\tau)})$.
\IF{Stopping criterion \eqref{eq:sphHMC_2orth} (or \eqref{eq:sphHMC_stoch}) is satisfied}
\STATE Break
\ENDIF
\ENDFOR
\STATE Accept the proposal $(\bq^{(T)},\bv^{(T)})$ with probability $a^{sphHMC}_n$ in \eqref{eq:sphHMC_acpt} and set $\bq_n = \bq^{(T)}$; otherwise set $\bq_n = \bq_{n-1}$.
\IF{$n\leq N^{\textrm adapt}$}
\STATE Set $\bar A_n = \left(1-\frac{1}{n+n_0}\right) \bar A_{n-1} + \frac{1}{n+n_0} (a_0-a_n)$.
\STATE Set $\log h_n = \mu - \frac{\sqrt n}{\gamma} \bar A_n$, and $\log \bar h_n = n^{-\kappa} \log h_n + (1- n^{-\kappa}) \log \bar h_{n-1}$.
\ELSE
\STATE Set $h_n=\bar h_{N^{\textrm adpat}}$.
\ENDIF
\ENDFOR
\end{algorithmic}
\end{algorithm}

%%---------------------------------------------------------%%

\section{Gradient Calculation in Normal-inverse-Wishart Problem}\label{apx:calculations}
We use the representation \eqref{eq:strt_corr} and derive log-posterior (log-likelihood and log-prior) and the corresponding gradients for \eqref{eq:conj-iwishart} using matrix calculus.
\subsection{Gradients of log-likelihood}
Denote ${\bf y}_n^*:=({\bf y}_n-\vect\mu_0)/\vect\sigma$. Then the log-likelihood becomes
\begin{equation*}
\ell ({\bf y}^*; \vect\sigma, \vect\Rho) = - N \tp{\bm 1}_D \log\vect\sigma -\frac{N}{2} \log|\vect\Rho| -\half \sum_{n=1}^N \tp{{\bf y}_n^*} \vect\Rho^{-1} {\bf y}_n^*
\end{equation*}

%--**--%
\noindent $\left[\frac{\pa \ell}{\pa \vect\tau}\right]$. \quad
We calculate the gradient of log-likelihood with respect to $\vect\sigma$
\begin{equation*}
\frac{\pa \ell}{\pa \sigma_k} = -N\sigma_k^{-1} + \sum_{n=1}^N \sum_i \frac{y^*_{ni}}{\sigma_i} \delta_{ik} (\vect\Rho^{-1} {\bf y}_n^*)_i, \quad i.\,e.\; 
\frac{\pa \ell}{\pa \vect\sigma} = -N \vect\sigma^{-1} + \sum_{n=1}^N \diag({\bf y}^*_n/\vect\sigma)  (\vect\Rho^{-1} {\bf y}_n^*)
\end{equation*}
And with the transformation $\vect\tau=\log(\vect\sigma)$ it becomes
\begin{equation*}
\frac{\pa \ell}{\pa \vect\tau} = \frac{\tp{d\vect\sigma}}{d\vect\tau} \frac{\pa \ell}{\pa \vect\sigma} 
=  \diag(\vect\sigma) \left[ -\frac{N}{\vect\sigma} + \sum_{n=1}^N \diag({\bf y}^*_n/\vect\sigma) (\vect\Rho^{-1} {\bf y}_n^*) \right] = -N {\bm 1}_D + \sum_{n=1}^N \diag({\bf y}^*_n)   (\vect\Rho^{-1} {\bf y}_n^*)
\end{equation*}

%--**--%
\noindent $\left[\frac{\pa \ell}{\pa {\bf U}^*} \; \left( \frac{\pa \ell}{\pa {\bf L}} \right)\right]$. \quad
When $\vect\Rho = {\bf U}^* \tp{({\bf U}^*)}$, $\half \log|\vect\Rho| = \log|{\bf U}^*| = \tp{\bm 1}_D \log|\diag({\bf U}^*)|$ and thus we have
\begin{equation*}
\frac{\pa \ell}{\pa {\bf U}^*} = - \frac{N{\bf I}_D}{{\bf U}^*} + \sum_{n=1}^N \frac{d g_n(\widetilde{\bf U})}{d{\bf U}^*}
\end{equation*}
where $\frac{{\bf I}_D}{{\bf U}^*}=\diag(\{(u^*_{ii})^{-1}\})$ is a diagonal matrix formed by element-wise division, $\widetilde{\bf U}:=({\bf U}^*)^{-1}$ and $g_n(\widetilde{\bf U}):= -\half \tp{{\bf y}_n^*} \tp{\widetilde{\bf U}} \widetilde{\bf U} {\bf y}_n^*$.
Taking differential directly on $g_n({\bf U}^*):= -\half \tp{{\bf y}_n^*} ({\bf U}^*)^{-\mathsf T} ({\bf U}^*)^{-1} {\bf y}_n^*$, and
noting that differential and trace operators are exchangeable, we have
\begin{equation*}
\begin{aligned}
d g_n({\bf U}^*) &= -\half \tr ( \tp{{\bf y}_n^*} d ({\bf U}^*)^{-\mathsf T} ({\bf U}^*)^{-1} {\bf y}_n^* + \tp{{\bf y}_n^*} ({\bf U}^*)^{-\mathsf T} d ({\bf U}^*)^{-1} {\bf y}_n^* ) \\
&= \half \left[ \tr (\tp{{\bf y}_n^*} ({\bf U}^*)^{-\mathsf T} d \tp{({\bf U}^*)} \vect\Rho^{-1} {\bf y}_n^* ) + \tr ( \tp{{\bf y}_n^*} \vect\Rho^{-1} d {\bf U}^* ({\bf U}^*)^{-1} {\bf y}_n^* ) \right] \\
&= \tr ( \tp{{\bf y}_n^*} \vect\Rho^{-1} d {\bf U}^* ({\bf U}^*)^{-1} {\bf y}_n^* ) = \tr ( ({\bf U}^*)^{-1} {\bf y}_n^* \tp{{\bf y}_n^*} \vect\Rho^{-1} d {\bf U}^* )
\end{aligned}
\end{equation*}
Conversion from differential to normal derivative form in the numerator layout \citep{minka00} yields
\begin{equation*}
\frac{\pa g_n({\bf U}^*)}{\pa \tp{({\bf U}^*)}} = \mathrm{tril} ( ({\bf U}^*)^{-1} {\bf y}_n^* \tp{{\bf y}_n^*} \vect\Rho^{-1} ) , \quad i.e.,\;
\frac{\pa g_n({\bf U}^*)}{\pa {\bf U}^*} = \mathrm{triu} ( \vect\Rho^{-1} {\bf y}_n^* \tp{{\bf y}_n^*} ({\bf U}^*)^{-\mathsf T} )
\end{equation*}
Finally, we have
\begin{equation*}
\frac{\pa \ell}{\pa {\bf U}^*} = - \frac{N{\bf I}_D}{{\bf U}^*} + \mathrm{triu} ( \vect\Rho^{-1} \sum_{n=1}^N {\bf y}_n^* \tp{{\bf y}_n^*} ({\bf U}^*)^{-\mathsf T} )
\end{equation*}

When $\vect\Rho = {\bf L} \tp{\bf L}$, 
by similar argument as above, we have
\begin{equation*}
\frac{\pa \ell}{\pa {\bf L}} = - \frac{N{\bf I}_D}{{\bf L}} + \mathrm{tril} ( \vect\Rho^{-1} \sum_{n=1}^N {\bf y}_n^* \tp{{\bf y}_n^*} {\bf L}^{-\mathsf T} )
\end{equation*}

\subsection{Gradients of log-priors}
The logarithm of conditional prior $p(\vect\sigma| {\bf U}^*)$ after transformation $\vect\tau=\log(\vect\sigma)$ becomes
\begin{equation*}
\begin{aligned}
\log p(\vect\tau| {\bf U}^*) = \log p(\vect\sigma| {\bf U}^*) +  \log \left| \frac{d \vect\sigma}{d \vect\tau}\right| %\\
%&= \sum_{i=1}^D (i-(\nu+D+1))\tau_i  -\half\tr (\vect\Psi \diag(e^{-\vect\tau}) \vect\Rho^{-1} \diag(e^{-\vect\tau})) + \sum_{i=1}^D \tau_i\\
&= \sum_{i=1}^D (i-(\nu+D))\tau_i  -\half\tr (\vect\Psi \diag(e^{-\vect\tau}) \vect\Rho^{-1} \diag(e^{-\vect\tau}))
\end{aligned}
\end{equation*}

%--**--%
\noindent $\left[\frac{d}{d\vect\tau} \log p(\vect\tau| {\bf U}^*)\right]$. \quad
We calculate the derivative of $\log p(\vect\tau| {\bf U}^*)$ with respect to $\vect\tau$
\begin{equation*}
\frac{d}{d\vect\tau} \log p(\vect\tau| {\bf U}^*) = {\bf i} -(\nu+D) + \frac{d g(\vect\tau)}{d \vect\tau}
\end{equation*}
where ${\bf i}=\tp{[1,\cdots,D]}$, and $g(\vect\tau) = -\half\tr (\vect\Psi \diag(e^{-\vect\tau}) \vect\Rho^{-1} \diag(e^{-\vect\tau}))$.

Noting that differential and trace operators are exchangeable, we have
\begin{equation*}
\begin{aligned}
d g(\vect\tau) &= -\half\tr (\vect\Psi d \diag(e^{-\vect\tau}) \vect\Rho^{-1} \diag(e^{-\vect\tau}) + \vect\Psi \diag(e^{-\vect\tau}) \vect\Rho^{-1} d \diag(e^{-\vect\tau})) \\
&= \half \left[ \tr (\vect\Rho^{-1} \diag(e^{-\vect\tau}) \vect\Psi \diag(e^{-\vect\tau}) \diag(d \vect\tau) ) + \tr ( \vect\Psi \diag(e^{-\vect\tau}) \vect\Rho^{-1} \diag(e^{-\vect\tau}) \diag(d \vect\tau) ) \right] \\
&= \sum_{i=1}^D d \tau_i \sum_{j=1}^D \psi_{ij} e^{-\tau_j} \rho^{ji} e^{-\tau_i}
\end{aligned}
\end{equation*}
Thus
\begin{equation*}
\frac{d g(\vect\tau)}{d \vect\tau} = \diag (\vect\Psi \diag(e^{-\vect\tau}) \vect\Rho^{-1}) \diag(e^{-\vect\tau}) %= \diag ( \vect\Rho^{-1} \diag(e^{-\vect\tau}) \vect\Psi ) \diag(e^{-\vect\tau})
= \diag (\vect\Psi \diag(e^{-\vect\tau}) \vect\Rho^{-1}) \circ e^{-\vect\tau}
\end{equation*}
where $\diag$ acting on a vector forms a diagonal matrix while the action a matrix means extracting the diagonal vector. $\circ$ is the Hadamard product (a.k.a. Schur product), i.e. the entrywise product.

%--**--%
\noindent $\left[\frac{d}{d{\bf U}^*} \log p({\bf U}^*|\vect\tau)\right]$. \quad
Now consider the derivative of $\log p({\bf U}^*|\vect\tau)$ % = \sum_{i=1}^D (i-(\nu+D+1))\log u^*_{ii} -\half\tr (\vect\Psi \diag(e^{-\vect\tau}) ({\bf U}^*)^{-\mathsf T} ({\bf U}^*)^{-1} \diag(e^{-\vect\tau}))$
with respect to the matrix ${\bf U}^*$.
We have
\begin{equation*}
\frac{d}{d{\bf U}^*} \log p({\bf U}^*|\vect\tau) = \frac{\diag({\bf i}-(\nu+D+1))}{{\bf U}^*} + \frac{d g({\bf U}^*)}{d {\bf U}^*}
\end{equation*}
where $g({\bf U}^*)=-\half\tr (\vect\Psi \diag(e^{-\vect\tau}) ({\bf U}^*)^{-\mathsf T} ({\bf U}^*)^{-1} \diag(e^{-\vect\tau}))$, 
and $\frac{\diag({\bf i})}{{\bf U}^*}$ is a diagonal matrix formed by element-wise division.

Again by the exchangeability between differential and trace, we have
\begin{equation*}\small
\begin{aligned}
&d g({\bf U}^*)\\ &= -\half\tr (\vect\Psi \diag(e^{-\vect\tau}) d ({\bf U}^*)^{-\mathsf T} ({\bf U}^*)^{-1} \diag(e^{-\vect\tau}) + \vect\Psi \diag(e^{-\vect\tau}) ({\bf U}^*)^{-\mathsf T} d ({\bf U}^*)^{-1} \diag(e^{-\vect\tau})) \\
&= \half \left[ \tr (\vect\Psi \diag(e^{-\vect\tau}) ({\bf U}^*)^{-\mathsf T} d \tp{({\bf U}^*)} \vect\Rho^{-1} \diag(e^{-\vect\tau}) ) + \tr ( \vect\Psi \diag(e^{-\vect\tau}) \vect\Rho^{-1} d {\bf U}^* ({\bf U}^*)^{-1} \diag(e^{-\vect\tau}) ) \right] \\
&= \half \left[ \tr (\diag(e^{-\vect\tau}) \vect\Rho^{-1} d {\bf U}^* ({\bf U}^*)^{-1} \diag(e^{-\vect\tau}) \vect\Psi  ) + \tr ( \vect\Psi \diag(e^{-\vect\tau}) \vect\Rho^{-1} d {\bf U}^* ({\bf U}^*)^{-1} \diag(e^{-\vect\tau}) ) \right] \\
&= \tr ( ({\bf U}^*)^{-1} \diag(e^{-\vect\tau}) \vect\Psi \diag(e^{-\vect\tau}) \vect\Rho^{-1} d {\bf U}^*   )
\end{aligned}
\end{equation*}
Therefore we have
\begin{equation*}
\frac{d g({\bf U}^*)}{d \tp{({\bf U}^*)}} = \mathrm{tril}( ({\bf U}^*)^{-1} \diag(e^{-\vect\tau}) \vect\Psi \diag(e^{-\vect\tau}) \vect\Rho^{-1} ) \,,
\end{equation*}
that is,
\begin{equation*}
\frac{d g({\bf U}^*)}{d {\bf U}^*} = \mathrm{triu}( \vect\Rho^{-1} \diag(e^{-\vect\tau}) \vect\Psi \diag(e^{-\vect\tau}) ({\bf U}^*)^{-\mathsf T} ) \,,
\end{equation*}

%--**--%
\noindent $\left[\frac{d}{d\vect\tau} \log p(\vect\tau),\; \frac{d}{d{\bf L}} \log p({\bf L})\right]$. \quad
Lastly, the log-priors for \eqref{eq:nonconj-iwishart} and their gradients after transformation $\vect\tau:=\log(\vect\sigma)$ are calculated
\begin{equation*}
\begin{aligned}
\log p(\vect\tau) &= -\half \tp{\vect\tau} \vect\tau, & \frac{d}{d\vect\tau} \log p(\vect\tau) &= -\vect\tau \\
\log p({\bf l}_i) &= \log p({\bf l}_i^2) + \tp{\bm 1}_i\log|2{\bf l}_i|  = \tp{(2(\vect\alpha_i-1)+{\bm 1}_i)} \log |{\bf l}_i|,  
& \frac{d}{d{\bf l}_i} \log p({\bf l}_i) &= \frac{2(\vect\alpha_i-1)+{\bm 1}_i}{{\bf l}_i}
\end{aligned}
\end{equation*}
The bottom row can be written as
\begin{equation*}
\begin{aligned}
%\log p(\vect\tau) &= -\half \tp{\vect\tau} \vect\tau, & \frac{d}{d\vect\tau} \log p(\vect\tau) &= -\vect\tau \\
\log p({\bf L}) &= \sum_{i=1}^D \tp{(2\vect\alpha_i-1)} \log |{\bf l}_i|, &
\frac{d}{d\bf L} \log p({\bf L}) &= \frac{2\vect\alpha-1}{\bf L}
\end{aligned}
\end{equation*}
where $\frac{1}{\bf L}$ denotes a lower-triangular matrix with $l_{ij}^{-1}$ being its $(i,j)$ entry $(i\ge j)$.

%%---------------------------------------------------------%%
\section{More Numerical Results}

%%---------------------------------------------------------%%
\subsection{Flexibility of von Mises-Fisher Prior and Bingham Prior}\label{apx:morepriors}

\begin{figure}[t] %  figure placement: here, top, bottom, or page
   \centering
   \includegraphics[width=1\textwidth,height=.45\textwidth]{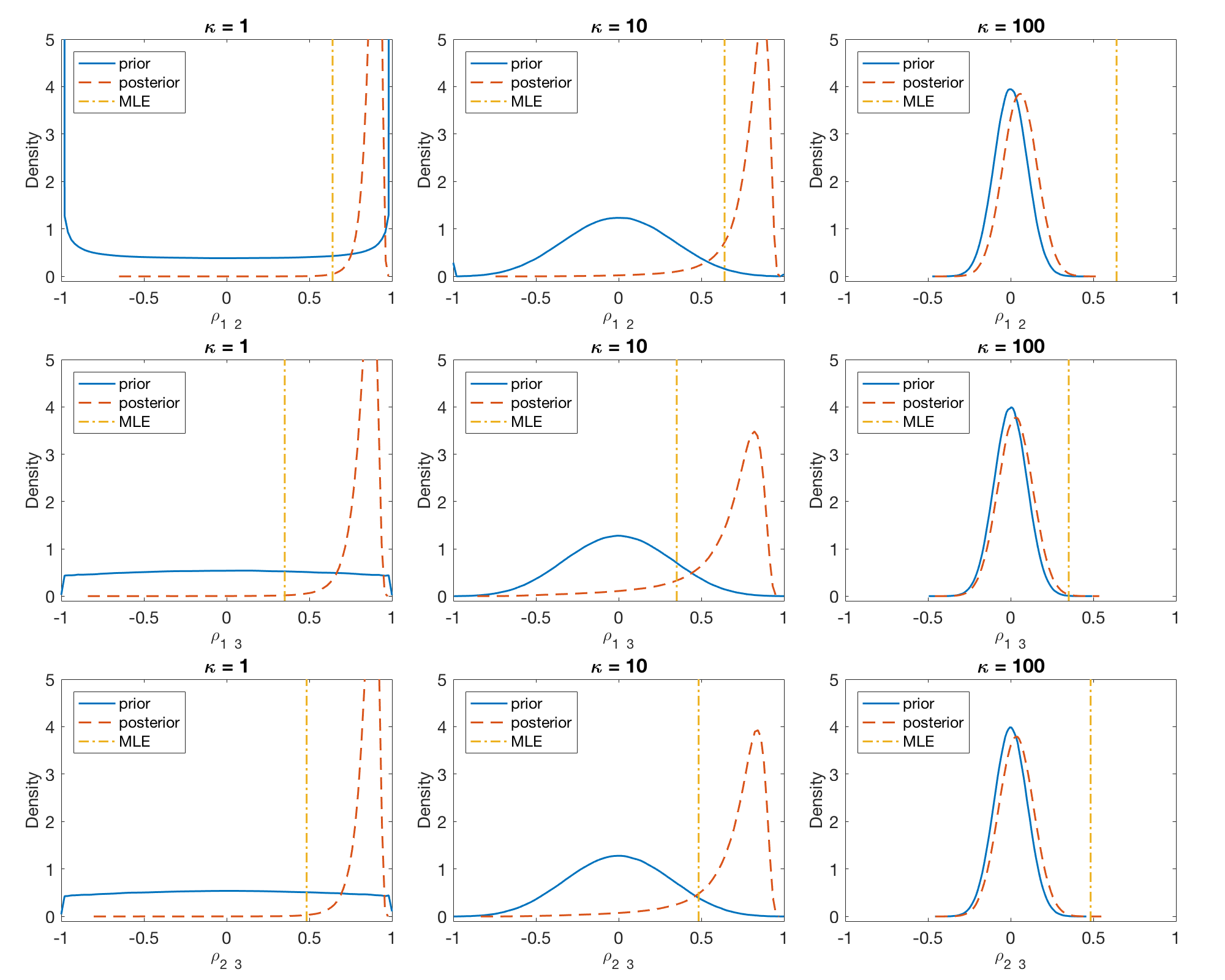} 
   \caption{Marginal posterior, prior (induced from von Mises-Fisher distribution) densities of correlations and MLEs with different settings for concentration parameter $\kappa$, estimated with $10^6$ samples.}
   \label{fig:post_corr_vmf}
\end{figure}

\begin{dfn}[Fisher-Bingham / Kent distribution]
The probability density function of the Kent distribution for the random vector ${\bf l}_i\in\mathcal S^{i-1}$ is given by
\begin{equation*}\label{eq:Kent_density}
p({\bf l}_i) \propto \exp\left\{\kappa \tp{\vect\gamma_1} {\bf l}_i + \sum_{k=2}^i \beta_k (\tp{\vect\gamma_k} {\bf l}_i)^2\right\}
\end{equation*}
where $\sum_{k=2}^i \beta_k=0$ and $0\leq 2|\beta_k|<\kappa$ and the vectors $\{\vect\gamma_k\}_{k=1}^i$ are orthonormal.
\end{dfn}
\begin{rk}
The parameters $\kappa$ and $\vect\gamma_1$ are called the \emph{concentration} and the \emph{mean direction} parameter, respectively.
The greater the value of $\kappa$, the higher the concentration of the distribution around the mean direction $\vect\gamma_1$.
The choice of $\vect\gamma_1$ could impact our priors when modeling correlations. %(see Section \ref{sec:prieff} for more details).
Parameters $\{\beta_k\}_{k=2}^i$ determine the ellipticity of the contours of equal probability.
The vectors $\{\vect\gamma_k\}_{k=2}^i$ determine the orientation of the equal probability contours on the sphere.
%Note, the normalization constant becomes very difficult to work with when $i>3$.
\end{rk}
\begin{rk}
If $\beta_k=0$ for $k=2,\cdots, i$, then this distribution reduces to \emph{von Mises-Fisher} distribution \citep{fisher53,mardia09}, denoted as $\vmf(\kappa, \vect\gamma_1)$.
If $\kappa=0$, then it defines an antipodally symmetric distribution, named \emph{Bingham} distribution \citep{bingham74}, denoted as $\bing({\bf A})$, % onstott80
with $\tp{\bf l}_i {\bf A} {\bf l}_i=\sum_{k=2}^i \beta_k (\tp{\vect\gamma_k} {\bf l}_i)^2$.
\end{rk}

As before, to induce smaller correlations, one can put higher prior probabilities for ${\bf l}_i$ on the poles of $\mathcal S^{i-1}$. For example, we might consider ${\bf l}_i \sim \vmf(\kappa, {\bf n}_i)$,
or %${\bf l}_i \sim \bing({\bf I}, \zeta\diag({\bf n}_i))$, 
${\bf l}_i \sim \bing(\zeta\diag({\bf n}_i))$, where ${\bf n}_i:=\tp{(0,\cdots,0,1)}$ is denoted as the north pole. %We will explore the effect of $\kappa$ and $\zeta$ on spherical priors for correlations in Section \ref{sec:prieff}.
Now let's consider the following von Mises-Fisher prior \citep{fisher87,fisher53,mardia09} for ${\bf l}_i$, the $i$-th row of the Cholesky factor ${\bf L}$ of correlation matrix $\vect\Rho$ in the structured model \eqref{eq:strt_corr}.
\begin{dfn}[Von Mises-Fisher distribution]
The probability density function of the von Mises-Fisher distribution for the random vector ${\bf l}_i\in\mathcal S^{i-1}$ is given by
\begin{equation*}\label{eq:vMF_density}
p({\bf l}_i) = C_i(\kappa) \exp(\kappa \tp{\vect\mu} {\bf l}_i)
\end{equation*}
where $\kappa\geq 0$, $\Vert \vect\mu\Vert=1$ and the normalization constant $C_i(\kappa)$ is equal to
\begin{equation*}
C_i(\kappa) = \frac{\kappa^{i/2-1}}{(2\pi)^{i/2} I_{i/2-1}(\kappa)}
\end{equation*}
where $I_v$ denotes the modified \emph{Bessel} function of the first kind at order $v$.
Denote ${\bf l}_i \sim \vmf(\kappa, \vect\mu)$.
\end{dfn}

Since we have no prior knowledge about the mean direction $\vect\mu$, we choose $\vect\mu={\bf n}_i=\tp{({\bm 0}_{i-1},1)}$ that favors the polar direction, i.e.
\begin{equation*}\label{eq:nonconj-iwishart2}
{\bf l}_i \sim \vmf(\kappa, {\bf n}_i), \qquad p({\bf l}_i) \propto \exp(\kappa l_{ii}), \quad i=2,\cdots, D
\end{equation*}
where we consider i) $\kappa=1$; ii) $\kappa=10$; iii) $\kappa=100$.
%The marginal densities of correlations induced by \eqref{eq:cholcorr} are plotted in Figure \ref{fig:post_corr_vmf}.
With the von Mises-Fisher prior, we have
\begin{equation*}
\log p({\bf L}) = \sum_{i=1}^D \kappa l_{ii} = \kappa \tr({\bf L}), \quad \frac{d}{d\bf L} \log p({\bf L}) = \kappa {\bf I}
\end{equation*}
We repeat the experiment in Section \ref{sec:prieff} with the von Mises-Fisher prior for ${\bf l}_i$.
The posteriors, priors and maximal likelihood estimates (MLE) of correlations with different $\kappa$'s are plotted in Figure \ref{fig:post_corr_vmf} respectively.
With larger concentration parameter $\kappa$, the posterior is pulled more towards $0$.

\begin{figure}[t] %  figure placement: here, top, bottom, or page
   \centering
   \includegraphics[width=1\textwidth,height=.45\textwidth]{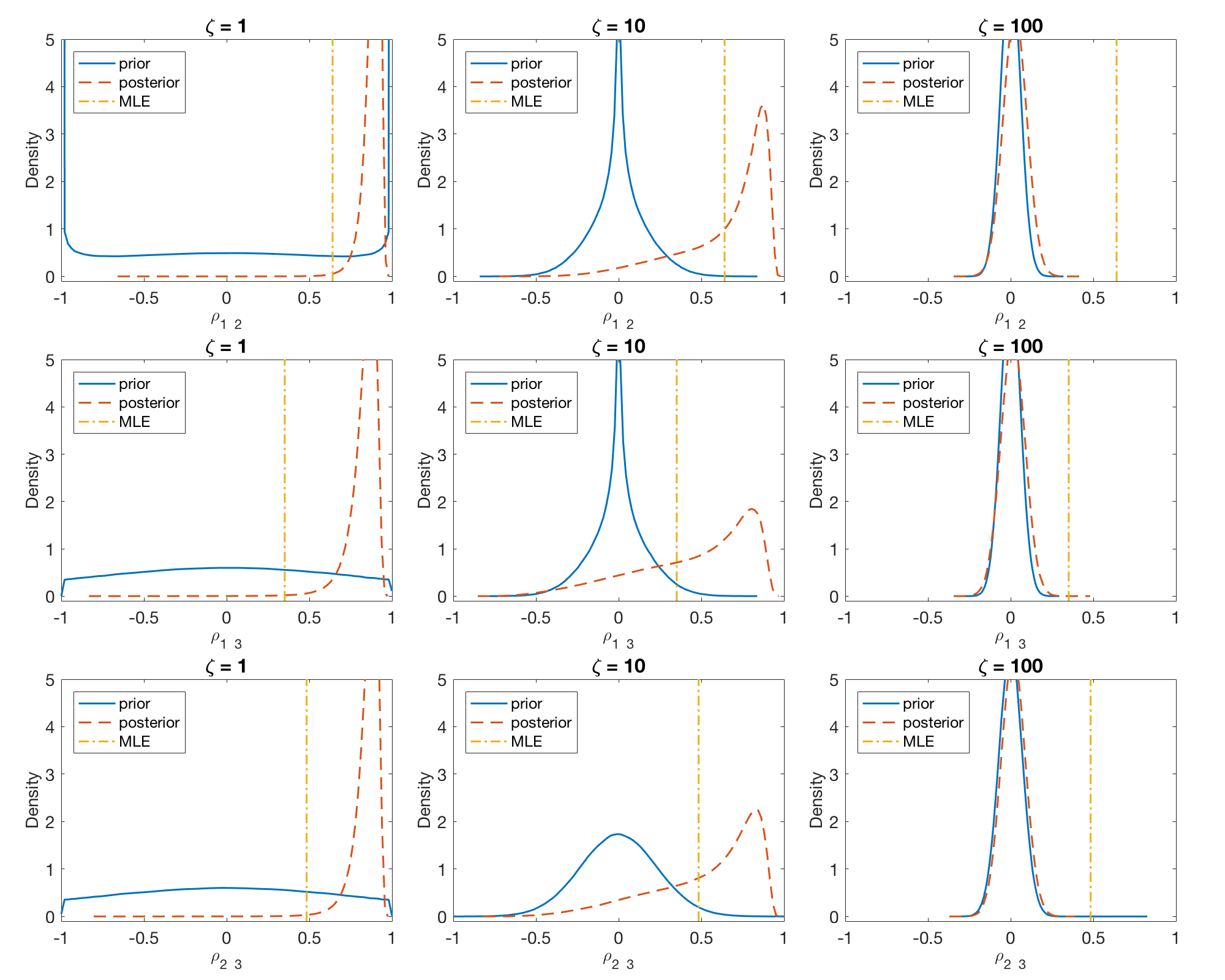} 
   \caption{Marginal posterior, prior (induced from Bingham distribution) densities of correlations and MLEs with different settings for concentration parameter $\zeta$, estimated with $10^6$ samples.}
   \label{fig:post_corr_bing}
\end{figure}

Finally, we consider the following Bingham prior \citep{bingham74,onstott80} for ${\bf l}_i$ in the structured model \eqref{eq:strt_corr}.
\begin{dfn}[Bingham distribution]
The probability density function of the Bingham distribution for the random vector ${\bf l}_i\in\mathcal S^{i-1}$ is given by
\begin{equation*}\label{eq:Bingham_density}
p({\bf l}_i) = _1\!\!F_1(\half;\,\frac{n}{2};\,{\bf Z})^{-1} \exp (\tp{\bf l}_i {\bf M} {\bf Z} \tp{\bf M} {\bf l}_i)
\end{equation*}
where ${\bf M}$ is an orthogonal orientation matrix, ${\bf Z}$ is a diagonal concentration matrix, and $_1F_1(\cdot;\,\cdot;\,\cdot)$ is a \emph{confluent hypergeometric function} of matrix argument.
Denote ${\bf l}_i \sim \bing({\bf M}, {\bf Z})$.
\end{dfn}

Note, according to \cite{bingham74}, this distribution is defined for ${\bf Z}$ up to an arbitrary scalar matrix $\zeta_0{\bf I}$.
Therefore, we consider ${\bf M}={\bf I}$ and ${\bf Z}=\zeta\diag({\bf n}_i)$ that favors the polar direction, i.e.
\begin{equation*}\label{eq:nonconj-iwishart3}
{\bf l}_i \sim \bing({\bf I}, \zeta\diag({\bf n}_i)), \qquad p({\bf l}_i) \propto \exp(\zeta l_{ii}^2), \quad i=2,\cdots, D
\end{equation*}
where we consider i) $\zeta=1$; ii) $\zeta=10$; iii) $\zeta=100$.
%The marginal densities of correlations induced by \eqref{eq:cholcorr} are plotted in Figure \ref{fig:post_corr_bing}.
The log-prior and its gradient are calculated as follows
\begin{equation*}
\log p({\bf L}) = \sum_{i=1}^D \zeta l_{ii}^2 = \zeta \Vert\diag({\bf L})\Vert^2, \quad \frac{d}{d\bf L} \log p({\bf L}) = 2\zeta \diag({\bf L})
\end{equation*}
We repeat the above experiment with the Bingham prior for ${\bf l}_i$.
The posteriors, priors and maximal likelihood estimates (MLE) of correlations with different $\zeta$'s are plotted in Figure \ref{fig:post_corr_bing} respectively.
With larger concentration parameter $\zeta$, the posteriors are pulled more towards the induced priors and concentrate on 0.

%%---------------------------------------------------------%%

\subsection{More Comparison to Latent Factor Process Model}\label{apx:more_latentfactor}
The example of simulated periodic process in Section \ref{sec:flexibility} is consider for $D=2$ for simplicity and convenience of visualization.
Here we consider higher dimension $D=10$. The purpose here is not to show the scalability, but rather to investigate the robustness of our dynamic model \eqref{eq:strt_dyn_model} in terms of full flexibility.

\begin{figure}[t] %  figure placement: here, top, bottom, or page
   \centering
   \includegraphics[width=1\textwidth,height=.3\textwidth]{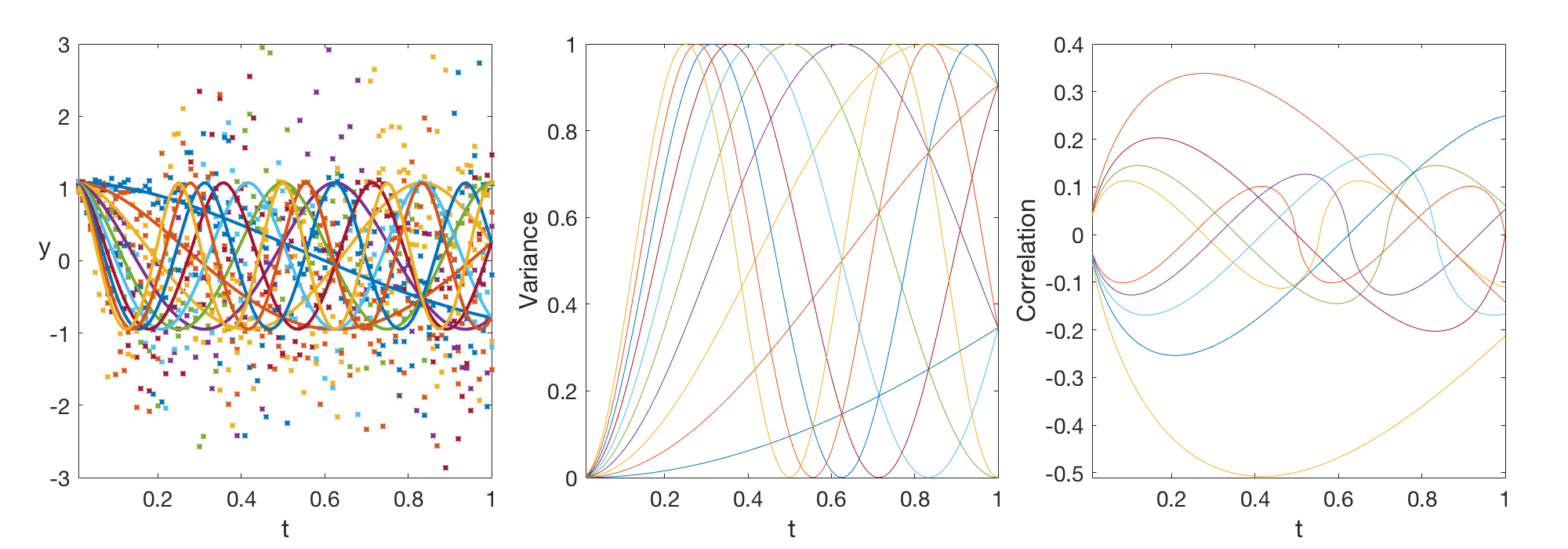} 
   \caption{Simulated data $y$ over the underlying mean functions $\mu_t$ (left), the variance functions $\Sigma_t$, and the correlation functions $\Rho_t$ (right) of 10-dimension periodic processes.}
   \label{fig:periodic_process_D10}
\end{figure}

We generate $M=20$ trials of data over $N=100$ evenly spaced points over $[0,1]$.
The true mean, variance and correlation functions are modified from the example \eqref{eq:periodic} using the Clausen functions \citep{clausen1832}.
Seen from Figure \ref{fig:periodic_process_D10},
they behave more intricately with higher heterogeneity among those processes. This could impose further challenge for latent factor based models like \eqref{eq:lpf_bncr} compared to $D=2$.
We repeat the experiments in Section \ref{sec:flexibility} and compare our dynamic model \eqref{eq:strt_dyn_model} with the latent factor process model \eqref{eq:lpf_bncr} by \cite{fox15}.
To aid the visualization, we subtract the estimated process from their true values and plot the error functions in Figure \ref{fig:periodic_constrast_more}.
Even if we have tried our best to tune the parameters, e.g. $L$, the number of basis functions, and $k$, the size of latent factors, the latent factor process model \citep{fox15} is outperformed by our flexible dynamic model \eqref{eq:strt_dyn_model} in reducing estimation errors.

\begin{figure}[t] %  figure placement: here, top, bottom, or page
   \centering
   \includegraphics[width=1\textwidth,height=.225\textwidth]{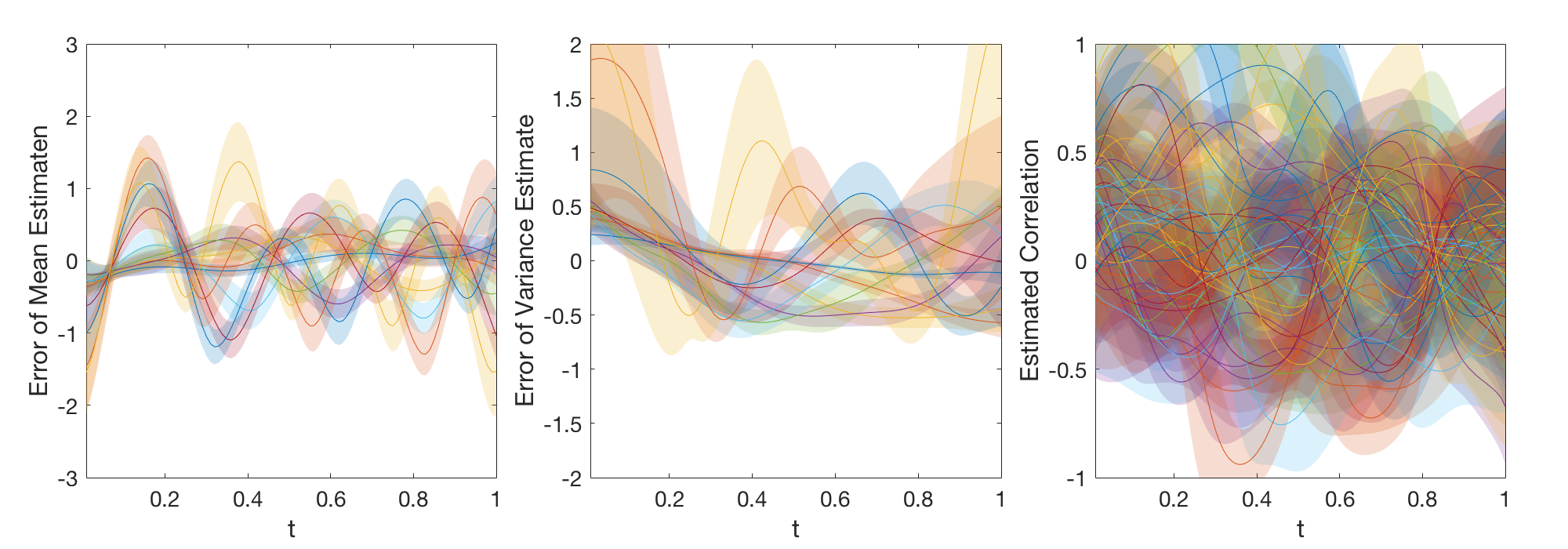}
   \includegraphics[width=1\textwidth,height=.225\textwidth]{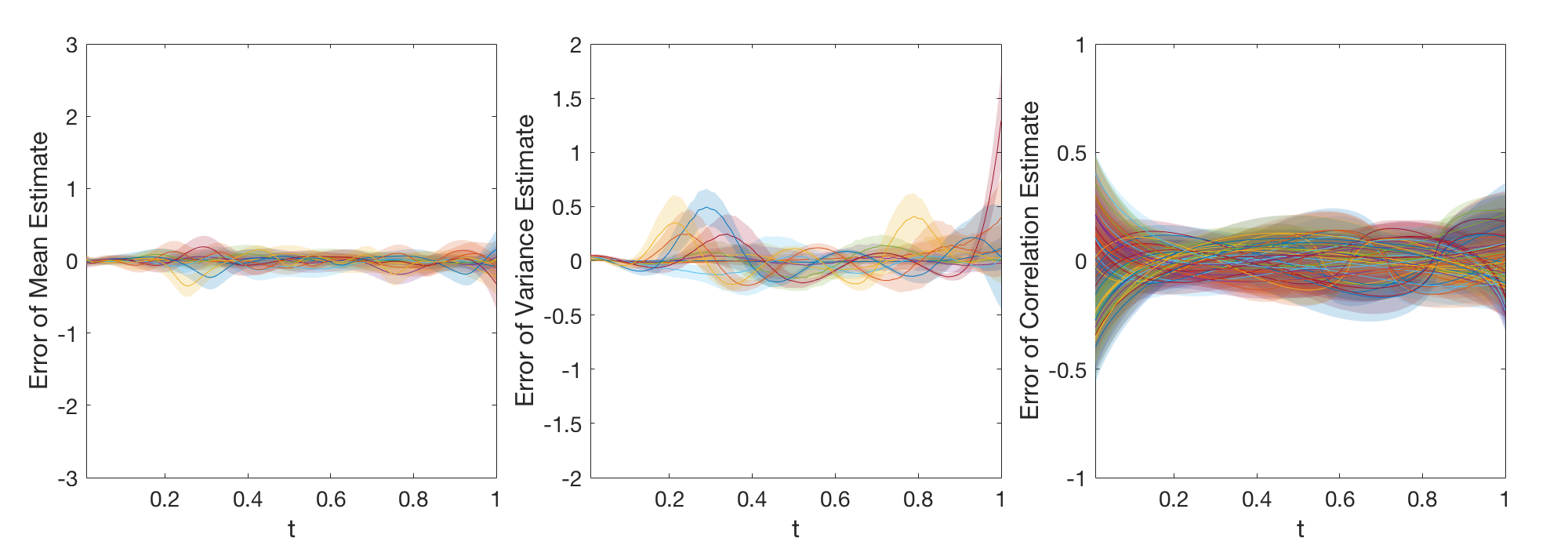}
   \caption{Estimated error functions of the underlying mean $\mu_t$ (left column), variance $\sigma_t$ (middle column) and correlation $\rho_t$ (right column) of 10-dimensional periodic processes,
   		using latent factor process model (upper row) and our flexible model (lower row), based on $M=20$ trials of data over $N=100$ evenly spaced points.
   		Solid lines are estimated errors and shaded regions are $95\%$ credible bands.}
   \label{fig:periodic_constrast_more}
\end{figure}

%%---------------------------------------------------------%%

\begin{figure}[htbp] %  figure placement: here, top, bottom, or page
   \centering
   \includegraphics[width=1\textwidth,height=.45\textwidth]{} 
   \caption{Locations of recorded LFP signals in CA1 subregion of the rat's hippocampus.}
   \label{fig:LFP_tetrode_locations}
\end{figure}

\subsection{More Results on the Analysis of LFP data}\label{apx:moreLFP}
In Section \ref{sec:lfp}, we studied the LFP data collected from the hippocampus of rats performing a complex sequence memory task. 
Figure \ref{fig:LFP_tetrode_locations} shows 12 locations from CA1 subregion of the hippocampus of the rat where LFP signals are recorded.
Figure \ref{fig:LFP_theta} shows the theta-filtered traces (4-12Hz; left panel) and the estimated correlation processes under different experiment conditions (InSeq vs OutSeq; right panel).
Here we observe a similar dynamic pattern of correlation matrices under two conditions that diverge after 500ms, indicating the neural activity associated with the cognitive process of identifying whether events occurred in their expected order.

\begin{figure}[htbp]

 %  figure placement: here, top, bottom, or page
   \centering
   \includegraphics[width=.49\textwidth,height=.4\textwidth]{LFP_theta_data_Lintl750_Nintl300}
   \includegraphics[width=.49\textwidth,height=.4\textwidth]{LFP_theta_dlt_est_dim10}
   \caption{Results of LFP theta signals: data (left), estimation of correlations (right).}
   \label{fig:LFP_theta}
\end{figure}

We also did a study of the correlation evolution on the full 12 channels and revealed the block structure of those channels and the same changing pattern under different experimental conditions discovered with the chosen 4 channels in Section \ref{sec:lfp}.
This video demonstrates the result of 12 channels \url{https://www.youtube.com/watch?v=NMUUic0IDsM}.

%\end{comment}

\end{document}